\newtheorem{definition}{Definition}
\newtheorem{remark}{Remark}
\newtheorem{theorem}{Theorem}
\newtheorem{lemma}{Lemma} 
\newtheorem{corollary}{Corollary}
\newtheorem{design}{Design}
\begin{document}
%
\title{Complementary Lattice Arrays \\ for Coded Aperture Imaging}

\author{Jie~Ding,~\IEEEmembership{Student Member,~IEEE,}
	Mohammad~Noshad,~\IEEEmembership{Member,~IEEE,}
        and~Vahid~Tarokh,~\IEEEmembership{Fellow,~IEEE}
       \thanks{J. Ding, M. Noshad, and V. Tarokh are with the School of Engineering and Applied Sciences,
		Harvard University, Cambridge, MA 02138, USA. E-mail: jieding@g.harvard.edu , mnoshad@seas.harvard.edu , vahid@seas.harvard.edu .}
}

%

\maketitle

\begin{abstract}
In this work, we consider 
complementary lattice arrays
in order to enable a broader range of designs for coded aperture imaging systems.
We provide a general framework and methods that
generate richer and more flexible designs than existing ones.
Besides this, we review and interpret the state-of-the-art uniformly redundant arrays (URA) designs,
broaden the related concepts, and further propose some new design methods.
\end{abstract}

\begin{IEEEkeywords}
Coded aperture imaging,
Golay complementary sequences,
lattices,
uniformly redundant arrays.
\end{IEEEkeywords}

%
\IEEEpeerreviewmaketitle

\section{Introduction} \label{introSection}

\setcounter{equation}{0}
\renewcommand{\theequation}{\arabic{equation}}

\IEEEPARstart{I}{maging} using high-energy radiation with spectrum ranging from $X$-ray to $\gamma$-ray has found many applications including high energy astronomy
\cite{cook1984gamma,CAI_astronomy1987} and medical imaging \cite{Jap1978,Jap1981tomogram,chen2003tomographic}. In these wavelengths, imaging using lenses is not possible since the rays cannot be refracted or reflected, and hence cannot be focused. An alternative technique to do imaging in this spectrum is to use pinhole cameras, in which the lenses are replaced with a tiny pinhole. The problem in these cameras is that the pinholes passes a low intensity light, while for imaging purposes, a much stronger light is needed. Increasing the size of the pinhole cannot solve this problem as it increases the intensity at the expense of decreased resolution of the image. Coded aperture imaging (CAI) is introduced to address this issue by increasing the number of the pinholes. A coded aperture is a grating or grid that casts a coded image on a plane of detectors by blocking and unblocking the light in a known pattern, and produces a higher signal to noise ratio (SNR) of the image while maintaining a high angular resolution \cite{simpson1980coded,fenimoreCAI_URA1978}. The coded image is then correlated with a decoding array in order to reconstruct the original image. 
The deployment of pinholes 
and the decoding array are usually jointly designed such that perfect or near-perfect reconstruction is possible. 
Figure \ref{diagram} gives the schematic diagram of a CAI system.

\begin{figure}[h]
  \centering
  \includegraphics[width=0.5\textwidth]{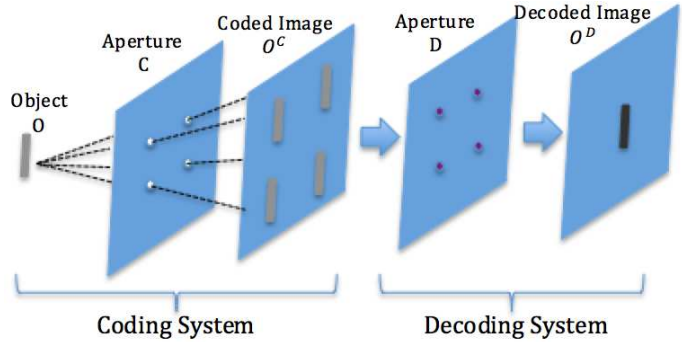}
  \caption{An illustration of the CAI system
  }
  \label{diagram}
\end{figure}

Since the coded aperture is usually defined based on an integer lattice, it can be modeled as a two dimensional array. 
In general, we define the $n$-dimensional array $C$ of complex-valued entries of size $L_1 \times \cdots \times L_n$:
\begin{align*}
C[c_1,\cdots, c_n]= 0, \quad \forall \,  c_1,\cdots, c_n \in \mathbb{Z}, (c_1,\cdots, c_n) \notin [0,L_1-1] \times \cdots [0,L_n-1].
\end{align*}
For simplicity, $C[c_1,\cdots, c_n]$ is also denoted by $C[\bm a]$, where $\bm a = [c_1,\cdots, c_n]^{\textrm{T}} \in \mathbb{R}^n$. 
The decoding array $D$ can be similarly defined. 
The set from which the elements of the aperture arrays take values from is referred to as an ``alphabet''.  In coded aperture imaging, a physically realizable coding aperture usually consists of $0/1$ binary alphabet (representing closed/open pinholes).
If multiple coded images are obtained with different aperture masks and the resulting digital projection images are suitably combined, a complex-valued array $C$ becomes applicable \cite{ohyama1983advanced,busboom1997coded}.  
For example, a $\{ \pm 1\}$ coding aperture could be obtained computationally from two masks with openings at $C$'s $1, -1$ locations.
For an $N$-phase alphabet, it is calculated that the number of $\{0,1\}$ masks needed is $(3N-1)/2$ for odd $N$, and $N $ for even N. The calculation follows from the observations that a root of unity in the form of $x+iy, xy \neq 0 \,(i^2=-1)$ requires two masks, a pair in the form of $(x+iy,x-iy), xy \neq 0$ require only three masks (a mask corresponding to $x$ is shared), a pair in the form of $(x+iy, -x+iy), xy \neq 0$ require only three masks (a mask corresponding to $iy$ is shared),  and $1,-1,i,-i$ each requires one mask.
Moreover, the development of hardware technology, e.g. spatial light modulators, 
may lead to realizable complex-valued physical masks. 
If both the coding and decoding systems use such masks, an analog reconstruction could be obtained.
Due to the above reasons, we assume that the elements of an aperture could be unimodular complex numbers.  

For a planar object that is projected onto a gamma camera through the coding aperture, it can be shown that the object is perfectly decoded if $C * D$ is a multiple $m$ of the discrete delta function $\delta[\bm r]$, 
where $*$ denotes the convolution and $\delta[\bm r]$ corresponds to an array with one centered at the origin and zero elsewhere \cite{fenimoreCAI_URA1978,Jap1978}.
The value of $m$ is also called ``the SNR gain''.
Clearly,  larger $m$ is better, and $m \leq \omega$, 
where $\omega  =L_1  \cdots  L_n$ (the number of pinholes).
Designing $C$ and $D$ is the key part of designing CAI.

It is useful to observe that the designs of apertures are intimately related to the concept of ``autocorrelation'',
and there are two typical types of ``autocorrelation'' for an array. 
Though the two types of autocorrelations are different, they can be both applied to the design of apertures through different approaches, as will be pointed out in this paper.
One is ``aperiodic autocorrelation''.
The aperiodic autocorrelation function $A^C(\cdot)$ is given by
\begin{align*}
A^C(v_1,\cdots, v_n) = \sum_{c_1,\cdots, c_n \in \mathbb{Z}} C[c_1,\cdots, c_n] \overline{C[c_1+v_1,\cdots, c_n+v_n]}, \quad v_1,\cdots, v_n \in \mathbb{Z},
\end{align*}
where $\bar{c}$ is the complex conjugate of $c$.
%
The other is ``periodic autocorrelation'', which is defined later in the paper. 
If we choose $D=C^{-}$, where $C^{-}[c_1,\cdots, c_n]=\overline{C[-c_1,\cdots, -c_n]}$, then $C*D $ gives the autocorrelation of $C$.

Non-redundant arrays (NRA) have been introduced for arranging the pinholes in CAI, since they have the property that the aperiodic autocorrelations consist of a central spike with the side-lobes equal to one within certain lag (range of the argument $v_1,\cdots,v_n$) and either zero or unity beyond the lag \cite{golay1971}. Pseudo-noise arrays (PNA) \cite{key2} are another alternative, whose periodic autocorrelations consist of a central spike with $-1$ side-lobes, which lead to designs of a pair of arrays such that their convolution is a multiple of the discrete delta function \cite{key1}. Twin primes, quadratic residues, and m-sequences are examples of PNA designs. NRA and PNA based designs are both referred to as uniformly redundant arrays (URA) \cite{fenimoreCAI_URA1978,key1,fenimore1978_performance}. However, the size of the URA structures is restricted and cannot be adapted to any particular detector \cite{baumert1971cyclic,CAI_astronomy1987}. Besides this, the SNR gain for URA is limited to $\omega/2$\cite{fenimoreCAI_URA1978,HURA1985hexagonal,MURA1989_linear_square_hex,byard1992optimised}.
Other designs that have also been used in CAI are geometric design \cite{gourlay1983geometric} and pseudo-noise product design \cite{gottesman1986pnp}, but they are also available only for a limited number of sizes---the former design is for square arrays, and the latter one requires that pseudo-noise sequences exist for each dimension.

Though it is generally hard to find a single pair of coding and decoding arrays, it may be easier to find several pairs which act perfectly while combined together.
Based on this idea, we look for a broader range of designs of the coding arrays in this paper.
We show that the aperture can then be customized to any shape on any lattice, meeting various demands in practical situations.
Our work is inspired by
Golay complementary arrays, which are defined as a pair of arrays whose aperiodic autocorrelations sum to zero in all out-of-phase positions. They have been used for pinhole arrangement in order to obtain the maximum achievable SNR gain, while eliminating the side lobes of the decoded image \cite{Jap1978}. 
We note that there is a natural mapping between a pair of Golay complementary arrays, say $C_1$ and $C_2$, and a CAI system consisting of two parallel coding/decoding apertures.
It is illustrated in Figure \ref{channel2}, where $D_1=C_1^{-},D_2=C_2^{-}$.
\begin{figure}[h]
  \centering
  \includegraphics[width=0.4\textwidth]{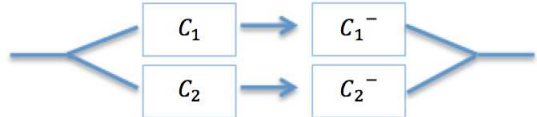}
  \caption{A CAI system with two parallel channels, with coding apertures
  $C_1$, $C_2$, and decoding apertures $C_1^{-}$, $C_2^{-}$}
  \label{channel2}
\end{figure}
When an object goes through the system,
the side lobes are completely canceled out by the addition of the two decoded images.

Though an aperture is usually defined based on an integer lattice, we consider the design problem in the context of a general lattice, naturally arising from practical implementations.
For example, usually the distance between two pinholes should be no less than a given threshold due to physical constraints.
 It has been shown by L. F. Toth \cite{circlePacking1940} that the lattice arrangement of circles
with the highest density in the two-dimensional Euclidean space is the hexagonal packing arrangement, in which the centers of the circles are arranged in a hexagonal lattice.
Thus, given the minimal distance allowed among pinholes, the most
compact arrangement  (thus with the largest possible SNR gain) is to arrange them on a hexagonal lattice.



The outline of this paper is as follows. 
In Section \ref{secTheory} we briefly present related work on Golay complementary arrays (based on aperiodic autocorrelation), and then
propose ``complementary lattice arrays'' and other related new concepts such as the ``complementary array banks'', including
Golay complementary array pairs as a special case.
This general framework naturally leads to the new concept of ``multi-channel CAI system'' which extends
the classical CAI system.
We provide the concept, theory, and the design framework.
Due to the reasons mentioned before, 
our examples are based on two-dimensional hexagonal arrays and unimodular alphabets (which consist of unimodular complex numbers).
Nevertheless, the methodology given in this work could be further generalized. 
In Section \ref{secURA} we review the URA literature that is mostly based on periodic autocorrelations.
We further generalize the related concepts 
in Section \ref{newPeriodic} in a similar fashion. 
This leads to a new class of aperture designs, which have the desirable imaging characteristics of URAs, yet exist for sizes for which URAs do not exist.
In Section \ref{secSim}, we provide computer simulations demonstrating the performance of our schemes.
\section{Concept, Theory and Design of Complementary Lattice Array}  \label{secTheory}

\subsection{Golay Complementary Arrays}  \label{sec2_Golay}

In this section we briefly introduce related works on Golay complementary array pairs.
Golay \cite{golayComplementarOrigin1951} first introduced Golay complementary sequence pairs in 1951 to address the optical problem of multislit spectrometry .
Later, they were also used for many other applications, including horizontal modulation systems in communication \cite{golayComplementary1961}, power control for multi-carrier wireless transmission \cite{davis1999peak}, Ising spin systems \cite{Golay_Spin_03}, channel-measurement \cite{Golay_Channel_06, Golay_Channel_01}, and orthogonal frequency division multiplexing \cite{Golay_OFDM_97}. 

A concept that is related to complementary array pairs is the Barker array.
It is a $\{\pm 1\}$ binary array $C$ such that
\begin{align}\label{collectiveSmall}
| A^C(v_1,\cdots, v_n) | \leq 1, \quad \forall \,  v_1,\cdots, v_n \in \mathbb{Z}, (v_1,\cdots, v_n) \neq (0, \cdots, 0).
\end{align}
Barker arrays are scarce.
For $n=1$, the known valid lengths are only $2$, $3$, $4$, $5$, $7$, $11$, and $13$.
%
For $n=2$, it has been proved that there is no Barker array for $L_1>1,L_2>1$ except
$ \left[
	\begin{array}{cc}
		1 & 1 \\
		1 & -1	
	\end{array}
  \right]
$. 
Another related concept is the NRA, which
 also satisfies the condition (\ref{collectiveSmall}). Its only difference compared with
Barker array is that it is $\{0,1\}$-binary.

Golay complementary array pairs address
the scarcity of Barker arrays and NRAs.
The basic idea of Golay complementary array pairs is to use the nonzero part of one autocorrelation to
``compensate'' the nonzero counterpart of the other \cite{golayComplementarOrigin1951}.
Specifically,
a pair of arrays $C_1$ and $C_2$ of size $L_1 \times \cdots \times L_n$ is a Golay complementary array pair, if the sum of their
aperiodic autocorrelations is a multiple of the discrete delta function, i.e.
\begin{align*}
A^{C_1}(v_1, \cdots, v_n) + A^{C_2}(v_1, \cdots, v_n) =0, \quad \, \forall \,  v_1, \cdots, v_n \in \mathbb{Z}, (v_1, \cdots, v_n) \neq (0, \cdots,0).
\end{align*}
%
The initial study of Golay complementary sequence pairs ($n=1$) was for the binary case.
Binary Golay complementary sequence pairs are known for lengths 2, 10 \cite{golayComplementary1961}, and 26 \cite{golay1962note}.
It has been shown that infinitely many lengths could be synthesized using the existing solutions
\cite{turyn1974hadamard}. Specifically, binary Golay complementary sequence pairs with length
 $2^{k_1}10^{k_2} 26^{k_3}$ exist, where $k_1,k_2,k_3$ are any nonnegative integers. Besides, no sequences of other lengths have been found.
Later on, larger alphabets were considered,  including
$2^n$-phase \cite{craigen2002complex}, 
$N$-phase for even $N$ \cite{paterson2000generalized}, 
the ternary case $\mathfrak{A}=\{-1,0,1\}$ \cite{gavish1994ternary,craigen2001theory,craigen2006further},
and the unimodular case \cite{budivsin1990new}.
Here, an alphabet $\mathfrak{A}$ is called $N$-phase, if it consists of  $N$th roots of unity, i.e. $\mathfrak{A}=\{\zeta: \zeta^N=1\}$; it is unimodular if $\mathfrak{A}=\{\zeta: |\zeta|=1\}$.

In 1978, Ohyama et al. \cite{Jap1978} constructed binary Golay complementary array pairs ($n=2$) of size $2^{k_1} \times 2^{k_2}$.
The size is then generalized to $2^{k_1}10^{k_2} 26^{k_3} \times 2^{k_4}10^{k_5} 26^{k_6}$,
where $k_j, j=1,\cdots, 6 $ are any nonnegative integers \cite{luke1985sets,dymond1992barker}. 

We look for broader concepts and designs than complementary array pairs.
The examples provided in this paper are for the two-dimensional case, 
but they can be easily generalized to higher dimensions.
We start with the definitions in the following section.

\subsection{Definitions and Notations}

\begin{definition} \label{latticeDef}
A \textbf{lattice} in  $\mathbb{R}^n$ is a subgroup of $\mathbb{R}^n$ which is generated from
a basis by forming all linear combinations with integer coefficients. In other words, a lattice $\mathfrak{L}$ in
$\mathbb{R}^n$ has the form
$$\mathfrak{L} = \left\{ \sum_{i=1}^n c_i \bm e_i \mid c_i \in \mathbb{Z} \right\},$$
where $\{\bm e_i\}_{i = 1}^n$ forms a basis of $\mathbb{R}^n$.
\end{definition}

For example, the integer lattice $\mathbb{Z}^2$ is generated from the basis
 $\bm e_1=(1,0),  \bm e_2=(0,1)$.
The hexagonal (honeycomb) lattice $\mathbb{A}_2$ is generated  from the basis
$\bm e_1=(1,0),\bm e_2=(-\frac{1}{2}, \frac{\sqrt{3}}{2})$.

A classical array is based on an integer lattice. We now give the definition of an array that
is based on a general lattice.

\begin{definition} \label{defineArray}
Let $\mathfrak{L}$ be a lattice.
A \textbf{lattice array} $C^{\mathfrak{L},\Omega,\mathfrak{A}}$ defined over alphabet $\mathfrak{A}$ and with support $\Omega$ is a mapping $C[\cdot]: \mathfrak{L} \rightarrow \mathfrak{A}$,
such that $C[\bm a] =0$ for all $\bm a \notin \Omega$ and $C[\bm a] \in \mathfrak{A}$ for all $\bm a \in \Omega$.
The number of the elements of
$\Omega$ (array size) is  denoted by $|\Omega|$. 
We denote $C^{\mathfrak{L},\Omega,\mathfrak{A}}$ by $C$ when there is no ambiguity.
In other words,
\begin{align*}
C[\bm a]= 0,  \quad \forall \,  \bm a \in \mathfrak{L} \setminus \Omega,
\end{align*}
where $C[\bm a]$ is the entry at location $\bm a$.

The following terms are made to simplify the notations.
\begin{itemize}
\item 
Define $C^{\mathfrak{L},\Omega+\bm t,\mathfrak{A}} \{\bm t\}$ as the \textbf{shifted copy} of
$C^{\mathfrak{L},\Omega,\mathfrak{A}} $ by $\bm t$ (for $\bm t \in \mathfrak{L}$), if
\begin{align*}
C^{\mathfrak{L},\Omega+\bm t,\mathfrak{A}} \{\bm t\} [\bm a] = C^{\mathfrak{L},\Omega,\mathfrak{A}}[\bm a-\bm t],
\forall \,  \bm a \in \mathfrak{L}.
\end{align*}
For brevity, $C^{\mathfrak{L},\Omega+\bm t,\mathfrak{A}} \{\bm t\} $ is simplified as $C\{\bm t\}$.

\item 
Assume that two arrays $C_1^{\mathfrak{L},\Omega_1,\mathfrak{A}_1}$ and $C_2^{\mathfrak{L},\Omega_2,\mathfrak{A}_2}$
are based on the same lattice $\mathfrak{L}$, but not necessarily on the same area.
The \textbf{addition} of $C_1$ and $C_2$, $C = C_1 + C_2$, is an array whose entries are the
addition of corresponding entries in $C_1$ and $C_2$, i.e.
\begin{align*}
\Omega &= \Omega_1 \cup \Omega_2,   \quad
C[\bm a] = C_1[\bm a] + C_2[\bm a], \quad \forall \,  \bm a \in \Omega. 
\end{align*}

\item 
A set of arrays $\{ C_m^{\mathfrak{L},\Omega_m,\mathfrak{A}_m} \}_{m=1}^M$ are \textbf{non-overlapping} if
$$
\Omega_{m_1} \cap \Omega_{m_2} = \emptyset, \quad \forall \,  m_1,m_2 \in {1,2,\cdots, M}, m_1 \neq m_2.
$$


\end{itemize}

\end{definition}

\begin{definition} \label{defineAutocorrelation}
Assume that the lattice $\mathfrak{L}$ is generated from $\{\bm e_i\}_{i = 1}^n$.
The \textbf{aperiodic autocorrelation} function is
\begin{align*}
A^C(v_1,\cdots, v_n) =\sum_{\bm a \in \Omega} C[\bm a]  \overline{ C[\bm a + v_1 \bm e_1 + \cdots + v_n \bm e_n]},
\quad   v_1,\cdots, v_n \in \mathbb{Z}.
\end{align*}
The aperiodic crosscorrelation function $A^{C_1 C_2} (\cdot)$ of two arrays $C_1$ and $C_2$ is 
\begin{align*}
A^{C_1 C_2}(v_1,\cdots, v_n) =\sum_{\bm a \in \Omega} C_1[\bm a] \overline{C_2[\bm a + v_1 \bm e_1 + \cdots + v_n \bm e_n]}, \quad v_1,\cdots, v_n \in \mathbb{Z}.
\end{align*}
Sometimes, $A^C(\cdot)$ and $A^{C_1 C_2}(\cdot)$ are  respectively denoted by $C*C^{-}$ and $C_1 * C_2^{-}$.
\end{definition}

\begin{definition} \label{bankDef}
A \textbf{complementary array bank} consists of pairs
$\{(C_m^{\mathfrak{L},\Omega_{1m},\mathfrak{A}}, D_m^{\mathfrak{L},\Omega_{2m},\mathfrak{A}})\}_{m=1}^M$
 such that the sum of the crosscorrelations is a multiple of the discrete delta function:
$$
\sum_{m=1}^M C_m * D_m^{-} = \sum_{m=1}^M A^{C_m  D_m}(\cdot) = \omega \delta[\bm r],
$$
where $\omega$ is a constant.
$M$ is called the order, or number of channels.
\end{definition}

\begin{remark}
There is a natural mapping between a complementary array bank, say  $\{(C_m, D_m)\}_{m=1}^M$,
and a CAI system consisting of $M$ parallel channels, each of which consists of
a pair of coding and decoding apertures (Fig.  \ref{bankDiagram}).
When a source image comes, it is coded and decoded through $M$ channels simultaneously, and is then
retrieved by simply adding the decoded images from all the channels.
The multi-channel CAI system proposed here provides a generalized solution to CAI design,
by including a classical CAI  system as a special case.
In the remaining part of Section \ref{secTheory}, we mainly study the complementary array sets, which
may provide insight into the theory and design of complementary array banks in general.
\end{remark}
\begin{figure}[h]
  \centering
  \includegraphics[width=0.4\textwidth]{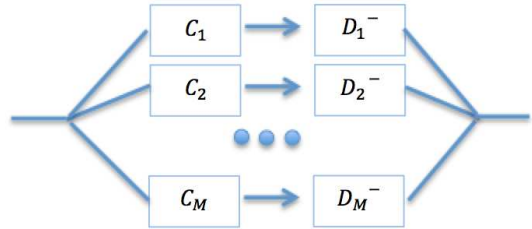}
  \caption{A CAI system with $M$ parallel channels, with coding apertures
  $C_1, \cdots, C_M$, and decoding apertures $D_1^{-}, \cdots, D_M^{-}$
  }
  \label{bankDiagram}
\end{figure}

\begin{definition} \label{defGeneralCA}
A set of arrays $\{C_m^{\mathfrak{L},\Omega_m,\mathfrak{A}}\}_{m=1}^M$ is a
\textbf{complementary array set}, if the sum of their aperiodic
autocorrelations is a multiple of the discrete delta function, i.e.
\begin{align}
&\sum_{m=1}^M A^{C_m}(v_1,\cdots, v_n) =0, \quad \forall \, v_1,v_2 \in \mathbb{Z}, (v_1,\cdots, v_n) \neq (0,\cdots,0). \label{eqn11}
\end{align}
A  Golay complementary array pair is the special case when $M=2$.
\end{definition}

\begin{remark}
In practice, the pinholes on an aperture may only change the phase of a source point.
Therefore, we assume a unimodular alphabet by default.
It is clear that if a set of non-overlapping arrays are based on unimodular/$N$-phase alphabets,
the addition of them is also based on an unimodular/$N$-phase alphabet.

The autocorrelation of any array is the same as that of its shifted copy.
This is because
for any  $v_1,\cdots, v_n \in \mathbb{Z}, \bm t  \in \mathfrak{L}$, we have
\begin{align*}
A^{C\{\bm t\}}(v_1,\cdots, v_n)
&=\sum_{\bm a \in \Omega + \bm t} C\{ \bm t\}[\bm a]  \overline{ C\{ \bm t \} [\bm a + v_1 \bm e_1 + \cdots + v_n \bm e_n]}
=\sum_{\bm a \in \Omega + \bm t} C[\bm a - \bm t]  \overline{ C[\bm a -\bm t + v_1 \bm e_1 + \cdots + v_n \bm e_n]}  \\
&=\sum_{\bm a \in \Omega } C[\bm a]  \overline{ C[\bm a + v_1 \bm e_1 + \cdots + v_n \bm e_n]}
= A^{C}(v_1,\cdots, v_n).
\end{align*}
%
Furthermore, if $\{C_m\}_{m=1}^M$ is a complementary array set, then $\left\{C_m\{\bm t_m\} \right\}_{m=1}^M, \,\forall \,
\bm t_m \in \mathfrak{L}$ also forms a complementary array set. In other words, a complementary array set is ``invariant''
under shift operation.

Based on a unimodular alphabet, a complementary array set
$\{C_m\}_{m=1}^M$ satisfies
$$
\sum_{m=1}^M A^{C_m}(0,\cdots, 0) = M|\Omega|.
$$
Thus, the sum of the autocorrelations may be written as a multiple of the discrete delta function:
\begin{align*}
\sum_{m=1}^M C_m * C_m^{-} = \sum_{m=1}^M A^{C_m}(\cdot) = M |\Omega| \delta[\bm r].
\end{align*}

\end{remark}

\subsection{Inspirations from Ohyama et al.'s Design}

In Ohyama et al.'s design, $\mathfrak{L}$ is an integer lattice, and the number of complementary arrays is $M=2$.
The design consists of two steps:

\begin{itemize}
\item First, choose the following complementary sequence pair:
         \begin{align} \label{squareSeed}
         C_1 = [ 1 \quad 1], \quad C_2 = [1 \quad -1].
         \end{align}

\item Second, design complementary array pairs of larger sizes in an inductive manner.
         Assume that we already have a complementary pair $C_1, C_2$,
         with $C_1 * C_1^{-} + C_2 * C_2^{-} = 2\omega \delta[\bm r]$,
         where $\omega$ is constant.
          Let
         \begin{align}
         \widehat{C}_1 &= C_1\{\bm t_1\} + C_2\{\bm t_2\}, \quad 
         \widehat{C}_2 = C_1\{\bm t_1\} - C_2\{ \bm t_2\},\label{squareInduction2}
         \end{align}
where the shifts $\bm t_1$ and $\bm t_2$ are arbitrarily chosen.
\end{itemize}

The validity of the construction 
(\ref{squareInduction2}) is clear
    from the fact that
         \begin{align*} 
       \widehat{C}_1 * \widehat{C}_1^{-}
      = &C_1\{\bm t_1\} * C_1\{\bm t_1\}^{-} + C_2\{\bm t_2\} * C_2\{\bm t_2\}^{-}
          + C_1\{\bm t_1\} * C_2\{\bm t_2\} ^{-}+ C_2\{\bm t_2\} * C_1\{\bm t_1\}^{-},  \\
        \widehat{C}_2 * \widehat{C}_2^{-}
        =& C_1\{\bm t_1\} * C_1\{\bm t_1\}^{-} + C_2\{\bm t_2\} * C_2\{\bm t_2\}^{-}
          - C_1\{\bm t_1\} * C_2\{\bm t_2\}^{-} - C_2\{\bm t_2\} * C_1\{\bm t_1\}^{-},
         \end{align*}
         and thus
         \begin{align}
         & \widehat{C}_1 * \widehat{C}_1^{-} + \widehat{C}_2 * \widehat{C}_2^{-}
          = 2 (C_1\{\bm t_1\} * C_1\{\bm t_1\}^{-} + C_2\{\bm t_2\} * C_2\{\bm t_2\}^{-} )
          = 2(C_1 * C_1^{-} + C_2 * C_2^{-}) = 4 \omega \delta[\bm r].\label{squareInduction3}
         \end{align}
In practice, $\bm t_1$ and $\bm t_2$ are chosen properly so that $C_1\{\bm t_1\}$ and $C_2\{\bm t_2\}$ do not overlap,
which guarantees that $\widehat{C}_1$ and $\widehat{C}_2$  are still based on unimodular alphabets.
For example, after applying Equation (\ref{squareInduction2}) to (\ref{squareSeed}) once, we have two possible complementary array pairs:
\begin{align}
\widehat{C}_1 = [ 1 \quad 1 \quad 1 \quad -1], \quad \widehat{C}_2 = [1 \quad 1 \quad -1 \quad 1];
\end{align}
or
\begin{align}
 \widehat{C}_1= \left[
	\begin{array}{cc}
		1 & 1 \\
		1 & -1	
	\end{array}
  \right],  \quad
 \widehat{C}_2 = \left[
	\begin{array}{cc}
		1 & 1 \\
		-1 & 1	
	\end{array}
  \right].
\end{align}
The process of design is also illustrated in Fig. \ref{squareGrowDemo}.
\begin{figure}[h]
  \centering
  \includegraphics[width=0.6\textwidth]{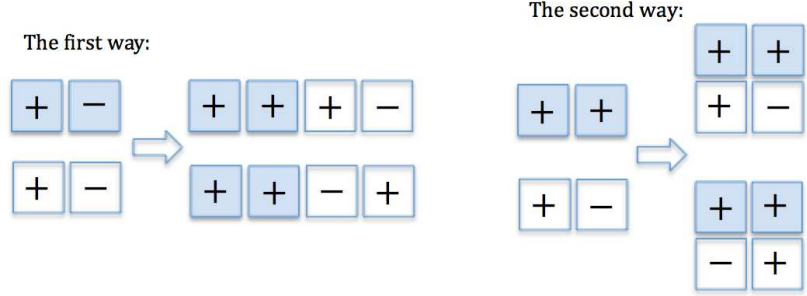}
  \caption{Illustration of the design of complementary array pairs on integer lattices}
  \label{squareGrowDemo}
\end{figure}
By continuous application of the above design process, complementary pairs of size $2^{k_1} \times 2^{k_2}$
(for any nonnegative integers $k_1$, $k_2$) can be designed.

Inspired by the above design for complementary array pairs on integer lattices, we look for a ``seed'' (similar to (\ref{squareSeed})) and a related scheme to ``grow'' the seed (similar to (\ref{squareInduction2})) 
for the design of complementary hexagonal arrays.
Admittedly, we may build a simple mapping between two-dimensional arrays on a square lattice and a hexagonal lattice (or other lattices) below:
\begin{align}
C^{\mathfrak{L}^s,\Omega^s,\mathfrak{A}} [c_1 * \bm e_1^s + c_2 * \bm e_2^s]
= C^{\mathfrak{L}^h,\Omega^h,\mathfrak{A}} [c_1 * \bm e_1^h + c_2 * \bm e_2^h], \quad \forall \,  c_1, c_2 \in \mathbb{Z},
\end{align}
where the superscripts $s$ and $h$ respectively denote square and hexagonal lattices.
Under the above mapping, a set of complementary arrays on a square lattice are still complementary on a hexagonal lattice.
This is due to the fact that the autocorrelation of an array is only with respect to the coefficients $c_1,c_2$.
Nevertheless, the lattice array naturally arises from practical designs.
Consider the scenario where a two (or three)-dimensional coded aperture is to be built that has pinholes arranged on a certain (suitably chosen) type of lattices which adapts to a particular physical aperture mask. The designs may preferably be based directly on that lattice instead of mapping to a square lattice (with zero elements padded in various regions) first and then mapping back to the original one.


\subsection{Design for The Basic Hexagonal Array of $7$ Points} \label{7pointSection}

We first study a very simple hexagonal pattern that may act as a ``seed''. It is an hexagonal array of $7$ points,
which is shown in Fig. \ref{7point}. After that, we consider possible ways to ``grow'' the seed.

\begin{figure}[h]
  \centering
  \includegraphics[width=0.15\textwidth]{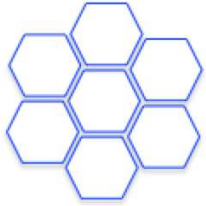}
  \caption{Basic hexagonal array with 7 points}
  \label{7point}
\end{figure}
We start from considering the existence of hexagonal complementary array pairs, i.e.
the order $M$ is $2$.

\begin{theorem} \label{M2NonexistenceThm}
For the basic 7-points hexagonal array, there exists no complementary array pair with unimodular alphabet
(Fig. \ref{pair_design})
\end{theorem}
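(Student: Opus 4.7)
My plan is to derive the constraints shell by shell, exploit the extreme rigidity of the outermost shell to parameterise $C_2$ in terms of $C_1$, and then show that the remaining constraints over-determine the system into an incommensurate identity. Label the entries of $C_1$ as $a_0,\ldots,a_6$ and those of $C_2$ as $b_0,\ldots,b_6$ on the 7-point cluster, with $p_0$ the centre and $p_1,\ldots,p_6$ the outer ring paired antipodally via $p_{i+3}=-p_i$ for $i=1,2,3$. Using the global-phase freedom of each array, I would normalise $a_0=b_0=1$.

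First I would treat the three shifts of magnitude~$2$, namely $\pm 2\bm e_1$, $\pm 2\bm e_2$, $\pm 2(\bm e_1+\bm e_2)$. Each of these shifts joins exactly one pair of points in $\Omega$, so the complementarity condition there collapses to the single-term identity $a_{i+3}\bar a_i+b_{i+3}\bar b_i=0$. Writing $\rho_j:=b_j/a_j$, this forces $\rho_{i+3}=-\rho_i$; in particular, after the gauge fix, $C_2$ is parameterised by $C_1$ together with only three free unimodular phases $\rho_1,\rho_2,\rho_3$.

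Next I would process the three shell-$\sqrt{3}$ constraints---shifts $\bm e_1-\bm e_2$, $2\bm e_1+\bm e_2$, $\bm e_1+2\bm e_2$, each containing two contributing pairs. After the substitution, each of these three equations factors cleanly as a product $(\rho_i\pm\rho_j)\cdot(\text{affine expression in the }a_k)=0$, so each admits a ``trivial'' branch ($\rho_1=-\rho_2$, or $\rho_1=\rho_3$, or $\rho_2=\rho_3$) and a ``non-trivial'' branch pinning down a ratio among entries of $C_1$. If all three trivial branches held simultaneously we would obtain $\rho_2=-\rho_2$, so at least one shell-$\sqrt{3}$ equation must sit on its non-trivial branch.

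The remainder is a case analysis over which subset of the three shell-$\sqrt{3}$ equations is trivial. Feeding the consequences into the three shell-$1$ constraints at $\bm e_1$, $\bm e_2$, $\bm e_1+\bm e_2$ (each containing four contributing pairs), the shell-$2$ and shell-$\sqrt{3}$ substitutions produce large cancellations that collapse these eight-term equations into short identities such as $\bar a_1(1+\bar\tau)+a_4(1-\tau)=0$. Unimodularity of the relevant entries then forces $|1+\tau|=|1-\tau|$, hence $\tau\in\{\pm i\}$; propagating this through the remaining shell-$1$ equations invariably yields an identity of the form $(1\pm i)\,a_i a_j = 2\,a_k$, whose magnitudes give the contradiction $\sqrt{2}=2$. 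The $6$-fold rotational symmetry of the cluster permutes the three shell-$\sqrt{3}$ conditions cyclically, which reduces the case split to just a few inequivalent cases. The main obstacle I anticipate is the bookkeeping of the case analysis rather than any individual computation; the underlying mechanism---that the magnitude-$2$ constraints are so rigid that the intermediate shells over-determine the system into an incommensurate norm equation---drives every branch to the same $\sqrt{2}\ne 2$ contradiction.
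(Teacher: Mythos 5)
Your skeleton is essentially the paper's own proof. The paper writes out the same nine autocorrelation constraints grouped exactly as you describe (three single-pair equations from the antipodal shifts, three two-pair equations, three four-pair equations), uses the single-pair equations to express $C_2$'s ring entries through $C_1$'s up to three residual phases, and then branches on the two-pair equations before driving each branch to a modulus contradiction in the four-pair equations. The only cosmetic differences are the gauge (the paper normalises a ring entry, $x_1=y_1=1$, rather than the centre) and the handling of the first two-pair equation, which the paper resolves via a lemma that four unimodular numbers summing to zero must contain two opposite pairs rather than via your factorisation $(\rho_i\pm\rho_j)\cdot(\cdot)=0$; these produce the same three-way case split (the paper's Cases A, B, C), and your observation that the three ``trivial'' branches are mutually inconsistent is likewise implicit in the paper's sub-case structure.

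The genuine gap is the endgame, which you assert rather than execute, and whose predicted shape is not what the computation actually delivers. You claim that every branch collapses to $|1+\tau|=|1-\tau|$, hence $\tau=\pm i$, hence an identity $(1\pm i)a_ia_j=2a_k$ and the contradiction $\sqrt2=2$. In the paper's worked branches the terminal contradictions are heterogeneous and none has that form: several branches reduce to an identity of the type $(x_3+1)y_0+(x_3-1)\bar y_0=-2\sqrt3\,\delta\,x_3$, killed by the triangle-inequality bound $|x_3+1|+|x_3-1|\le 2\sqrt2<2\sqrt3$; others reduce to $4=2\alpha-2\beta$ with $|\alpha|=|\beta|=1$, which forces $\alpha=1,\beta=-1$ and, on back-substitution, annihilates a unimodular entry (e.g. $2x_4=0$); another uses $|\bar x_0-2x_2|=1\Rightarrow x_0=\bar x_2$ against a second relation. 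Moreover, extracting these short identities from the eight-term equations requires an additional scaling symmetry of the reduced system (e.g. $(x_2,x_3,x_0,y_0,x_6)\mapsto(\xi^{-1}x_2,x_3,\xi x_0,\xi y_0,\xi^2 x_6)$) that your two global-phase gauge fixings do not supply. Since the entire content of the theorem lives in pushing each of the roughly eight inequivalent sub-cases to its own contradiction, and your proposal neither carries out any branch nor correctly anticipates the mechanism that closes them, the argument as written is a correct plan but not yet a proof.
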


\begin{figure}[h]
  \centering
  \includegraphics[width=0.3\textwidth]{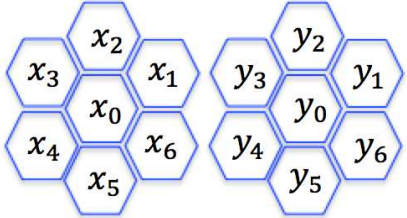}
  \caption{There exists no such complementary pair with unimodular alphabet.}
  \label{pair_design}
\end{figure}


The proof is given in Appendix \ref{appendix_a}.
One may be further interested in the existence of a hexagonal complementary array pair if the basic array does not have the origin $0$
(Fig. \ref{6Points}). 
In fact, it does not exist, either.

\begin{theorem} \label{M26PointNonexistenceThm}
For the array in Fig. \ref{6Points}, there exists no hexagonal complementary array pair with unimodular alphabet.
\end{theorem}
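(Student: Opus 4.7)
The plan is to parametrize the two arrays, exploit the pair condition at every non-zero autocorrelation shift, and derive a contradiction via a short finite-case analysis. Label the six vertices of the hexagon $p_1,\dots,p_6$ in cyclic order and write $a_i = C[p_i]$, $b_i = D[p_i]$ (all unimodular). Introduce the unimodular ratios $c_i = b_i \bar a_i$; without loss of generality, $c_1 = 1$ after scaling $D$ by a unit constant.

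First I would handle the three shifts connecting antipodal vertices. For each such ``diameter'' shift both $A^{C}$ and $A^{D}$ consist of a single term, e.g.\ $a_4 \bar a_1$ and $b_4 \bar b_1$. The condition $A^{C}+A^{D}=0$ immediately collapses to $c_j \bar c_i = -1$ for each antipodal pair, yielding
\begin{align*}
c_4 = -c_1, \qquad c_5 = -c_2, \qquad c_6 = -c_3.
\end{align*}

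Second, I would analyse the six remaining non-zero shifts, which split into three ``edge'' shifts (adjacent vertices) and three ``skip'' shifts (next-to-adjacent vertices); each such autocorrelation has exactly two terms. Substituting the antipodal relations, each equation is expressible in one of the three direction parameters $\alpha = c_3 \bar c_1$, $\mu = c_3 \bar c_2$, $\nu = c_1 \bar c_2$ (linked by $\alpha = \mu \bar \nu$). For each lattice direction the edge shift and the skip shift parallel to it produce a coupled pair of equations of the form
\begin{align*}
(1 \pm z)X + (1 \pm \bar z)Y = 0, \qquad (1 \mp z)X' + (1 \mp \bar z)Y' = 0,
\end{align*}
where $X,Y,X',Y'$ are products of $a_i$'s, all of modulus one. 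Using the identities $(1+z)/(1+\bar z)=z$ and $(1-z)/(1-\bar z)=-z$ valid for $|z|=1$, eliminating the $X$'s and $Y$'s via multiplication across the pair produces $1=-1$ unless $z \in \{\pm 1\}$; hence $\alpha,\mu,\nu \in \{\pm 1\}$.

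Third, since $\alpha = \mu \nu$, only four sign patterns for $(\mu,\nu,\alpha)$ survive. In each, exactly two of the six non-diameter equations degenerate to $0=0$ while the other four collapse to three multiplicative relations among the $a_i$. Substituting two of these relations into the third invariably produces an identity of the form $w = -w$ with $|w|=1$, the desired contradiction. The main obstacle is not a single deep idea but the bookkeeping: correctly enumerating the nine non-zero shifts of the 6-point hexagonal support, matching each term of $A^{C}$ with its counterpart of $A^{D}$ after the antipodal substitution, and tracking which equations trivialise in each of the four parity cases. Once the tabulation is laid out cleanly, the remaining algebra is routine.
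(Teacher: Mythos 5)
Your argument is correct in substance and takes a genuinely different---and cleaner---route than the paper. The paper proves this theorem by inheriting the entire case tree of Theorem~\ref{M2NonexistenceThm}: normalize $x_1=y_1=1$, eliminate $y_3,\dots,y_6$, branch via the ``four unimodular numbers summing to zero contain two opposite pairs'' lemma into cases A1--C4, and then observe that setting the centre entries to zero still forces a contradiction in every branch. You instead exploit the central symmetry of the 6-point support directly: each of the three antipodal shifts contributes a single term, which forces $c_{i+3}=-c_i$ for the ratios $c_i=b_i\bar a_i$; every remaining two-term shift equation then takes the form $(1\pm z)X+(1\pm\bar z)Y=0$ with $X,Y,z$ unimodular, and pairing the edge shift with the skip shift governed by the same ratio forces $z\in\{\pm1\}$, leaving in every sign pattern the same three relations $a_1a_4=-a_2a_5$, $a_2a_5=-a_3a_6$, $a_3a_6=-a_1a_4$, which are jointly inconsistent. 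I checked the sign bookkeeping (e.g.\ the edge $(p_3,p_4)$ and skip $(p_1,p_3)$ equations carry $1-\alpha$ and $1+\alpha$ respectively because exactly one index crosses to the antipodal half) and it works out. Two small slips that do not affect validity: for each of the three direction parameters exactly one of its two equations degenerates, so three---not two---of the six non-diameter equations become $0=0$ and three survive; and the four-way case split over $(\mu,\nu,\alpha)$ is unnecessary, since the surviving relation for each direction is the same whichever sign occurs. The payoff of your route is that it replaces the paper's lengthy sub-case enumeration with a short symmetry argument; the cost is that it is tied to the antipodal symmetry of the 6-point pattern and does not transfer to the 7-point array, which is precisely what the paper's heavier machinery handles.
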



\begin{figure}[h]
  \centering
  \includegraphics[width=0.3\textwidth]{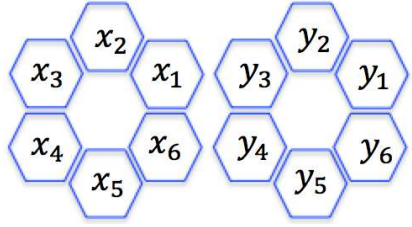}
  \caption{Basic hexagonal array with 6 points}
  \label{6Points}
\end{figure}
The proof is given in Appendix \ref{appendix_a2}.
The non-existence of complementary array pairs for the array
in Fig. \ref{7point} motivates us to
further consider higher 
order $M$.
We use the notation of ``design parameter'' for brevity.
For a particular array pattern, if there is a complementary array set with $M$ arrays and an $N$-phase
alphabet, the pair $(M,N)$ is called its \textbf{design parameters}. Furthermore, if the array sizes are equal to $L$,
we may refer to the triplet $(M,N,L)$ as its \textbf{design parameters} whenever there is no ambiguity.

Fortunately, complementary array triplets with unimodular alphabet exist.
In fact, we have found more than one designs with $(M,N,L) = (3,3,7)$. The following is an example.

\begin{design} \label{3_3design}
Let $\zeta = \exp ( i 2 \pi / 3)$. Let $C_1=\{ x_k \}_{k=0}^6, C_2= \{ y_k \}_{k=0}^6, C_3= \{ z_k \}_{k=0}^6$
denote the entries of three hexagonal arrays shown in Fig. \ref{7point}.
Then
\begin{align*} 
\{ x_k \}_{k=0}^6 &= \{ \zeta^2, \zeta^0, \zeta^2, \zeta^2, \zeta^0, \zeta^2, \zeta^0 \}, \quad
\{ y_k \}_{k=0}^6 = \{ \zeta^1, \zeta^0, \zeta^2, \zeta^2, \zeta^1, \zeta^0, \zeta^1 \}, \quad
\{ z_k \}_{k=0}^6 = \{ \zeta^1, \zeta^0, \zeta^1, \zeta^1, \zeta^2, \zeta^0, \zeta^1 \}
\end{align*}
form a complementary array set (Fig. \ref{3_3designFig}).
\end{design}

\begin{figure}[h]
  \centering
  \includegraphics[width=0.45\textwidth]{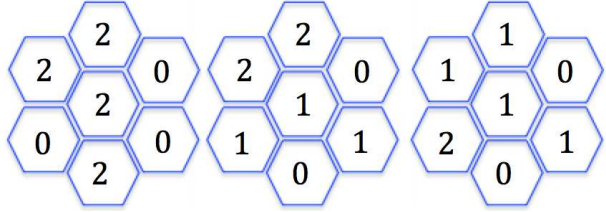}
  \caption{A complementary triplet with $3$-phase alphabet, i.e. $(M,N,L)=(3,3,7)$,
   where $\zeta^k$ is represented by $k$ , for $k=0,1,2$}
  \label{3_3designFig}
\end{figure}

We have also found more than one designs with $(M,N,L) = (4,2,7)$. The following is an example.

\begin{design} \label{4_2design}
Let $C_1=\{ x_k \}_{k=0}^6, C_2= \{ y_k \}_{k=0}^6, C_3= \{ z_k \}_{k=0}^6, C_4= \{ w_k \}_{k=0}^6$
denote the entries of four hexagonal arrays shown in Fig. \ref{7point}.
Then
\begin{align*}
\{ x_k \}_{k=0}^6 &= \{ 1, 1, -1, 1, 1, -1, 1 \}, \quad
\{ y_k \}_{k=0}^6 = \{ -1, 1, 1, -1, -1, 1, 1 \}, \\
\{ z_k \}_{k=0}^6 & = \{ 1, 1, -1, 1, -1, 1, 1 \},  \quad
\{ w_k \}_{k=0}^6 = \{ 1, 1, 1, -1, 1, -1, 1 \}
\end{align*}

form a complementary array set (Fig. \ref{4_2designFig}).
\end{design}

\begin{figure}[h]
  \centering
  \includegraphics[width=0.3\textwidth]{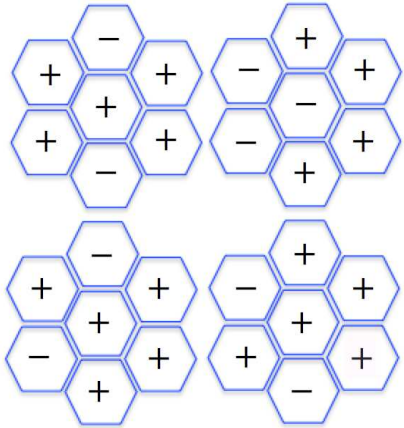}
  \caption{A complementary quadruplet with $2$-phase (binary) alphabet, i.e. $(M,N,L)=(4,2,7)$,
  where $\pm 1$ is represented by $\pm$ }
  \label{4_2designFig}
\end{figure}

\subsection{Methodology for Designing Larger Arrays} \label{generalMethod}

We now consider how to ``grow'' the seed that we have found in order to design more and larger arrays.
 If $+$ and $*$ are respectively considered as addition and multiplication operations, then (\ref{squareInduction2}) and (\ref{squareInduction3}) can be written in symbolic expression:
 \,
 \begin{align} 
   \bm{\widehat{C}} &= \bm H \bm C\{\bm t\}, \\
   \bm{\widehat{C}}^T \bm{\widehat{C}}^{-} &=  \bm C\{\bm t\}^T \bm H^T \bm H \bm C\{\bm t\}^{-} = 2 \bm C\{\bm t\}^T \bm C\{\bm t\}^{-},
 \end{align}
 where
 \begin{align*}
 &\bm{\widehat{C}} =
    \left[
	\begin{array}{cc}
		\widehat{C}_1 \\
		\widehat{C}_2 	
	\end{array}
  \right],  \quad
    \bm{\widehat{C}}^{-} =
    \left[
	\begin{array}{cc}
		\widehat{C}_1^{-} \\
		\widehat{C}_2^{-} 	
	\end{array}
  \right],  \quad
  \bm H =
  \left[
	\begin{array}{cc}
		1 & 1 \\
		1 & -1	
	\end{array}
  \right],  \quad
  \bm C\{\bm t\} =
      \left[
	\begin{array}{cc}
		C_1\{\bm t_1\} \\
		C_2\{\bm t_2\} 	
	\end{array}
  \right],
 \end{align*}
and $\bm H^T$ is the  conjugate transpose of $\bm H$.
 The key that $\bm{\widehat{C}}$ remains to be a complementary pair is that $\bm H$ satisfies
 $\bm H^T \bm H = 2 \bm I$, i.e. $\bm H$ is a Hadamard matrix.
This observation could be generalized to the following result.

\begin{theorem} \label{growScheme}
Let $\bm U = [u_{mk}]_{M \times M}$ be a unitary matrix up to a constant,
i.e. $\bm U^T \bm U = c  \bm I$, where $c>0$.
Assume $\{ C_k^{\mathfrak{L},\Omega_k,\mathfrak{A}} \}_{k=1}^M$ is a complementary array set.
 Then,
$\{ \widehat{C}_m \}_{m=1}^M$  is also a complementary array set, where
\begin{align*}
 \widehat{C}_m =\sum_{k=1}^M u_{mk}  \cdot C_k^{\mathfrak{L},\Omega_k+\bm t_k,\mathfrak{A} } \{\bm t_k\},\, m=1,2,\cdots,M,
\end{align*}
 $ \bm t_1, \bm t_2, \cdots  \bm t_M$ are arbitrarily chosen,
and $u \cdot  C$ is an array that multiplies each entry of $C$ by the scalar $u$.
\end{theorem}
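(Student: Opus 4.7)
The plan is to reduce the claim to three facts already established earlier in the paper: bilinearity of the aperiodic autocorrelation, shift invariance of the autocorrelation (proved in the remark after Definition \ref{defGeneralCA}), and the defining property $\bm U^T \bm U = c \bm I$.

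First I would fix an arbitrary lag $\bm v = v_1 \bm e_1 + \cdots + v_n \bm e_n$ with $\bm v \neq 0$ and expand $A^{\widehat{C}_m}(\bm v)$ by plugging in the definition of $\widehat{C}_m$. Because the aperiodic autocorrelation is a bilinear form in the entries of the array, this produces a double sum
\begin{align*}
A^{\widehat{C}_m}(\bm v) = \sum_{k=1}^M \sum_{j=1}^M u_{mk}\,\overline{u_{mj}}\; A^{C_k\{\bm t_k\}\, C_j\{\bm t_j\}}(\bm v).
\end{align*}
Summing over $m$ and swapping the order of summation, the coefficient attached to $A^{C_k\{\bm t_k\}\, C_j\{\bm t_j\}}(\bm v)$ becomes $\sum_{m=1}^M u_{mk}\,\overline{u_{mj}}$, which is exactly the $(j,k)$-entry of $\bm U^T \bm U$ and therefore equals $c\,\delta_{jk}$ by hypothesis. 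The cross terms ($j \neq k$) thus vanish and only the diagonal terms survive:
\begin{align*}
\sum_{m=1}^M A^{\widehat{C}_m}(\bm v) = c \sum_{k=1}^M A^{C_k\{\bm t_k\}}(\bm v).
\end{align*}

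Next I would invoke the shift-invariance identity $A^{C_k\{\bm t_k\}}(\bm v) = A^{C_k}(\bm v)$, established in the remark following Definition \ref{defGeneralCA}, to remove the dependence on the shift vectors $\bm t_k$. Finally, since $\{C_k\}_{k=1}^M$ is a complementary array set, $\sum_k A^{C_k}(\bm v) = 0$ whenever $\bm v \neq 0$, so $\sum_m A^{\widehat{C}_m}(\bm v) = 0$ as required.

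There is essentially no hard step; the only subtlety worth flagging is the bookkeeping of conjugates and the order of indices in $\bm U^T \bm U$, so I would be careful to write $\bm U^T$ explicitly as the conjugate transpose and verify that $\sum_m u_{mk}\overline{u_{mj}} = (\bm U^T \bm U)_{jk} = c\,\delta_{jk}$ before collapsing the sum. I would also remark that the shifts $\bm t_k$ are genuinely arbitrary because shift invariance kills any dependence on them, which matches the statement of the theorem and parallels the role played by $\bm t_1, \bm t_2$ in Ohyama et al.'s original pair construction.
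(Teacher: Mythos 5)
Your proposal is correct and is essentially the same argument as the paper's: the paper writes the sum of autocorrelations as the quadratic form $\bm C\{\bm t\}^T \bm U^T \bm U \bm C\{\bm t\}^{-} = c\,\bm C\{\bm t\}^T \bm C\{\bm t\}^{-}$ and then invokes shift invariance, which is exactly your coordinate-wise expansion, cancellation of cross terms via $(\bm U^T\bm U)_{jk}=c\,\delta_{jk}$, and reduction to $\sum_k A^{C_k}$. The only cosmetic difference is that you work with explicit double sums while the paper uses matrix notation.
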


\begin{proof}
Define a vector space on $\mathfrak{A}$\footnote{
It is not necessarily a field.
} with the addition operation $+$ defined
in Definition \ref{defineArray}, and the variables
\begin{align*}
 \bm{\widehat{C}} =
    \left[
	\begin{array}{cc}
		\widehat{C}_1 \\
		\vdots \\
		\widehat{C}_M 	
	\end{array}
  \right],
  \bm C\{\bm t\} =
      \left[
	\begin{array}{cc}
		C_1\{\bm t_1\} \\
		\vdots \\
		C_M\{\bm t_M\} 	
	\end{array}
  \right].
 \end{align*}
 Define a quadratic form with the multiplication operation $*$ defined as the convolution.
The sum of aperiodic autocorrelations of $\{ \widehat{C}_m \}_{m=1}^M$ is
\begin{align}
 \sum_{m=1}^M  \widehat{C}_m * \widehat{C}_m^{-}
 &= \bm{\widehat{C}}^T \bm{\widehat{C}}^{-} = \bm C\{\bm t\}^T \bm U^T \bm U \bm C\{\bm t\}^{-}
 = c \bm C\{\bm t\}^T \bm C\{\bm t\}^{-} \nonumber \\
 &= c  \sum_{m=1}^M  C_m\{\bm t_M\} * C_m\{\bm t_M\}^{-}
 =c  \sum_{m=1}^M C_m * C_m^{-} = c M \omega \delta[\bm r]. \label{new_add50}
 \end{align}
\end{proof}
\begin{remark} \label{remark_chooseF}
We may be more interested in the case where
\begin{enumerate}
 \item $\mathfrak{A}$ is an $N$-phase alphabet;

 \item $\bm U$ is the Fourier matrix:
 $\bm F_M = [f_{mk}]_{M \times M}, \, f_{mk} = \exp \{ i 2 \pi (m-1)(k-1)/ M \}$;

 \item The shifted arrays $C_k^{\mathfrak{L},\Omega_k + \bm t_k,\mathfrak{A}}\{\bm t_k\}, k=1,\cdots, M$ do not overlap. The complementary array set  $\{C_k\}_{k=1}^{M}$ with  design parameters $(M,N,|\Omega|)$ (if $|\Omega_k|=|\Omega|, k=1,\cdots,M$) becomes $\{\widehat{C}_m\}_{m=1}^{M}$ with design parameters $(M, {\rm lcm}(N,M), M |\Omega|)$ according to Theorem 3,
 where  ${\rm lcm}$ stands for least common multiple.
 Besides this, we have $c=M$, $\omega = |\Omega|$ in (\ref{new_add50}).
\end{enumerate}

\end{remark}

Theorem \ref{growScheme} provides a powerful tool to design more complex complementary arrays, which will be
illustrated in Subsection \ref{designMoreSection}. For future reference, we also include the following fact.

\begin{remark} \label{catDesign}
Assume that $\bm C^{(1)}, \cdots, \bm C^{(K)} $
are $K$ complementary array sets on the same lattice,
and the set $\bm C^{(k)}$ has design parameters $(M_k, N_k)$,
$\forall \,  k=1,\cdots,K$.
Then, it is clear that $\bm C = \bm C^{(1)} \cup \cdots \cup \bm C^{(K)} $ can be thought as a new complementary set with
design parameters $(\sum_{k=1}^K M_k, {\rm lcm}(N_1,\cdots,N_K))$.
\end{remark}

\subsection{An Example---Design for the hexagonal array of 18 points} \label{designMoreSection}


The basic hexagonal array of $7$ points studied in Subsection \ref{7pointSection} contains two layers,
with $1$ and $6$ points, respectively. We now study the hexagonal array that contains one more layer
(shown in Fig. \ref{19points}).
The design for this hexagonal array is not very obvious, so we delete the center element
\footnote{In fact, designs always exist for an arbitrarily shaped array, as will be discussed later. We delete
the center point primarily for a cute solution.}, i.e.
layer~1. The remaining $18$ points are grouped into six basic triangular arrays:
$C_1, C_2, C_3, C_1^{'}, C_2^{'}, C_3^{'}$,
which is shown in Fig. \ref{18points}.
\begin{figure}[h]
  \centering
  \includegraphics[width=0.35\textwidth]{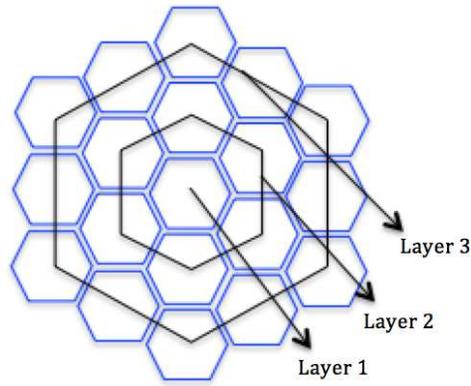}
  \caption{Hexagonal array with three layers (19 points)}
  \label{19points}
\end{figure}
\begin{figure}[h]
  \centering
  \includegraphics[width=0.29\textwidth]{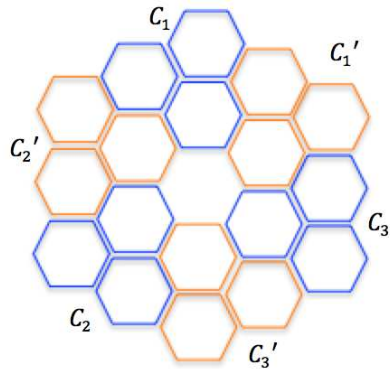}
  \caption{Hexagonal array with layer 2 and 3 (18 points)}
  \label{18points}
\end{figure}

This motivates the design of complementary array triplets for the basic triangular array shown in Fig. \ref{triangleBasic},
where $\zeta^k$ is represented by $k$ , for $k=0,1,2, \zeta = \exp ( i 2 \pi / 3)$.
In Fig. \ref{triangleBasic}, $\{C_1, C_2, C_3\}$ form a complementary array set.
Due to symmetry, its rotated copy,
$C_1^{'}, C_2^{'}, C_3^{'}$, also form a  complementary array set.
Applying Theorem \ref{growScheme}, we obtain the following design.
\begin{figure}[h]
  \centering
  \includegraphics[width=0.35\textwidth]{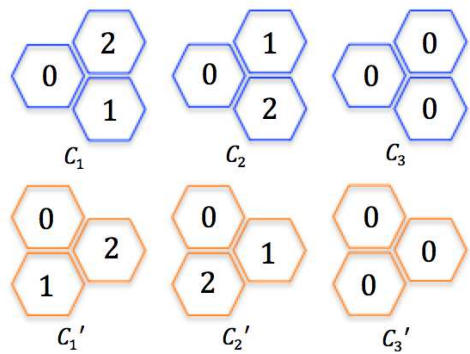}
  \caption{Basic triangular complementary array triplets with $3$-phase alphabet, i.e. $(M,N,L)=(3,3,3)$}
  \label{triangleBasic}
\end{figure}

\begin{design}

Let
\begin{align*}
 \bm{\widehat{C}} =
    \left[
	\begin{array}{cc}
		\widehat{C}_1 \\
		\widehat{C}_2 \\
		\widehat{C}_3 \\
		\widehat{C}_4 \\
		\widehat{C}_5 \\
		\widehat{C}_6 	
	\end{array}
  \right]
  = \bm F_6 \,
      \left[
	\begin{array}{cc}
		C_1^{'}\{\bm t_1\} \\
		C_1\{\bm t_2\} \\
		C_2^{'}\{\bm t_3\} \\
		C_2\{\bm t_4\} \\
		C_3^{'}\{\bm t_4\} \\
		C_3\{\bm t_4\}
	\end{array}
  \right],
 \end{align*}
 where $\bm t_1, \cdots, \bm t_6$ are such that $\widehat{C}_1, \cdots, \widehat{C}_6$ are arranged to form a hexagonal array of $18$ points,
 as is shown in Fig.~\ref{18points}. Then $\{\widehat{C}_m\}_{m=1}^6$ forms a complementary  array set. The design is also
 shown in  Fig.~\ref{18pointDesign}, where $\zeta^k$ is represented by $k$, for $k=0,\cdots, 5,
 \zeta = \exp ( i 2 \pi / 6)$.
\end{design}

\begin{figure*}
  \centering
  \includegraphics[width=0.75 \textwidth]{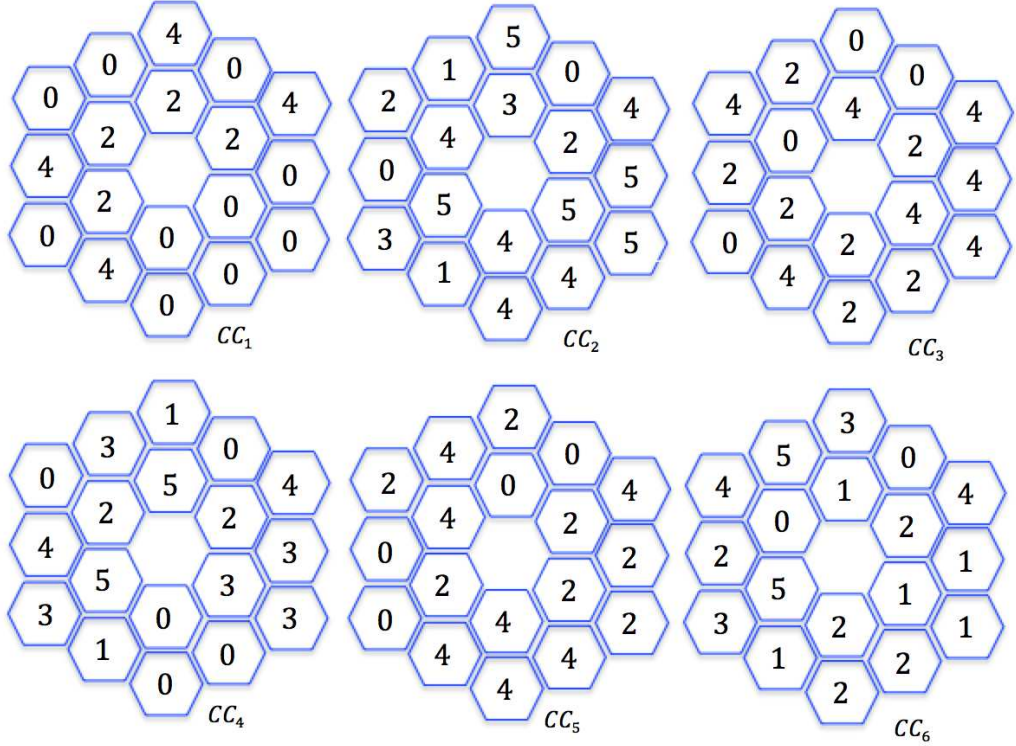}
  \caption{Complementary array set with parameter $(M,N,L)=(6,6,18)$}
  \label{18pointDesign}
\end{figure*}
\subsection{Complementary Array Bank} \label{bank}

Up to this point, we have assumed that the coding array $C$ and decoding array $D$ are related via $D[\bm r] = \overline{C[-\bm r]}$.
The design of CAI thus reduces to the design of complementary array sets.
Then we extend the autocorrelation to crosscorrelation, and
the design of complementary array sets is accordingly extended to that of complementary array banks.
The following result is a generalization of Theorem \ref{growScheme}, and its proof is similar to that of Theorem \ref{growScheme}.

\begin{theorem} \label{growSchemeGeneralization}

Let $\bm \Theta = [\theta_{mk}] , \bm \Phi = [\phi_{mk}] \in \mathbb{C}^{M \times \tilde{M}}$ be two matrices satisfying
$\bm \Theta^{T} \bm \Phi = c \bm I$ for some positive constant $c$.
For a given lattice $\mathfrak{L}$,
assume
   $\{(C_k, D_k) \}_{k=1}^M$
 is a complementary array bank. Then,
   $\{(\widehat{C}_m, \widehat{D}_m) \}_{m=1}^{\tilde{M}}$
is also a complementary array bank, where
\begin{align*}
&\widehat{C}_m =\sum_{k=1}^M \theta_{mk}  \cdot C_k\{\bm t_k\}, \quad
\widehat{D}_m =\sum_{k=1}^M \phi_{mk}  \cdot D_k\{\bm t_k\}, \quad
 m=1,2,\cdots,\tilde{M},
\end{align*}
 $ \bm t_1, \bm t_2, \cdots  \bm t_M$ are arbitrarily chosen,
and $u \cdot  C$ is an array that multiplies each entry of $C$ by the scalar $u$.
\end{theorem}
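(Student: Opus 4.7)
The plan is to mimic the proof of Theorem \ref{growScheme} almost verbatim, modifying it in two places: the orthogonality condition $\bm U^T \bm U = c \bm I$ of a single matrix is replaced by the biorthogonality $\bm\Theta^T \bm\Phi = c \bm I$ of two matrices, and each autocorrelation $C_m * C_m^{-}$ is replaced by the crosscorrelation $C_m * D_m^{-}$ of a coding/decoding pair.

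First I would collect the arrays into column vectors
\begin{align*}
\bm{\widehat C} &= [\widehat C_1,\ldots,\widehat C_{\tilde M}]^T, \quad \bm{\widehat D} = [\widehat D_1,\ldots,\widehat D_{\tilde M}]^T, \\
\bm C\{\bm t\} &= [C_1\{\bm t_1\},\ldots,C_M\{\bm t_M\}]^T, \quad \bm D\{\bm t\} = [D_1\{\bm t_1\},\ldots,D_M\{\bm t_M\}]^T,
\end{align*}
so that, in the vector space of lattice arrays introduced in the proof of Theorem \ref{growScheme}, we have $\bm{\widehat C} = \bm\Theta\, \bm C\{\bm t\}$ and $\bm{\widehat D} = \bm\Phi\, \bm D\{\bm t\}$.

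Next I would expand the total crosscorrelation as a quadratic form, using convolution $*$ as the ``multiplication'':
\begin{align*}
\sum_{m=1}^{\tilde M} \widehat C_m * \widehat D_m^{-}
= \bm{\widehat C}^T * \bm{\widehat D}^{-}
= \bm C\{\bm t\}^T \bigl(\bm\Theta^T \bm\Phi\bigr) \bm D\{\bm t\}^{-}
= c \sum_{k=1}^M C_k\{\bm t_k\} * D_k\{\bm t_k\}^{-},
\end{align*}
where the first equality unpacks the inner-product notation, the second uses that complex scalars factor through $*$, and the third invokes $\bm\Theta^T \bm\Phi = c \bm I$.

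Finally, I would observe that shifting the two arguments of a crosscorrelation by the same translation $\bm t_k$ leaves it unchanged; this is a one-line change of variables exactly parallel to the autocorrelation computation displayed immediately after Definition \ref{defGeneralCA}. Hence $C_k\{\bm t_k\} * D_k\{\bm t_k\}^{-} = C_k * D_k^{-}$, and the hypothesis that $\{(C_k, D_k)\}_{k=1}^M$ is a complementary array bank yields
\begin{align*}
\sum_{m=1}^{\tilde M} \widehat C_m * \widehat D_m^{-} = c \sum_{k=1}^M C_k * D_k^{-} = c\omega\, \delta[\bm r].
\end{align*}
The only subtle point is this shift invariance, which crucially requires the \emph{same} $\bm t_k$ to be used in both $\widehat C_m$ and $\widehat D_m$ (as the theorem specifies); everything else reduces to symbolic bookkeeping through the biorthogonality of $\bm\Theta$ and $\bm\Phi$.
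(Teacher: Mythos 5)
Your proposal is correct and follows essentially the same route as the paper, which itself only remarks that the proof "is similar to that of Theorem \ref{growScheme}": you vectorize, expand the quadratic form under convolution, kill the cross terms via $\bm\Theta^{T}\bm\Phi = c\bm I$, and use shift invariance of the crosscorrelation when both members of a pair are shifted by the same $\bm t_k$. Your closing observation that the same $\bm t_k$ must be used in $\widehat{C}_m$ and $\widehat{D}_m$ is exactly the one point where the generalization could go wrong, and you handle it correctly.
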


\begin{remark}  \label{remark_chooseU}
We may be more interested in the case where
\begin{enumerate}
 \item $\{C_k, D_k\}_{k=1}^{M}$ have $N$-phase alphabets;

 \item $\bm \Theta $ is equal to $\bm \Phi$  and it contains $M$ orthogonal columns of the complex Fourier matrix $F_{\tilde{M}}$ (thus $M \leq \tilde{M}$);

 \item The shifted arrays $C_k\{\bm t_k\}$ do not overlap, neither do $D_k\{\bm t_k\}, k=1,\cdots,M$.

\end{enumerate}

Furthermore, if we assume that $C_k = D_k, k=1, \cdots, M$ and the array sizes are equal to $L$,  then the complementary array set $\{C_k\}_{k=1}^{M}$ has design parameters $(M,N,L)$, and $\{\widehat{C}_m\}_{m=1}^{M}$ has design parameters $(\tilde{M}, {\rm lcm}(N,\tilde{M}), ML)$. 

\end{remark}

\begin{lemma} \label{singleLemma}
Any single point, as the simplest array on any lattice, forms a complementary set.
\end{lemma}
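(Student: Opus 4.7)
The plan is to unpack the definitions directly. Let $C^{\mathfrak{L},\Omega,\mathfrak{A}}$ be a single-point array, so that $\Omega = \{\bm a_0\}$ for some $\bm a_0 \in \mathfrak{L}$ and $C[\bm a_0] \in \mathfrak{A}$ (unimodular by our default convention). I interpret the statement as asserting that the singleton collection $\{C\}$ is a complementary array set in the sense of Definition \ref{defGeneralCA}, i.e.\ that $A^C(v_1,\ldots,v_n)=0$ for every $(v_1,\ldots,v_n)\neq(0,\ldots,0)$.

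First I would apply Definition \ref{defineAutocorrelation} to write
\begin{align*}
A^C(v_1,\ldots,v_n) \;=\; \sum_{\bm a \in \Omega} C[\bm a]\,\overline{C[\bm a + v_1\bm e_1 + \cdots + v_n\bm e_n]}.
\end{align*}
Since $\Omega = \{\bm a_0\}$ contains only one point, the sum collapses to the single term $C[\bm a_0]\,\overline{C[\bm a_0 + v_1\bm e_1 + \cdots + v_n\bm e_n]}$. The key observation is then that $\{\bm e_i\}_{i=1}^n$ is a basis of $\mathbb{R}^n$, so whenever $(v_1,\ldots,v_n)\neq(0,\ldots,0)$ the displacement $v_1\bm e_1 + \cdots + v_n\bm e_n$ is nonzero, and hence $\bm a_0 + v_1\bm e_1 + \cdots + v_n\bm e_n \neq \bm a_0$ lies outside $\Omega$. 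By the support convention in Definition \ref{defineArray}, $C$ vanishes at that point, and the whole term is zero.

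Therefore $A^C(v_1,\ldots,v_n)=0$ off the origin, so the singleton $\{C\}$ satisfies (\ref{eqn11}) with $M=1$, establishing the lemma. There is no genuine obstacle here: the statement serves as the trivial base case on which the iterative constructions from Theorems \ref{growScheme} and \ref{growSchemeGeneralization} can be applied, producing larger complementary sets (respectively banks) on the chosen lattice $\mathfrak{L}$.
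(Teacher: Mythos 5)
Your proof is correct and is essentially the paper's own argument spelled out: the paper's remark following the lemma simply asserts that the aperiodic autocorrelation of a single point is the discrete delta function, and you verify this by collapsing the sum over the singleton support and noting that the shifted point leaves the support for any nonzero lag. No difference in approach, only in level of detail.
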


\begin{remark}
Lemma \ref{singleLemma} is a trivial but useful result.
It follows from Definition \ref{defGeneralCA}  and
 the fact that the aperiodic autocorrelation of a single point is always the discrete delta function.
By employing Lemma \ref{singleLemma}, Theorem \ref{growScheme}, and
Theorem \ref{growSchemeGeneralization},
it is possible to design complementary arrays of various support $\Omega$.

\end{remark}

\begin{corollary}
For an arbitrary set $\Omega$ on a lattice,  there exists at least one complementary array set with support $\Omega$
for any 
order $M$ such that $M \geq |\Omega|$.
\end{corollary}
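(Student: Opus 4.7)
The plan is to construct the desired set as a Fourier-style combination of single-point ``seeds.'' Enumerate the support as $\Omega = \{\bm a_1, \ldots, \bm a_L\}$ with $L = |\Omega|$, and for each $k$ let $C_k$ be the array whose only nonzero entry is the value $1$ at $\bm a_k$. By Lemma~\ref{singleLemma} each $C_k$ alone is a complementary array set, and Remark~\ref{catDesign} promotes their disjoint union $\{C_k\}_{k=1}^{L}$ to a single complementary array set of order $L$ whose sum of autocorrelations equals $L\,\delta[\bm r]$.

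Given $M \ge L$, I would then take $\bm\Theta = \bm\Phi$ to be the $M \times L$ matrix formed by the first $L$ columns of the $M$-point DFT matrix $F_M$. Its columns are pairwise orthogonal of squared norm $M$, so $\bm\Theta^{T}\bm\Theta = M\,I_L$, which is exactly the orthogonality hypothesis of Theorem~\ref{growSchemeGeneralization} (applied in the regime of Remark~\ref{remark_chooseU}, where an $L$-array input is grown into an output of $\tilde M = M \ge L$ arrays). Taking all shifts $\bm t_k = \bm 0$, the theorem produces
\[
\widehat C_m \;=\; \sum_{k=1}^{L} \theta_{mk}\, C_k, \qquad m = 1, \ldots, M,
\]
so that $\widehat C_m[\bm a_k] = \exp\!\bigl(i\, 2\pi (m-1)(k-1)/M\bigr)$ and $\widehat C_m[\bm a] = 0$ for $\bm a \notin \Omega$. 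Every entry is a root of unity, hence unimodular; in particular each $\widehat C_m$ has support equal to \emph{all} of $\Omega$, and the theorem already guarantees that $\{\widehat C_m\}_{m=1}^{M}$ is a complementary array set.

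As a sanity check the sum-of-autocorrelations identity can be verified directly: column orthogonality of $\bm\Theta$ gives $\sum_m \widehat C_m[\bm a_j]\,\overline{\widehat C_m[\bm a_k]} = M\,\delta_{jk}$, and summing over pairs $(j,k)$ with $\bm a_k - \bm a_j = \bm v$ yields $L M\,\delta[\bm v]$. The main obstacle I anticipate is not conceptual but a matter of bookkeeping: one has to be careful which direction of Theorem~\ref{growSchemeGeneralization} is being used (here the ``input'' order is $L$ and the ``output'' order is the target $M$, so the relevant rectangular matrix has more rows than columns), and one must check that the support is preserved exactly. This last point is automatic because no DFT entry vanishes, so no cancellation can shrink the support below $\Omega$, while no new locations can arise since the seeds $C_k$ already live inside $\Omega$.
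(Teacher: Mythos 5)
Your construction is correct and is exactly the argument the paper intends: the corollary is stated right after the remark that Lemma~\ref{singleLemma}, Theorem~\ref{growScheme}, and Theorem~\ref{growSchemeGeneralization} together allow designs of arbitrary support, and your proof simply instantiates that recipe (single-point seeds at the points of $\Omega$, combined via Remark~\ref{catDesign}, then grown with $L$ orthogonal columns of $F_M$ under the conditions of Remark~\ref{remark_chooseU}). The details you supply — zero shifts preserving non-overlap, nonvanishing DFT entries preserving the support, and the direct verification $\sum_m A^{\widehat C_m} = LM\,\delta[\bm r]$ — are all sound.
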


\begin{remark} \label{augmentationRemark} [Augmentation using Theorem \ref{growSchemeGeneralization}]
Due to Theorem \ref{growSchemeGeneralization}, we may let $\tilde{M} > M$ for practical purposes.
For example, we may choose $\tilde{M} = 2^m$ ( for a positive integer $m$) such that that $\bm U $ is a $\pm 1$
Hadamard matrix and the growth of alphabet ($N$) could be well controlled.
We call this ``augmentation'' procedure.
Augmentation is important, because it is often desirable to reduce the size of the alphabet,
and thus
the cost of practical implementations. The following design is an example of augmentation.
\end{remark}

\begin{design} \label{smile}
We use several basic arrays to compose a smile face shown in Fig. \ref{smile}.
The colors indicate different basic complementary array sets: The two green arrays (at the lower and upper boundaries) form a basic array set with parameters $(M,N,L)=(2,2,8)$. So do the brown ones (at the lower left and upper right boundaries) and cyan ones (at the lower right and upper left boundaries). The two blue arrays (at the lower and upper boundaries) form a basic array set with $(M,N,L)=(2,2,4)$. The two yellow arrays (single points at the lower and upper boundaries) form a basic array set with $(M,N,L)=(2,1,1)$. The three black arrays (the eyes and nose) form a basic array set with $(M,N,L)=(3,3,3)$ (Fig. \ref{triangleBasic}); so do the three red ones (part of the mouth). The four purple arrays (the rest part of the mouth) form a basic array set with parameters $(M,N,L)=(4,2,3)$, which could be obtained by applying Theorem \ref{growSchemeGeneralization} with Remark \ref{remark_chooseU} and $\tilde{M}=4$ to a set of three single-point arrays.
By applying Remark \ref{catDesign} and Theorem \ref{growScheme} to the 20 arrays, a smile design with $(M,N,L)=(20,60,88)$ could be obtained. Due to Remark \ref{augmentationRemark}, another smile design with $(M,N,L)=(32,6,88)$ could be obtained.

\end{design}
\begin{figure}[h]
  \centering
  \includegraphics[width=0.4\textwidth]{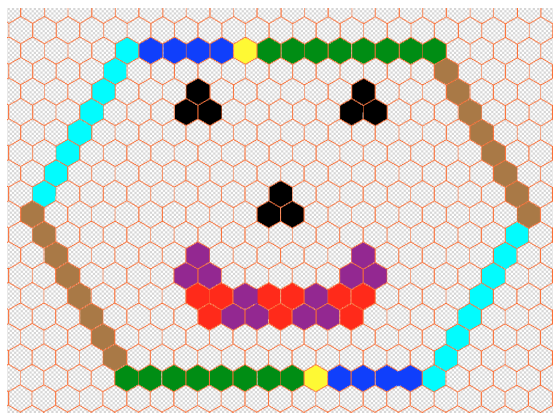}
  \caption{Smile design, with $(M,N,L)=(20,60,88)$ (without augmentation) or $(32,6,88)$ (with augmentation)}
  \label{smile}
\end{figure}



\subsection{Design for Infinitely Large Hexagonal Arrays} \label{sec_infinite}

By choosing a proper seed and growth scheme, we may be able to design infinitely large hexagonal arrays. The following is an example.
We first design a complementary array set with $M=7$, as a seed.

\begin{design} \label{7_6design}
The union of Design \ref{3_3design} with design parameters $(M,N,L) = (4,2,7)$ and Design \ref{4_2design} with $(M,N,L) = (3,3,7)$
is a design with $(M,N,L) = (4+3,{\rm lcm}(2,3),7) = (7,6,7)$ (Fig. \ref{7point}), based on Remark \ref{catDesign}.
\end{design}

\begin{design} \label{7InfiniteDesign}
By repeated application of Theorem \ref{growScheme} with $\bm U =\bm F_7$ to Design \ref{7_6design},
we obtain a design with parameters $(M,N,L)=(7,42,7^{\ell})$ for any positive integer $\ell$.
The design is illustrated in Fig. \ref{fig:largeDesign}, where the colors indicate the process  of ``growth''.
In fact, applying Theorem \ref{growScheme} to Design \ref{7_6design} once (with Remark \ref{remark_chooseF} conditions), we obtain a larger array set with $(M,N,L)=(7,{\rm lcm}(6,7), 7 \times 7)=(7,42,7^2)$. Fig. \ref{fig:largeDesign1} illustrates how the $7$ arrays of size $7$ (indicated by different colors) are combined to form larger arrays.
Similarly, applying Theorem \ref{growScheme} to the $(M,N,L)=(7,42,7^2)$ design once, we obtain a larger array set with $(M,N,L)=(7,{\rm lcm}(42,7), 7 \times 7^2)=(7,42,7^3)$. Fig. \ref{fig:largeDesign2} illustrates how the $7$ arrays of size $7^2$ (indicated by different colors) are combined  to form larger arrays. Further applications of Theorem \ref{growScheme} will not increase $M, N$, but will increase $L$.
\end{design}

As an alternative, the following design is also for $7^{\ell}$-point hexagonal array, but with different elements.

\begin{design} \label{7InfiniteDesign_alternative}
We keep applying Theorem \ref{growScheme} with $\bm U=\bm F_7$ to a single-point array, e.g. with entry $1$,
 we obtain a design with parameters $(M,N,L)=(7,7, 7^{\ell})$ for any positive integer $\ell$.
 To see how it works, first consider a complementary array set with $(M,N,L)=(1,1,1)$ (a single-point array). Taking the union of $7$ such array sets as in Remark \ref{catDesign} leads to an array set with $(M,N,L)=(7,1, 1)$.
Then, applying Theorem \ref{growScheme} once
  leads to an array set with $(M, N, L)=(7, {\rm lcm}(7,1), 7\times 1)=(7,7,7)$.
 Further applications of Theorem \ref{growScheme} will increase $L$, but not $M, N$.
The design could also be illustrated by Fig. \ref{fig:largeDesign}.
\end{design}

\begin{figure*}
     \begin{center}
        \subfigure[
        		 Illustration of one of the $7$ arrays of size $7^2$ obtained as a result of applying Theorem
		 \ref{growScheme} to a complementary array set of $7$ arrays of size $7$ (indicated by different colors) once
	 ]{
            \label{fig:largeDesign1}
            \includegraphics[width=0.2 \textwidth]{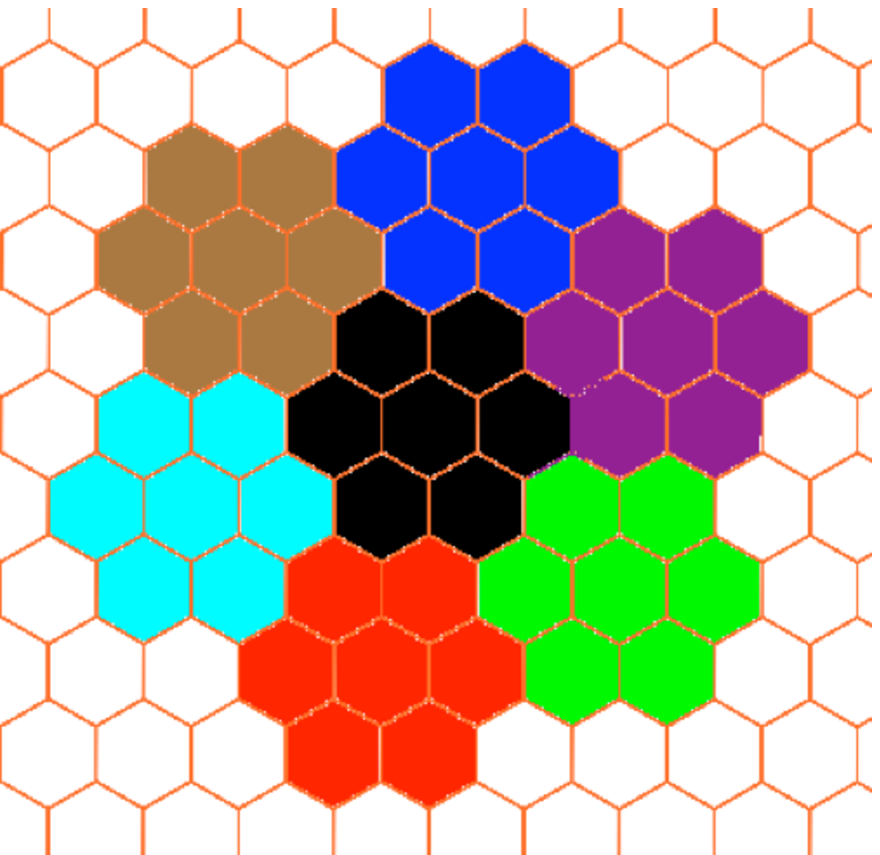}
        }%
        \hspace{1cm } 
        \subfigure[
        		Illustration of one of the $7$ arrays of size $7^3$ obtained as a result of applying Theorem
		 \ref{growScheme} to a complementary array set of $7$ arrays of size $7^2$ (indicated by different colors) once
	]{
            \label{fig:largeDesign2}
            \includegraphics[width=0.5 \textwidth]{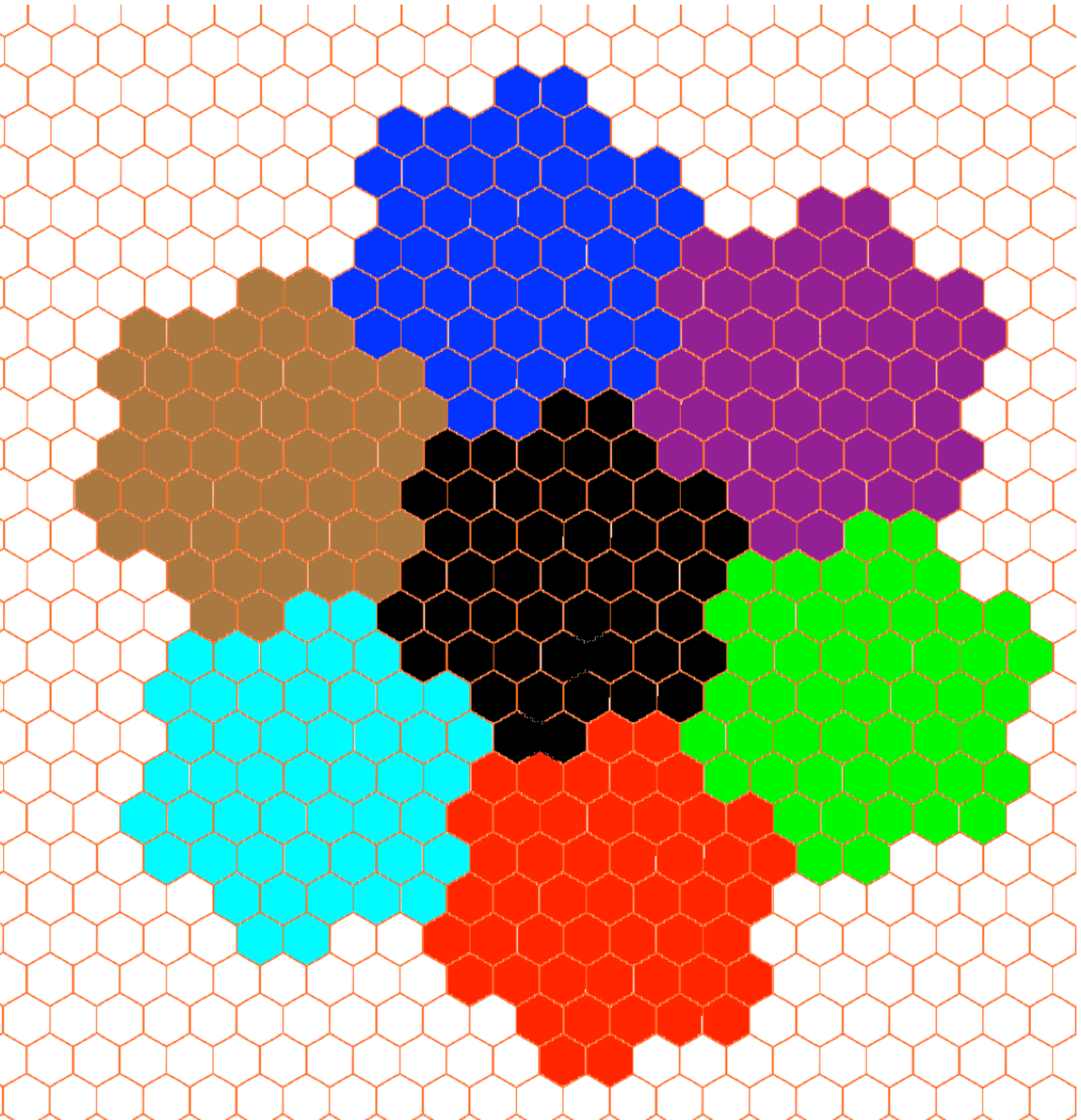}
        }
     \end{center}
    \caption{
      	 Complementary array set with parameter $(M,N)=(7,42, 7^{\ell})$ (Design \ref{7InfiniteDesign}),
  	or $(M,N)=(7,7,7^{\ell})$ (Design \ref{7InfiniteDesign_alternative}) for any positive integer $\ell$
     }
   \label{fig:largeDesign}
\end{figure*}

\section{URA, HURA and MURA Based on Periodic Autocorrelation} \label{secURA}

In this section, we review related works on URA, including HURA and MURA. 




\subsection{URA} \label{subURA}

We first introduce the concept of periodic autocorrelation and ``pseudo-noise''  that are important to the design of URA.

\begin{definition} \label{periodicDef}
Let $C=\{C[i_1,\cdots, i_n]\}$ be an \textbf{infinite array} on an integer lattice,
which satisfies
\begin{align*}
&C[i_1, \cdots, i_n]= C[I_1, \cdots, I_n], \quad \forall \, i_1, \cdots, i_n  \in \mathbb{Z},
i_1 \equiv I_1 \textrm{ (mod } L_1), \cdots,  i_n \equiv I_n \textrm{ (mod } L_n),  \\
& L_1, \cdots, L_n, \,  I_1, \cdots, I_n \in \mathbb{N},  \,  (I_1, \cdots, I_n) \in [0,L_1-1] \times \cdots \times [0,L_n-1].
\end{align*}
The finite array within $[0,L_1-1] \times \cdots \times [0,L_n-1]$, denoted by $c$, is called the basic array.
 $C$ is called the periodic extension of $s$.
The \textbf{periodic autocorrelation} function of $C$ (or $c$) is
\begin{align*}
A^C(v_1,\cdots , v_n) = \sum_{\substack{i_1 \in [0,L_1-1] , \cdots, i_n \in [0,L_n-1]} }
C[i_1, \cdots, i_n] \overline{C[i_1+v_1, \cdots, i_n+v_n]},  \quad v_1, \cdots,v_n \in \mathbb{Z}.
\end{align*}
The periodic crosscorrelation between two arrays are similarly defined.
\end{definition}

A section of an infinite array $C$ would be a valid URA aperture, if there exists a finite array $D$ such that
$C*D^{-}$ is a periodic extension of the  discrete delta function.
For a detailed discussion about the benefits and implementations of periodic extension, please refer to \cite{fenimoreCAI_URA1978}. 

\begin{definition} \label{PNDef}
An array of size $L_1 \times\cdots \times L_n$ is a \textbf{pseudo-noise} (PN) array if

(1) it is $\{\pm 1\}$-binary;

(2) all out-of-phase correlations are $-1$ \cite{key2}, i.e.
\begin{align*}
A^C(v_1,\cdots,v_n) &= -1,  \quad v_1,\cdots, v_n  \in \mathbb{Z}, v_1 \not \equiv 0 \textrm{ (mod } L_1), \cdots, \textrm{ or } v_n \not \equiv 0 \textrm{ (mod } L_n) . 
\end{align*}
\end{definition}

In 1967, Calabro and Wolf \cite{key2} showed that a class of two-dimensional PN arrays
 could be synthesized from quadratic residues.
The arrays are of size $p_1 \times p_2$, where $p_1, p_2$ are any prime numbers
satisfying $p_2 - p_1 = 2$ 
\begin{align*} 
D[i_1,i_2]=\left\{
\begin{array}{lcl}
-1       &      & {i_2 \equiv 0 \textrm{ mod } p_2} \\
1     &      & {i_1 \equiv 0  \textrm{ mod } p_1, \, i_2 \not \equiv 0  \textrm{ mod } p_2}\\
(i_1 / p_1)(i_2 / p_2)    &      &{\textrm{otherwise}}
\end{array} \right.
\end{align*}
where
$(i/p),  i \in \mathbb{Z}$ is Legendre operator:
$$ (i/p)=\left\{
\begin{array}{lcl}
0       &      & {i  \equiv 0 \textrm{ mod } p} \\
1     &      & {\exists  \, x \not \equiv 0 \textrm{ mod } p, \textrm{ s.t. } i \equiv x^2 \textrm{ mod } p}\\
-1       &      & {\textrm{otherwise.}}
\end{array} \right. $$

In 1978, following from the above result, Cannon and Fenimore \cite{fenimoreCAI_URA1978} designed $C$ and $D$ such that
$C*D^{-}$ is a periodic extension of the discrete delta function.
The design is given below,
where $p_1,p_2$ is a twin prime pair.
The coding array is $C$:
\begin{align} \label{URA_C}
 C[i_1,i_2]=\left\{
\begin{array}{lcl}
1    &      & { (i_1 / p_1)(i_2 / p_2)=1 }\\
0       &      & {i_2 \equiv 0 \textrm{ mod } p_2} \\
1     &      & {i_1 \equiv 0  \textrm{ mod } p_1, \, i_2 \not \equiv 0  \textrm{ mod } p_2}\\
0       &      & {\textrm{otherwise.}}
\end{array} \right.
\end{align}
The decoding array of $C$ is $D^{-}$, where $D$ is:
\begin{align*} 
D[i_1,i_2] = \left\{
\begin{array}{lcl}
1     &      & {\textrm{if }  C[i_1,i_2] = 1}\\
-1       &    & {\textrm{if }  C[i_1,i_2] = 0.}
\end{array} \right.
\end{align*}
It is shown that
\begin{align} \label{e11_final}
C*D^{-} = \frac{p_1 p_2-1}{2}\delta[\bm r] \textrm{ (within one period)}.
\end{align}

URA may also be designed from Maximal-length Shift-register Sequences
or m-sequences \cite{fenimore1981fast}.
m-sequence is another class of PN sequences,
which have lengths $n = 2^k - 1$ with $k$ being any positive integer.
They are sometimes referred to as ``PN sequences''
or ``m-sequences'' \cite{1976PN}.
In 1976, MacWilliams and Sloane showed how to obtain PN arrays
from m-sequences \cite{1976PN}.
Let $S$ be an m-sequence of length $n=2^k-1$.
If $ n = n_1 n_2 $ such that $n_1$ and $n_2$ are relatively prime,
a PN array $H$ is designed below:
\begin{align} \label{map_via_diagonal}
&H[i_1,i_2] = S[i], \textrm{ where }
i \equiv i_1 \textrm{ mod } n_1 , \, 0 \leq i_1 < n_1,    \textrm{ and }
i \equiv i_2 \textrm{ mod } n_2 , \, 0 \leq i_2 < n_2.
\end{align}
We note that when sum$(H)=-1$ (\cite{1976PN}, Property IP-$\uppercase\expandafter{\romannumeral4}^{*}$),
 we can design a URA with coding array $C=(-H+J)/2$ and decoding array $D=-H^{-}$ on integer lattices, where 
$J$ is a unit array, i.e. with all elements equal to one.

\subsection{HURA} \label{subHURA}

In 1985, Finger and Prince \cite{HURA1985hexagonal} designed a class of linear URA, i.e. sequences $C$ and $D$
such that $C*D$ is a periodic extension of the discrete delta function. The design is based on PN
sequences which in turn come from quadratic residues.
Then, by mapping linear sequences onto hexagonal lattice, they proposed the
hexagonal uniformly redundant arrays (HURA).
In the first step,
they  constructed  the following sequence of length $p$, where $p \equiv 3 \textrm{ mod }4$ is a prime:
\begin{align*}
D[i] = \left\{
\begin{array}{lcl}
1     &      & {\textrm{if }  i = 0}\\
-(i/p)       &    & {\textrm{otherwise} }
\end{array} \right.
\end{align*}
Let $C = (D+J)/2$. Then the following identity holds: 
\begin{align} \label{e10_final}
C*D^{-}  = \frac{p+1}{2}\delta[\bm r] \textrm{ (within one period)}.
\end{align}
In the second step, they map the sequence $D$ onto a hexagonal lattice:
\begin{align} \label{map_via_row}
H[i_1 \bm e_1 + i_2 \bm e_2] = D[i_1 + \tau i_2],
\end{align}
where $\tau$ is an integer to be chosen. $H$ is called the Skew-Hadamard URA.
It is easy to see that the correlation between $H$ and $(H+J)/2$ is a multiple of the discrete delta function,
just like the one-dimensional case.

As to the choice of lattice and $\tau$, it is well stated in \cite{HURA1985hexagonal} that
``The freedom available in this procedure rests in the choice of the lattice, the choice of
the order $p$, and the choice of the multiplier $\tau$. The lattice type will determine what symmetries
can occur $\cdots$ The multiplier $\tau$ determines the periods of the URA and hence the shape of the basic pattern.''
Furthermore, HURA are those with hexagonal basic patterns, when the lattice is chosen to be
hexagonal.
The qualified $p$ is either $3$ or primes of the form $12 k +1$ \cite{HURA1985hexagonal}.

Besides the fact that HURA are based on hexagonal lattices,
they are antisymmetric upon 60 degree rotation.
This property provides for effective reduction of
 background noise  \cite{cook1984gamma,CAI_astronomy1987}.
 Due to similar reasoning, the designs proposed
 in Section \ref{secTheory} also obtain robustness against background noise.

\subsection{MURA} \label{subMURA}

It has been shown that PN sequences, together with the  URA and
 HURA that are based on them,
could be made with prime lengths of the form  $4k + 3$.
Gottesman and Fenimore \cite{MURA1989_linear_square_hex} proposed the modified uniformly redundant arrays (MURA),
which further increased the available patterns for CAI.
MURA exist in lengths $p=4k+1$ where $p$ is a prime.

The design of MURA also starts with a sequence $D$ which is then
 mapped onto a hexagonal lattice, following the same procedure as HURA.
Recalling URA and HURA designs from Subsections \ref{subURA} and \ref{subHURA}, we may design using the procedure below:

Step 1. Let $D$ be a PN sequence (array);

Step 2. Let the coding array $C$ be $(D+J)/2$, and the decoding array be $D^{-}$;

Step 3. (optional) We map sequences onto a two-dimensional lattice (see Equations  (\ref{map_via_diagonal}) and (\ref{map_via_row})).

However, the design of MURAs is less straightforward, because $D$ is not a PN sequence and $C \neq (D+J)/2$. 
One way to design MURA sequences is:
\begin{align*}
C[i]&=\left\{
\begin{array}{lcl}
0       &      & {i  \equiv 0 \textrm{ mod } p} \\
1     &      & {\exists  \, x \not \equiv 0 \textrm{ mod } p, \textrm{ s.t. } i \equiv x^2 \textrm{ mod } p}\\
0       &      & {\textrm{otherwise}}
\end{array} \right.
\\
D[i]&=\left\{
\begin{array}{lcl}
1       &      & {i  \equiv 0 \textrm{ mod } p} \\
1     &      & {C[i] = 1, i \not \equiv 0 \textrm{ mod } p }\\
-1       &      & {\textrm{otherwise.}}
\end{array} \right.
 \end{align*}
It is easy to verify that for any $v \not \equiv 0 \textrm{ mod } p$, we have
$\sum_{i=0}^{p-1} C[i] D[i+v] = 0.$

Gottesman and Fenimore also gave a class of MURA for integer lattices.
The coding array is the same as (\ref{URA_C}), except for a change of the size: $p_1 = p_2 = p$.
The decoding array is $D^{-}$, where
\begin{align*}
D[i_1,i_2]&=\left\{
\begin{array}{lcl}
1       &      & {i_1+i_2  \equiv 0 \textrm{ mod } p} \\
1     &      & {C[i_1,i_2] = 1, i_1+i_2  \equiv 0 \textrm{ mod } p}\\
-1       &      & {\textrm{otherwise.}}
\end{array} \right.
\end{align*}
\section{New URA Constructions} \label{newPeriodic}

%
%
%
%
%

\subsection{URA from Periodic Complementary Sequence Set} \label{newPeriodic3}

In this section, we first briefly summarize some similarities and differences between the aperiodic-based and periodic-based designs of CAI,
and then propose a new design framework that is based on periodic autocorrelation. 

In the aperiodic case, the elements of arrays are assumed to extend only over some finite area and be zero outside that area.
This fact provides great convenience for the design of complementary array sets/banks, since several arrays could be easily concatenated while maintaining the unimodular alphabet during the ``growth'' process.
In addition, the concept of ``bank'' and a growth scheme make the aperiodic-based designs more flexible. 
For example, we have shown how to make CAI aperture with arbitrary patterns.
In the periodic case, the arrays were assumed to be periodic and infinite in extent.
The resulting correlations are calculated over a full period.
The usual way to design is to first design sequences with good autocorrelation property, e.g. pseudo-noise, and then map them onto arrays.
As to practical implementations, periodic-based designs often require the physical coding aperture to be periodic extensions of some basic patterns to mimic the periodicity, while aperiodic-based ones do not.

Despite their differences in principles and implementations, the idea of ``complementary'' can also be associated with periodic correlations, 
leading to the following concept that is similar to complementary array sets in Section \ref{secTheory}.


\begin{definition}
A set of arrays with the same basic pattern is a periodic complementary array set (PCAS), if the sum of
their periodic autocorrelations is a periodic extension of the discrete delta function.
A one-dimensional PCAS is also referred to as a periodic complementary sequence set (PCSS).
The notation ``design parameters'' $(M,N)$ or $(M,N, L)$ is similarly defined as in Subsection \ref{7pointSection}.

\end{definition}

As discussed before, URA (including HURA) require the lengths of sequences to be prime numbers or $2^k-1$, so the possible sizes of URA arrays are quite limited.
However, the above concept produces more admissible lengths, offering more choices in selecting an aperture.
For example, we can construct the following URA sequence  of length $6$. 

{\bf Example.}  
PCSS with parameter $(M,N,L) = (4,2,6)$:
\begin{align*}
&S_1 = \{1,    -1,    -1,    -1,    -1,    -1\}, \quad
S_2 = \{1,     1,     1,     1,    -1,    -1 \}, \quad
S_3 = \{-1,    1,    -1,     1 ,   -1  ,  -1\}, \quad
S_4 = \{-1,    -1,     1 ,    1  ,  -1 ,    1\}.
\end{align*}
Then, the sequences may be mapped onto a two-dimensional lattice, following  procedures similar
to Equations (\ref{map_via_diagonal}) and (\ref{map_via_row}). 
Now a natural question that arises is:
for a given alphabet, what are the possible lengths for which there exists a PCSS? and how to design them?
This will be addressed in the remaining sections.

 A natural way to construct PCSS is to synthesize them from existing designs.
 Some synthesis methods have been provided for binary PCSS in \cite{bomer1990periodic}, and they
 could be easily extended to the non-binary case. In the following two sections, we propose some different
 synthesis methods.
%
%

At the end of this subsection, there are two remarks worth mentioning.
First, the concept  of PCSS is not new. It was once referred to as ``periodic complementary sequences'' or ``periodic complementary binary sequences'' \cite{bomer1990periodic}.
To the best of our knowledge, prior works mainly focused on the binary case. One possible reason is its intimate
relationship with cyclic difference sets. 
Second, complementary sequence sets are subclasses of PCSS due to the following fact:
\begin{align} \label{ACS_PCS}
A_p^S(v) = A_a^S(v) + A_a^S(v-L) \quad \forall \, v  \in \mathbb{Z},  0 \leq v < L,
\end{align}
where $S$ is a sequence of length $L$, $A_p(\cdot), A_a(\cdot)$ respectively denote periodic, aperiodic autocorrelations.

\subsection{Synthesis Methods from the Chinese Remainder Theorem} \label{secSynthesis_CRT}

\subsubsection{PCAS synthesized from PCSS and perfect sequence} \label{C1}

A sequence  is called a ``perfect sequence'' if its periodic autocorrelation is a periodic extension of the discrete delta function.
Consider a PCSS $\{S_m\}_{m=1}^M$ of length $s$, and a perfect sequence $S$ of length $t$.
We can then construct a PCAS $\{C_m\}_{m=1}^M$ of size $s \times t$ (or similarly $t \times s$):
\begin{align} \label{hello1}
C_m[i,j]= S_m[i] S[j],  \quad m=1,\cdots,M, \, i, j  \in \mathbb{Z}.
\end{align}

\begin{proof}
The periodic autocorrelation of $C_m$ satisfies
$$A^{C_m} (v_1,v_2) = A^{S_m}(v_1) A^{S}(v_2).$$
Thus, for any $v_1, v_2  \in \mathbb{Z},  (v_1,v_2)\neq (0,0)$,
$$
\sum_{m=1}^M A^{C_m} (v_1,v_2)  = \left( \sum_{m=1}^M A^{S_m}(v_1) \right)A^{S}(v_2) = 0.
$$
\end{proof}

\subsubsection{ PCSS synthesized from PCSS and perfect sequence} \label{C2}

Consider a PCSS $\{S_m\}_{m=1}^M$ of length $s$, and a perfect sequence $S$ of length $t$.
Also assume that $s$ and $t$ are co-prime.
We can then construct a PCSS $\{\tilde{S}_m\}_{m=1}^M$ of length $s t$:
\begin{equation} \label{map}
\tilde{S}_m[i]= C_m[i \textrm{ mod } s, i \textrm{ mod } t], \quad  i  \in \mathbb{Z},
\end{equation}
where $\{C_m\}$ is given in Subsection \ref{C1}.

\begin{proof}
Equation (\ref{map}) provides a one-to-one mapping between a sequence and an array, guaranteed by the
Chinese Remainder Theorem. The mapping is linear
so that the autocorrelation function is preserved, i.e.
$$A^{\tilde{S}_m} (v) = A^{C_m} (v \textrm{ mod } s, v \textrm{ mod } t),$$
and thus the sequence set $\{\tilde{S}_m\}_{m=1}^M$ is complementary.
\end{proof}

\subsubsection{PCSS/PCAS from two PCSS with co-prime lengths} \label{C3}

Consider a PCSS $\{S_{m_1}\}_{m_1=1}^{M_1}$ of length $s$, and another PCSS $\{T_{m_2}\}_{m_2=1}^{M_2}$ of length $t$.
We can then construct a PCAS $\{C_{(m_1,m_2)}\}$ 
of size $s \times t$:
\begin{align} \label{last_eq1}
C_{(m_1,m_2)}[i,j] = S_{m_1}[i] T_{m_2}[j], \quad m_1=1,\cdots,M_1, m_2=1,\cdots,M_2, \, i,j  \in \mathbb{Z}.
\end{align}
Further, if $s$ and $t$ are co-prime, we can construct a PCSS of length $st$.

\begin{proof}
For a given $1 \leq m_2 \leq M_2$,
\begin{align*}
 \sum_{m_1=1}^{M_1} A^{C_{(m_1,m_2)}} (v_1,v_2) &= \left\{
 \begin{array}{lcl}
  s \cdot A^{T_{m_2}}(v_2) & &{v_1 = 0} \\
  0 & &{\textrm{otherwise}} \\
 \end{array} \right.
 \end{align*}

Thus, for $v_1, v_2  \in \mathbb{Z}, (v_1,v_2) \neq (0,0)$,
 \begin{align*}
\sum_{m_2=1}^{M_2}\sum_{m_1=1}^{M_1} A^{C_{(m_1,m_2)}} (v_1,v_2) =& \sum_{m_2=1}^{M_2} \left( \sum_{m_1=1}^{M_1} A^{C_{(m_1,m_2)}} (v_1,v_2) \right)
 	= 0
\end{align*}
If $s$ and $t$ are co-prime, a PCSS could be designed using the mapping given in Equation (\ref{map}).
\end{proof}

\begin{remark}
This result is stronger than that given in \cite{bomer1990periodic} (Theorem 6), since it does not require the
number of sequences to be relatively prime.
\end{remark}

\subsubsection{PCAS constructed from another PCAS of a different size} \label{C4}

Assume that we have a PCAS of size $s \times t$ synthesized from PCSS $\{S_{m_1}\}_{m_1=1}^{M_1}$ and $\{T_{m_2}\}_{m_2=1}^{M_2}$ using the method in Subsection \ref{C3}. 
Suppose that $\textrm{gcd}(s,t) \neq 1$, but $s=s_1 s_2$ for some $s_1 \neq 1$ and $ s_2 \neq 1$ where $\textrm{gcd}(s_1,s_2) = \textrm{gcd}(s_2,t)=1$.
A PCAS of size $s_1 \times s_2 t$ could be designed by 
first mapping the PCSS $\{S_{m_1}\}_{m_1=1}^{M_1}$ to a PCAS of size $s_1 \times s_2$ in a way similar to Equation (\ref{hello1}), then constructing  a three-dimensional PCAS of size $s_1 \times s_2 \times t$ in a way similar to Equation (\ref{hello1}), and finally 
 mapping the latter two dimensions to a single dimension in a way similar to Equation (\ref{map}), resulting in a PCAS of size $s_1 \times s_2 t$.


\subsection{Synthesis via Unitary Matrices} \label{secSynthesis_Unitary}

\begin{theorem} \label{syn_unitary}
For any positive integer $ s $, there exists at least one PCSS with design parameters
$(M, N, L) = (p_n, \, p_1 \cdots p_n, s)$,
where  $p_1 < \cdots < p_n$ are all the distinct prime divisors of $s$.

For any positive integers $ s_1,\cdots, s_k $, there exists at least one PCAS of size $s_1 \times \cdots \times s_k$
with design parameters
$(M,N,L) = (p_n, p_1 \cdots p_n,st)$,
where  $p_1 < \cdots < p_n$ are all the distinct prime divisors of $s_1  \cdots  s_k$.
\end{theorem}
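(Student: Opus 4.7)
The plan is to argue by induction on $s$ for the PCSS statement, with the PCAS statement following by a secondary induction on the total size $s_1 \cdots s_k$, peeling off one coordinate at a time via the CRT identifications of Subsections \ref{C2}--\ref{C4}. The base case $s=p$ (a prime) is handled by the standard DFT construction $S_m[i] = \omega_p^{mi}$, which forms a PCSS $(p,p,p)$ by the orthogonality identity $\sum_{m=0}^{p-1} \omega_p^{-mv} = p \cdot [p \mid v]$; prime-power lengths $s = p^a$ are then handled by a secondary induction on $a$ using the column-stacking trick that appears in the main step.

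For the inductive step, let $p = p_n$ be the largest prime divisor of $s$ and write $s = p s''$. I would build the candidate PCSS via the DFT-modulation
\[
S_m[i] = \omega_p^{m (i \bmod p)} \cdot P[i], \qquad m = 0, 1, \ldots, p-1,
\]
where $P : \mathbb{Z}/s \to \{\, N\text{-th roots of unity} \,\}$ is an auxiliary unimodular sequence. A short direct computation, paralleling the one for Theorem \ref{growScheme}, yields
\[
\sum_{m=0}^{p-1} A^{S_m}(v) = p \cdot A^P(v) \cdot [\, p \mid v \,],
\]
so the $\{S_m\}$ form a PCSS of length $s$ precisely when $A^P(pk) = 0$ for all $k = 1, \ldots, s''-1$. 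I would then construct $P$ by viewing it as an $s'' \times p$ array $Q$ via $Q[j,r] := P[jp+r]$; the condition on $A^P$ rewrites as $\sum_{r=0}^{p-1} A^{Q[\cdot, r]}(k) = 0$ for $k = 1, \ldots, s''-1$, i.e.\ the $p$ columns of $Q$ must jointly form a length-$s''$ PCSS with exactly $p$ sequences. When $p \mid s''$ the inductive hypothesis directly supplies such a PCSS (since $p$ remains the largest prime of $s''$); when $\gcd(p, s'') = 1$, the inductive hypothesis only furnishes $p_{n-1} < p$ sequences, and the remaining $p - p_{n-1}$ columns are obtained by concatenating additional length-$s''$ PCSSs synthesized via Subsections \ref{C1}--\ref{C4}, for instance the elementary DFT PCSS $\{(1, \omega_p^k)\}_{k=0}^{p-1}$ at length $2$, or copies of a perfect sequence of length $s''$ when one exists with the required alphabet.

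The hardest part will be exactly this coprime case: realizing $p$ columns of length $s''$ whose autocorrelations sum to a multiple of the delta, while keeping the combined alphabet inside the $N$-th roots of unity. Since $p$ is typically not a multiple of the ``atomic'' PCSS size $p_{n-1}$ supplied by induction, this requires an auxiliary lemma showing that every positive integer can be written as a nonnegative integer combination of PCSS sizes realizable at length $s''$ using only primes restricted to $\{p_1, \ldots, p_n\}$. Once this assembly step is in hand, the PCAS statement follows by the same peeling construction applied to any coordinate whose length is divisible by $p_n$, and an induction on the total product $s_1 \cdots s_k$ concludes the proof.
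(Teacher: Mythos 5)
Your reduction to the condition $\sum_{m=0}^{p-1}A^{S_m}(v)=p\,A^P(v)\,[\,p\mid v\,]$ is correct, as is the rewriting of $A^P(pk)$ as the sum of the column autocorrelations of $Q$, and the case $p\mid s''$ goes through. But the coprime case ($q_n=1$) contains a genuine gap, and it is exactly the crux of the theorem. There you must exhibit \emph{exactly} $p_n$ unimodular sequences of length $s''=p_1^{q_1}\cdots p_{n-1}^{q_{n-1}}$ whose autocorrelations sum to a multiple of the delta, with all entries among the $p_1\cdots p_n$-th roots of unity. The inductive hypothesis hands you only $p_{n-1}$ such sequences, and since $p_n$ is a prime not divisible by $p_{n-1}$ you cannot reach $p_n$ by stacking copies of that set. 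The ``auxiliary lemma'' you defer to --- that $p_n$ is a nonnegative integer combination of PCSS orders realizable at length $s''$ over the stated alphabet --- is not a routine bookkeeping fact; it is the whole difficulty. Your two candidate fillers do not close it: the length-$2$ DFT set has the wrong length, and perfect sequences of length $s''$ (order-$1$ PCSSs), where they exist at all, are generically $s''$-phase (e.g.\ Zadoff--Chu), which for $s=45$, $s''=9$ would force $9$th roots of unity and violate the claimed $15$-phase alphabet. Already for $s=15$ your scheme needs two extra unimodular length-$3$ sequences forming a complementary pair over $15$th roots of unity, and nothing in your argument produces them.

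The paper avoids this obstruction entirely with the augmentation device of Theorem~\ref{growSchemeGeneralization} and Remark~\ref{remark_chooseU}: taking $\bm\Theta=\bm\Phi$ to be $M=p_{j-1}$ orthogonal columns of $\bm F_{\tilde M}$ with $\tilde M=p_j$, one converts a complementary set of $p_{j-1}$ sequences of length $L$ into one of $p_j$ sequences of length $p_{j-1}L$ --- the order jumps directly from one prime to the next without ever having to assemble $p_j$ out of smaller achievable orders, and the factor $p_{j-1}$ picked up in the length is exactly absorbed by the exponent bookkeeping $p_{j-1}^{q_{j-1}-1}\mapsto p_{j-1}^{q_{j-1}}$. (The paper also works aperiodically, starting from single points and invoking Equation~(\ref{ACS_PCS}) at the end, whereas you work periodically throughout; that difference is harmless.) If you want to salvage your peeling argument, you should import this rectangular-Fourier augmentation as your mechanism for producing the $p_n$ columns; but once you do, your construction essentially collapses into the paper's, so the honest conclusion is that your proposal is missing the one idea the proof actually turns on.
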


\begin{proof}
We prove the first part constructively.
Using Equation (\ref{ACS_PCS}), we observe that it suffices to construct an aperiodic complementary array set. 
Without loss of generality, assume that $s = p_1^{q_1} \times p_2^{q_2} \cdots \times p_n^{q_n}$
is a prime factorization of $s$, where $q_j \geq 1$ and $p_j$'s are distinct for $j=1,\cdots,n$.
Let $ S^{(0)} =  \{S_m^{(0)} \}_{m=1}^{p_1}$ be a set of $p_1$ sequences each of which contains a single point $1$, i.e.
$S^{(0)}$ has design parameters $(M, N, L)=(p_1,1,1)$.

In the first iteration, we apply Theorem \ref{growScheme} to $S^{(0)}$ with $\bm U$ equal to the Fourier matrix $\bm F_{p_1}$ while satisfying Remark \ref{remark_chooseF} conditions to obtain a (one-dimensional) complementary array set with
$(M,N,L)=(p_1, {\rm lcm}(p_1,1), p_1 \times 1 )=(p_1,p_1,p_1)$.
Applying Theorem 3 a second time, we obtain a complementary array set with 
$(M, N, L)= (p_1, {\rm lcm}(p_1,p_1), p_1 \times p_1 )=(p_1,p_1,p_1^2)$.
After applying Theorem 3 to $S^{(0)}$ $q_1-1$ times, we obtain the complementary array set $S^{(1)} $ with $(M, N, L)=(p_1,p_1,p_1^{q_1-1})$.

In the second iteration, we first apply Theorem \ref{growSchemeGeneralization} to $S^{(1)}$ with $\tilde{M}=p_2$ while satisfying Remark \ref{remark_chooseU} conditions. The resulting complementary array set has parameters
$(M, N, L)=(p_2, {\rm lcm}(p_1,p_2), p_1 \times p_1^{q_1-1} )=(p_2, p_1p_2, p_1^{q_1})$;
then we apply Theorem 3 $q_2-1$ times with $\bm U$ being the Fourier matrix $\bm F_{p_2}$ to create the complementary array set $S^{(2)}$ with
$(M, N, L)=(p_2, p_1p_2, p_1^{q_1} p_2^{q_2-1})$.

By recursive construction as above, after $w$th iteration we obtain the complementary array set $S^{(w)}$ with 
$(M,N,L) = (p_n, p_1 \cdots p_n, p_1^{q_1} p_2^{q_2} \cdots p_n^{q_n-1})$.
Finally, applying Theorem \ref{growScheme} with $\bm U$ equal to the Fourier matrix $\bm F_{p_n}$ an extra time to $S^{(w)}$,  we obtain a complementary array set with $(M,N,L) = (p_n, p_1 p_2\cdots p_n, p_1^{q_1} p_2^{q_2} \cdots p_n^{q_n})$.

The proof of the second part is similar.

\end{proof}

\begin{remark}
Theorem \ref{syn_unitary}  gives the construction for PCAS of an arbitrary size,
where the alphabet is determined by the product of the distinct prime divisors of the size, and the
order is the largest  prime divisor.
From the proof of Theorem \ref{syn_unitary}, the result also holds for aperiodic autocorrelations. 

A natural question that arises is how tight the result in Theorem \ref{syn_unitary} is. Specifically, is there any solution whose order $N$ is less than $p_n$?
This is clearly not the case for Golay complementary sequence pair, where the size $s$ is a power of $2$. Although we were not able to answer this question in general, we were able to prove the following results. 

\end{remark}

\begin{theorem} \label{complexConjecture}
For any prime number $p$,  a $p$-regular set is defined to be a set of $p$ distinct  unimodular  complex numbers
that form the vertices of a uniform polygon in the complex plane. 
Let $N = p_1^{r_1} \cdots p_n^{r_n}, r_j \geq 1$ be a positive integer with distinct prime divisors $p_j , j =1, \cdots , n$.
Consider $M$ variables $x_1, \cdots, x_{M}$ that take values in the set of $N$th root of unity. Suppose that $\sum_{m=1}^M x_m = 0$.
\begin{enumerate}
\item If $n \leq 2$, the set $ \{x_m\}_{m=1}^M$ can be written as the unions of
$p_k$-regular configurations, i.e.
\begin{align} \label{time10}
\{x_1 , \cdots, x_M\}  = \bigcup_{ \{(k, j) \mid c_k > 0, \,  k=1,2 , \, j =1,\cdots, c_k\} }  RC_j^{(p_k)}.
\end{align}

\item $M$ can be written as
\begin{align} \label{Thm3_2}
M = c_1 p_1 + c_2 p_2 , \quad c_1,c_2 \in \mathbb{N}\cup \{0\}.
\end{align}
\end{enumerate}
%
%
\end{theorem}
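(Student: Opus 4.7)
My plan is to establish Part 1 and derive Part 2 as an immediate counting corollary: if $\{x_m\}_{m=1}^M$ is expressed as a disjoint multiset union of $c_1$ many $p_1$-regular and $c_2$ many $p_2$-regular configurations (with $c_k \in \mathbb{N}\cup\{0\}$), then $M = c_1 p_1 + c_2 p_2$, since each $p_k$-regular configuration contains exactly $p_k$ vertices. So the content lies entirely in Part 1.

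For Part 1 I argue by induction on $N$, handling $n=1$ first. In the base $N = p$, the $\mathbb{Q}$-basis $\{1, \omega_p, \ldots, \omega_p^{p-2}\}$ of $\mathbb{Q}(\omega_p)$ together with $\Phi_p(\omega_p) = 0$ forces all multiplicities in $\sum_j a_j \omega_p^j = 0$ to coincide, so the multiset is $a_0$ disjoint copies of the single $p$-regular configuration $\mu_p$. For $N = p^r$ with $r \geq 2$, I use that $\{1, \omega_N, \ldots, \omega_N^{p-1}\}$ is a $\mathbb{Z}[\omega_{p^{r-1}}]$-basis of $\mathbb{Z}[\omega_N]$: collecting terms by $j \bmod p$ rewrites the vanishing identity as $\sum_{s=0}^{p-1} \omega_N^s A_s = 0$ with $A_s := \sum_k a_{s+pk}\,\omega_{p^{r-1}}^k \in \mathbb{Z}[\omega_{p^{r-1}}]$, and linear independence forces each $A_s = 0$. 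Each $A_s=0$ is a smaller non-negative vanishing sum in $\mu_{p^{r-1}}$ to which the inductive hypothesis applies, and each resulting $p$-regular piece lifts via $k \mapsto s + pk$ to a coset of $\mu_p$ inside $\mu_N$, which is again $p$-regular.

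For $n = 2$, with $N = p_1^{r_1} p_2^{r_2}$ and $r_1, r_2 \geq 1$, I use the Chinese Remainder isomorphism $\mu_N \cong \mu_{p_1^{r_1}} \times \mu_{p_2^{r_2}}$ to index the multiplicities as a two-dimensional array $a_{\zeta,\eta}$, and rewrite the vanishing condition as $\sum_\eta R_\eta\,\eta = 0$ with row sums $R_\eta := \sum_\zeta a_{\zeta,\eta}\,\zeta \in \mathbb{Z}[\omega_{p_1^{r_1}}]$, plus the symmetric column version. Using $\mathbb{Z}[\omega_N] = \mathbb{Z}[\omega_{p_1^{r_1}}] \otimes_{\mathbb{Z}} \mathbb{Z}[\omega_{p_2^{r_2}}]$ and the $n = 1$ result applied to the row and column sums, I propose a greedy peeling procedure: at each stage subtract either a $p_2$-regular row fiber (a coset of $\mu_{p_2}$ inside some $\mu_{p_2^{r_2}}$-row) or a $p_1$-regular column fiber, chosen so that its constituent multiplicities are all strictly positive, and recurse on the strictly smaller residual array.

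The main obstacle, and the step where the hypothesis $n \leq 2$ is essential, is proving that the greedy peeling never stalls: whenever $(a_{\zeta,\eta})$ is not identically zero, at least one fully-positive $p_1$-regular or $p_2$-regular fiber must be present. The analogous statement fails when $n \geq 3$ (there exist vanishing sums of $N$-th roots of unity not expressible as non-negative combinations of prime-regular polygons), so any workable proof must genuinely exploit the two-prime structure. I plan to attack this step by combining the Galois actions of $\mathrm{Gal}(\mathbb{Q}(\omega_N)/\mathbb{Q}(\omega_{p_k^{r_k}}))$ for $k = 1, 2$, which together generate the full $\mathrm{Gal}(\mathbb{Q}(\omega_N)/\mathbb{Q})$ and force enough symmetry on $(a_{\zeta,\eta})$ that a fully-positive fiber can always be located before peeling. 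Part 2 then follows at once from the counting corollary.
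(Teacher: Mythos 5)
Your reduction of Part 2 to Part 1 and your $n=1$ argument are both sound; the induction via the $\mathbb{Z}[\omega_{p^{r-1}}]$-basis $\{1,\omega_N,\dots,\omega_N^{p-1}\}$ is essentially equivalent to the paper's route, which divides $F(\lambda)$ by $\Phi_N(\lambda)=\Phi_{p}(\lambda^{N/p})$ and uses a degree count to see that the quotient's coefficients are automatically nonnegative. The problem is the $n=2$ case, which is where all the content of the theorem lives, and there your proposal stops at exactly the hard step. You correctly identify that the greedy peeling must be shown never to stall, i.e.\ that a nonzero nonnegative multiplicity array always contains a fully positive $p_1$- or $p_2$-regular fiber, but you offer only a plan (``I plan to attack this step by combining the Galois actions\dots'') rather than an argument. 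That plan does not work as stated: a Galois automorphism permutes the roots of unity and hence carries one vanishing sum to another, but it does not fix the given multiplicity array, so it imposes no symmetry on $(a_{\zeta,\eta})$. For instance, with $N=6$ the vanishing sum $\omega_6+\omega_6^4=0$ has support $\{1,4\}$, which is not stable under $j\mapsto 5j$. There is also an earlier soft spot: the identity $\sum_\eta R_\eta\,\eta=0$ does not force each $R_\eta=0$, since the full set of $p_2^{r_2}$-th roots of unity is not linearly independent over $\mathbb{Q}(\omega_{p_1^{r_1}})$, so your two-dimensional bookkeeping does not by itself isolate the fibers.

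For comparison, the paper closes this gap with an explicit polynomial argument. Writing $H_k(\lambda)=\Phi_{p_k}(\lambda^{N/p_k})$, it shows $\gcd(H_1,H_2)=\Phi_N$, so a Euclidean/Bezout computation combined with $\Phi_N\mid F$ yields $F=A_1H_1+A_2H_2$ with integer (possibly negative) coefficients and $\deg A_k\le N/p_k-1$ after reduction modulo $\lambda^N-1$. The decisive step (Lemma \ref{lemma_seq4}) is a combinatorial exchange: one unit of mass can be moved between a $p_2$-regular block of $A_1$-coefficients and a $p_1$-regular block of $A_2$-coefficients without changing $F$, and maximizing the minimum coefficient over all such representations yields a contradiction if any coefficient remains negative. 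That exchange lemma is precisely your ``never stalls'' claim in dual form. Until you supply an argument of comparable substance for the peeling step, the proposal is incomplete.
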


\begin{remark}
Consider an aperiodic complementary array set $\{S_m\}_{m=1}^M$ with design parameters $(M,N,L)$. 
Then we have $\sum_{m=1}^{M} S_m[0] S_m[L-1] = 0$. 
Assume that the alphabet is $N$-phase, where $N=p_1^{r_1}$ ($n=1$) or $N=p_1^{r_1} p_2^{r_2}$ with $p_1$ and $p_2$ distinct primes ($n=2$). 
Applying Theorem \ref{complexConjecture}, Equation (\ref{Thm3_2}) implies that 
(1) $M \geq p_1$ if $n=1$; (2) $M \geq \min\{p_1, p_2\}$ if $n=2$.
Due to similar reasons, $M=7$ in Design \ref{7InfiniteDesign_alternative} is tight whenever $N$ is a power of $7$.
\end{remark}

\section{Simulation Results} \label{secSim}

We have performed computer simulations  to demonstrate a multi-channel
CAI system and a classical URA-based one.

The multi-channel CAI system that we select comes from Design \ref{4_2design}. 
Admittedly, in practice we only need four pairs of aperture arrays with $\{ -1,1\}$-alphabet. 
But the coded images contain negative entries, which are not straightforward to illustrate by simulation (we used Matlab software). 
We thus provide an alternative approach which relies on the following lemma.  

\begin{lemma} \label{binaryMask}
Suppose that $\{(C_m, D_m)\}_{m=1}^M$ is a complementary bank with  alphabet  $\mathfrak{A} = \{ -1,1\}$ and it satisfies
$ \sum_{m=1}^M D_m =  0.$
 Then
$\{(\tilde{C}_m, D_m)\}_{m=1}^M$ is a complementary bank, 
where 
$\tilde{C}_m = (C_m + J)/2, \, m=1,\cdots,M$.
Here, $0$ and  $J$  are respectively the array of zeros and the array of ones, whose supports are the same as $D_m$.
\end{lemma}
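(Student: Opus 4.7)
The plan is to prove Lemma \ref{binaryMask} by a direct computation that exploits the linearity of convolution (and hence of crosscorrelation in the sense of Definition \ref{defineAutocorrelation}) together with the two given hypotheses, namely that $\{(C_m,D_m)\}_{m=1}^M$ is already a complementary bank and that $\sum_{m=1}^M D_m = 0$.

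First I would write $\tilde{C}_m = \frac{1}{2}(C_m + J)$ and substitute this into the crosscorrelation sum that defines a complementary bank (Definition \ref{bankDef}). Using linearity of the $*$ operation in its first argument, the quantity $\sum_{m=1}^M \tilde{C}_m * D_m^{-}$ splits into two pieces:
\begin{align*}
\sum_{m=1}^M \tilde{C}_m * D_m^{-} = \tfrac{1}{2}\sum_{m=1}^M C_m * D_m^{-} + \tfrac{1}{2}\sum_{m=1}^M J * D_m^{-}.
\end{align*}
The first piece is already known to equal $\tfrac{1}{2}\omega\,\delta[\bm r]$ for some constant $\omega$, because $\{(C_m, D_m)\}_{m=1}^M$ is a complementary bank by hypothesis.

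Next I would handle the second piece by pulling $J$ out of the sum and using linearity again: $\sum_{m=1}^M J * D_m^{-} = J * \bigl(\sum_{m=1}^M D_m\bigr)^{-} = J * 0 = 0$, where the last equality uses the hypothesis $\sum_{m=1}^M D_m = 0$. Combining the two pieces shows that $\sum_{m=1}^M \tilde{C}_m * D_m^{-}$ is a multiple of $\delta[\bm r]$, which by Definition \ref{bankDef} means $\{(\tilde{C}_m, D_m)\}_{m=1}^M$ is a complementary bank.

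The only potentially delicate point is a bookkeeping one rather than a mathematical obstacle: one must make sure the supports (and hence the $J$-arrays used in the decomposition $\tilde{C}_m = (C_m+J)/2$) are chosen consistently with the definition of array addition in Definition \ref{defineArray}, so that the identity $J * \bigl(\sum_m D_m\bigr)^{-} = \bigl(\sum_m J * D_m^{-}\bigr)$ is legitimate. Since each $C_m$ and $D_m$ lives on the same lattice and the condition $\sum_m D_m = 0$ is interpreted entrywise on that lattice, this reduces to a routine verification. Finally, since $C_m$ is $\{-1,1\}$-valued, the translate $\tilde{C}_m = (C_m+J)/2$ is $\{0,1\}$-valued, confirming that the new coding apertures are physically realizable binary masks as desired.
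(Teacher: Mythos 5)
Your proposal is correct and follows essentially the same route as the paper's proof: split $\tilde{C}_m*D_m^{-}$ by linearity into $\tfrac12\sum_m C_m*D_m^{-}$ plus $\tfrac12\,J*\bigl(\sum_m D_m\bigr)^{-}$, and kill the second term using $\sum_m D_m=0$. The extra remarks on support bookkeeping and on $\tilde{C}_m$ being $\{0,1\}$-valued are consistent with the paper's surrounding discussion but add nothing beyond its one-line computation.
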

\begin{proof}
The proof follows immediately from
\begin{align}
\sum_{m=1}^M \tilde{C}_m * D_m^{-} 
&=  \sum_{m=1}^M \frac{1}{2} (C_m + J) * D_m^{-} 
=  \frac{1}{2}  \sum_{m=1}^M C_m * D_m + \frac{1}{2} J * \left(\sum_{m=1}^M D_m \right)^{-} 
 = \frac{1}{2}  \sum_{m=1}^M C_m * D_m .
\end{align}
\end{proof}
The above result gives a general method to design a mask with simple closing/opening pinholes (the elements of $C$ are either $0$ or $1$).
In practice, the method may be of some interest on its own right, but we do not elaborate here. 
As a corollary of Lemma~\ref{binaryMask}, it is easy to see that if $\{C_m\}_{m=1}^M$ is a  complementary array set with  alphabet  $\mathfrak{A} = \{ -1,1\}$, then 
$$
\biggl\{ \biggl(  \frac{1}{2}(C_m+J), C_m \biggr) \biggr\}_{m=1}^M \bigcup
\biggl\{ \biggl( \frac{1}{2}(-C_m+J), -C_m \biggr) \biggr\}_{m=1}^M
$$
is a complementary bank.

Following Design \ref{4_2design} and the above result, we obtain the following $8$-channels CAI  $\{(C_m, D_m)\}_{m=1}^8$, each with a mask as shown in Figure \ref{7point}.
The coding arrays are
\begin{align*} 
\{ C_1[k] \}_{k=0}^6 &= \{ 1, 1, 0, 1, 1, 0, 1 \}, \quad
\{ C_2[k] \}_{k=0}^6 = \{ 0, 1, 1, 0, 0, 1, 1 \}, \quad
\{ C_3[k] \}_{k=0}^6 = \{ 1, 1, 0, 1, 0, 1, 1 \}, \\
\{ C_4[k] \}_{k=0}^6 &= \{ 1, 1, 1, 0, 1, 0, 1 \}, \quad
\{ C_5[k] \}_{k=0}^6 = \{ 0, 0, 1, 0, 0, 1, 0 \}, \quad
\{ C_6[k] \}_{k=0}^6 = \{ 1, 0, 0, 1, 1, 0, 0 \}, \\
\{ C_7[k] \}_{k=0}^6 &= \{ 0, 0, 1, 0, 1, 0, 0 \}, \quad
\{ C_8[k] \}_{k=0}^6 = \{ 0, 0, 0, 1, 0, 1, 0 \}.
\end{align*}
If we choose the element labeled $0$ to be the origin, the decoding arrays are
\begin{align*} 
\{ D_1[k] \}_{k=0}^6 &= \{ 1,  1, -1, 1, 1, -1, 1 \}, \quad
\{ D_2[k] \}_{k=0}^6 = \{ -1, -1, 1, 1, 1, 1, -1 \}, \quad
\{ D_3[k] \}_{k=0}^6  = \{ 1, -1, 1, 1, 1, -1, 1 \}, \\
\{ D_4[k] \}_{k=0}^6 &= \{ 1, 1, -1, 1, 1, 1, -1 \}, \quad
\{ D_5[k] \}_{k=0}^6 = \{ -1, -1, 1, -1, -1, 1, -1 \}, \quad
\{ D_6[k] \}_{k=0}^6 = \{ 1, 1, -1, -1,  -1, -1, 1 \}, \\
\{ D_7[k] \}_{k=0}^6 &= \{ -1, 1, -1, -1,-1, 1, -1 \}, \quad
\{ D_8[k] \}_{k=0}^6 = \{ -1, -1, 1, -1, -1, -1, 1 \}.
\end{align*}
Fig.~\ref{multi_channel_demo}  illustrates how a source object is coded and decoded 
in a multi-channel system. 
The source object is a  $131 \times 131$ pixels ``camera man'' image and the distance between two pinholes is $60$ pixels. 
The grey level is normalized to be in the range of $[0, 255]$.
 
The second simulation is for URA (shown in Fig.~\ref{URA_demo}). It uses the same source object and following aperture:
\begin{align} \left[
  \begin{array}{lcccl}
      0   & 1   &  1  &   1   &  1 \\
      0   & 0   & 1   &  1   &  0 \\
      0   & 1   & 0   &  0   &  1 \\
  \end{array} \right],
\end{align}     
which is a $3 \times 5$ array that comes from the m-sequence of length $15$ (See Section \ref{subURA} for details).
In the implementation, we use the arrangement suggested in \cite{fenimoreCAI_URA1978}, i.e. a 
$6 \times 10$ aperture composed of a periodic extension of the basic $3 \times 5$ patterns, and a $3 \times 5$ 
decoding array.

\begin{figure*}
     \begin{center}
        \subfigure[
            The upper-left image, ``cameraman'', is the source image.
     From the upper-middle to the bottom-right, the images are the coded images from
   apertures $C_1,\cdots,C_8$ in each channel.
   ]{
            \label{encode}
            \includegraphics[width=0.9 \textwidth]{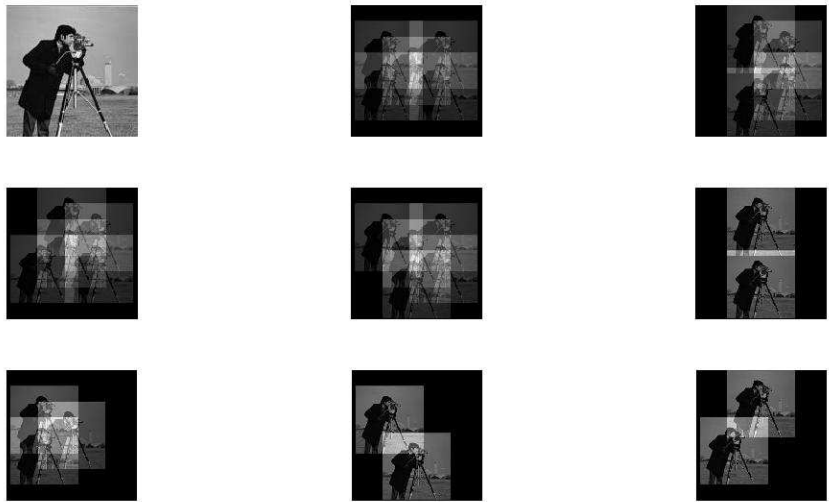}
        }%
        \\ 
        \subfigure[
            From the upper-left to the bottom-middle, the images are the decoded results from $D_1,\cdots,D_8$ in   each channel. The bottom-right image is the reconstructed image, coming from the addition of the eight    decoded results.
  ]{
            \label{decode}
            \includegraphics[width=0.9 \textwidth]{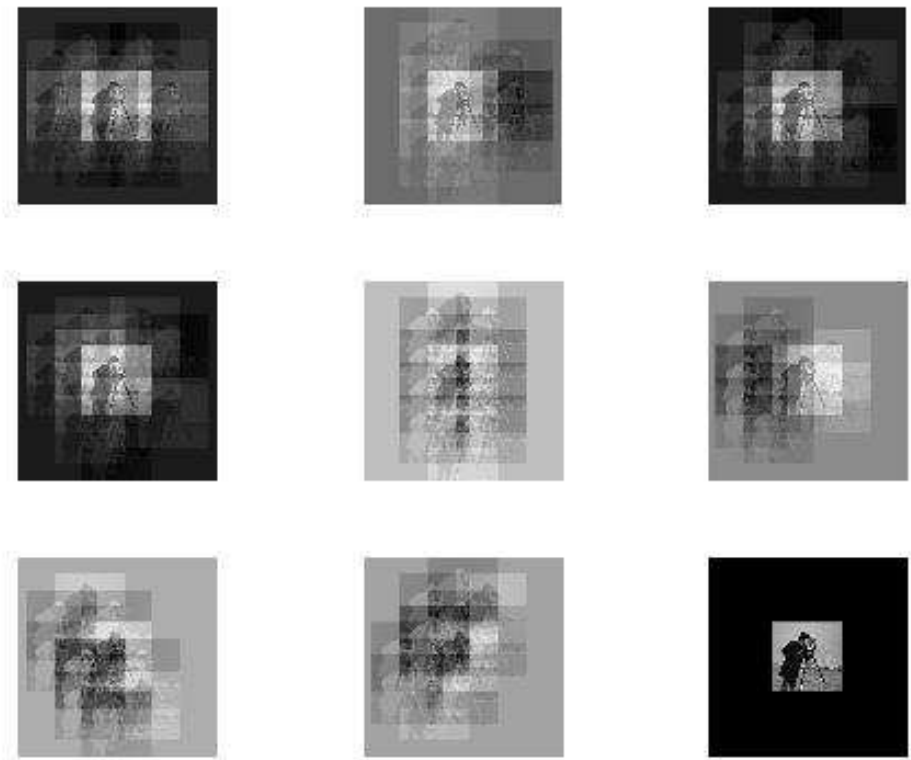}
        }
    \caption{
         Demonstration of the encoding and decoding process of a multi-channel CAI system
     }
     \label{multi_channel_demo}
     \end{center}
\end{figure*}

\begin{figure*}
     \begin{center}
        \subfigure[
            This figure shows the coded image of the URA system. The coded aperture produces a cyclic version
     of the basic aperture pattern.
   ]{
            \label{periodic_encode}
            \includegraphics[width=0.65 \textwidth]{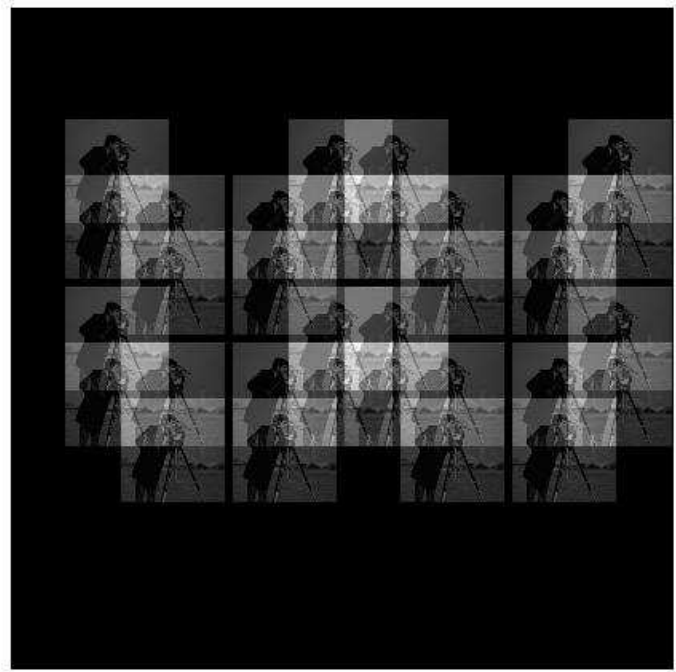}
        }%
        \\ 
        \subfigure[
            This figure shows the decoded image. The source image is reconstructed in the center area. 
  ]{
            \label{periodic_decode}
            \includegraphics[width=0.7 \textwidth]{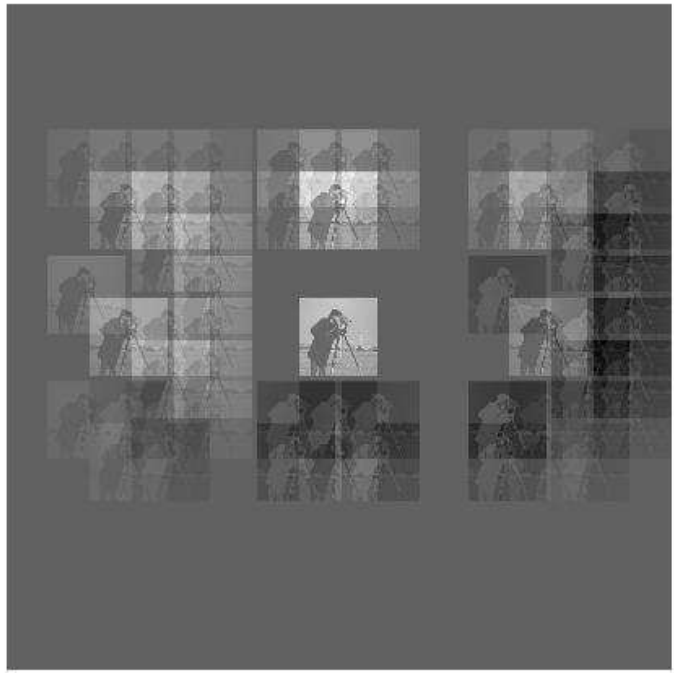}
        }
    \caption{
         Demonstration of the encoding and decoding process of a URA-based CAI system
     }
        \label{URA_demo}
       \end{center}
\end{figure*}

\appendices
\section{Proof of Theorem \ref{M2NonexistenceThm}} \label{appendix_a}

The following lemmas are helpful to the proof of Theorem \ref{M2NonexistenceThm}.

\begin{lemma} \label{4additionLemma}
If complex numbers $\alpha_1, \cdots, \alpha_{4}$ are unimodular and satisfy $\sum_{k=1}^{4} \alpha_k = 0$, then they contain $2$ opposite pairs.
\end{lemma}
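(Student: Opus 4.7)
The plan is to exploit the fact that the four unit complex numbers can be grouped arbitrarily into two pairs, and then argue that at least one grouping produces two opposite pairs. Specifically, I would pair them as $(\alpha_1, \alpha_2)$ and $(\alpha_3, \alpha_4)$ and set $\beta = \alpha_1 + \alpha_2$, so that the hypothesis $\sum_k \alpha_k = 0$ forces $\alpha_3 + \alpha_4 = -\beta$. The proof then splits on whether $\beta$ vanishes.

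If $\beta = 0$, we are done immediately: $(\alpha_1, \alpha_2)$ and $(\alpha_3, \alpha_4)$ are themselves the two opposite pairs. The substantive case is $\beta \neq 0$. Here I would invoke the following geometric observation: two points on the unit circle with a prescribed nonzero sum $\beta$ are uniquely determined up to reflection across the real line through $0$ and $\beta$, and their distance from that line is determined entirely by $|\beta|$. Concretely, writing $\beta = |\beta| e^{i\phi}$, one has $\{\alpha_1, \alpha_2\} = \{ e^{i(\phi+\psi)}, e^{i(\phi-\psi)}\}$ with $2\cos\psi = |\beta|$, and analogously $\{\alpha_3, \alpha_4\} = \{ e^{i(\phi + \pi + \psi')}, e^{i(\phi + \pi - \psi')}\}$ with $2\cos\psi' = |-\beta| = |\beta|$, so $\psi' = \pm \psi \pmod{2\pi}$.

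Either choice of sign then yields opposite pairs by inspection:
\begin{align*}
\psi' = \psi &\ \Longrightarrow\ \alpha_3 = -\alpha_1,\ \alpha_4 = -\alpha_2, \\
\psi' = -\psi &\ \Longrightarrow\ \alpha_3 = -\alpha_2,\ \alpha_4 = -\alpha_1.
\end{align*}
In both subcases the four numbers decompose into two pairs of opposites, as required.

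The whole argument is short; the only mild obstacle is phrasing the geometric step cleanly. One could replace the trigonometric parametrization by a direct algebraic check: given $|\alpha_1| = |\alpha_2| = 1$ and $\alpha_1 + \alpha_2 = \beta \neq 0$, the pair $\{\alpha_1, \alpha_2\}$ is precisely the solution set of the quadratic $z^2 - \beta z + \alpha_1 \alpha_2 = 0$, where $\alpha_1 \alpha_2 = \beta/\bar{\beta}$ is forced by $\overline{\alpha_1} = \alpha_2^{-1} \cdot (\alpha_1 \alpha_2)^{-1} \cdot \alpha_1 \alpha_2 = \ldots$ (i.e.\ by the unit-modulus condition combined with $\overline{\beta} = \overline{\alpha_1} + \overline{\alpha_2} = \alpha_1^{-1} + \alpha_2^{-1} = \beta/(\alpha_1 \alpha_2)$). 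The same identity applied to $\{\alpha_3, \alpha_4\}$ with sum $-\beta$ gives product $(-\beta)/\overline{(-\beta)} = \beta/\bar\beta = \alpha_1 \alpha_2$, so $\{\alpha_3, \alpha_4\}$ are the roots of $z^2 + \beta z + \alpha_1 \alpha_2 = 0$, which is obtained from the first quadratic by $z \mapsto -z$. Hence $\{\alpha_3, \alpha_4\} = \{-\alpha_1, -\alpha_2\}$, and the two opposite pairs are exhibited.
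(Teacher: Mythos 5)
Your proof is correct and follows essentially the same route as the paper's: group the four numbers into two pairs, observe that the two pair-sums are negatives of one another, and use the fact that a pair of unit complex numbers is determined by its (nonzero) sum. You are in fact somewhat more careful than the paper, which argues via the common bisector of the two pairs without separately treating the degenerate case $\alpha_1+\alpha_2=0$; your algebraic variant via the forced product $\alpha_1\alpha_2=\beta/\bar{\beta}$ also makes the uniqueness step fully explicit.
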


\textit{Proof of Lemma \ref{4additionLemma}}:

Let
$y = \alpha_1 + \alpha_2, \, z = (- \alpha_3 )+(- \alpha_4),$
then $y=z$. Because $|\alpha_1| = |\alpha_2| = 1$, $y$ is on the bisector of $\alpha_1$ and $\alpha_2$. Similarly, $z$ is on the bisector of $-\alpha_3$ and $-\alpha_4$. Since $y=z$, We have $\alpha_1=-\alpha_3$ or $\alpha_1= -\alpha_4$.

\begin{lemma} \label{root_lemma2}
If roots of unity $\alpha_1,\alpha_2$ satisfy $|\alpha_1 - 2\alpha_2| = 1$, then $\alpha_1 = \alpha_2$.
\end{lemma}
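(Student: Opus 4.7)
\textbf{Proof proposal for Lemma \ref{root_lemma2}.} The plan is to reduce the statement to a single-variable problem by exploiting the unimodularity. Since $|\alpha_2|=1$, dividing through gives
\begin{align*}
1 = |\alpha_1 - 2\alpha_2| = |\alpha_2| \cdot \left| \frac{\alpha_1}{\alpha_2} - 2 \right| = |\beta - 2|,
\end{align*}
where $\beta := \alpha_1/\alpha_2$ is again unimodular (indeed a root of unity, since ratios of roots of unity are roots of unity, though unimodularity alone suffices). So the claim reduces to showing that the only unimodular $\beta$ with $|\beta - 2| = 1$ is $\beta = 1$.

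Writing $\beta = e^{i\theta}$ and expanding,
\begin{align*}
|\beta - 2|^2 = (\cos\theta - 2)^2 + \sin^2\theta = 5 - 4\cos\theta,
\end{align*}
so $|\beta - 2| = 1$ forces $\cos\theta = 1$, hence $\beta = 1$ and therefore $\alpha_1 = \alpha_2$. Geometrically, the same conclusion comes from observing that the unit circle centered at the origin and the unit circle centered at $2\alpha_2$ are internally tangent at the single point $\alpha_2$, so any $\alpha_1$ lying in both must equal $\alpha_2$.

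This argument is short and computational; I do not anticipate any real obstacle. The only subtlety worth stating explicitly is the first reduction step (multiplying by $\overline{\alpha_2}/|\alpha_2|$), which uses only that $\alpha_2$ is unimodular, not that it is a root of unity; in particular the lemma holds in the slightly stronger unimodular setting, which is the generality actually used later in the proof of Theorem \ref{M2NonexistenceThm}.
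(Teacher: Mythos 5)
Your proof is correct and is essentially the same computation as the paper's: expanding $|\alpha_1-2\alpha_2|^2$ and concluding that $\operatorname{Re}(\alpha_1\bar{\alpha}_2)=1$ forces $\alpha_1\bar{\alpha}_2=1$ (your $\cos\theta=1$ with $\beta=\alpha_1/\alpha_2$ is the same statement after normalization). The observation that only unimodularity is needed is accurate but does not change the argument.
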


\textit{Proof of Lemma \ref{root_lemma2}}:

The identity $1 = (\alpha_1-2\alpha_2)(\bar{\alpha_1}- 2 \bar{\alpha_2}) = 1 - 2 (\alpha_1 \bar{\alpha_2} + \bar{\alpha_1} \alpha_2 ) +4$ implies $\alpha_1 \bar{\alpha_2} = 1$, i.e. $\alpha_1=\alpha_2$.

\vspace{0.3cm}

\textit{Proof of Theorem \ref{M2NonexistenceThm}}:

Assume that there exists a complementary array pair. Writing down
(\ref{eqn11}) explicitly we obtain the following system of equations  consisting of $9$ equations and $14$ variables
$\{ x_k \}_{k=0}^6 \cup \{ y_k \}_{k=0}^6$.
%
\begin{align}
x_1 \bar{x_3} + x_6 \bar{x_4} + y_1 \bar{y_3} + y_6 \bar{y_4} &= 0 \label{e1} \\
x_0 \bar{x_1} +x_3 \bar{x_2} + x_5 \bar{x_6} + x_4 \bar{x_0} 
 		     + y_0 \bar{y_1} +y_3 \bar{y_2} + y_5 \bar{y_6} + y_4 \bar{y_0} &= 0 \label{e2} \\
x_1 \bar{x_4} +y_1 \bar{y_4} &= 0 \label{e3} \\
x_2 \bar{x_4} + x_1 \bar{x_5} + y_2 \bar{y_4} + y_1 \bar{y_5} &= 0 \label{e4} \\
x_0 \bar{x_2} +x_6 \bar{x_1} + x_4 \bar{x_3} + x_5 \bar{x_0} 
		     + y_0 \bar{y_2} +y_6 \bar{y_1} + y_4 \bar{y_3} + y_5 \bar{y_0} &= 0 \label{e5} \\
x_2 \bar{x_5} +y_2 \bar{y_5} &= 0 \label{e6} \\
x_3 \bar{x_5} + x_2 \bar{x_6} + y_3 \bar{y_5} + y_2 \bar{y_6} &= 0 \label{e7} \\
x_0 \bar{x_3} +x_1 \bar{x_2} + x_5 \bar{x_4} + x_6 \bar{x_0} 
		     + y_0 \bar{y_3} +y_1 \bar{y_2} + y_5 \bar{y_4} + y_6 \bar{y_0} &= 0 \label{e8} \\
x_3 \bar{x_6} +y_3 \bar{y_6} &= 0 \label{e9}
\end{align}
%
We only need to prove that the above  system of equations have no solution on the unit circle.
Assume without loss of generality that $x_1 = y_1 = 1$.
After simplifying Equations (\ref{e3}), (\ref{e1}), (\ref{e6}), (\ref{e9}) we have
\begin{align}
y_4 &= -x_4, \label{e3new} \\
y_3 &= -(x_3 + (\bar{x_6} - \bar{x_4} ) x_4) = -x_3 - \bar{x_6} x_4 +1,   \label{e1new} \\
y_5 &= - \bar{x_2} x_5 y_2,  \label{e6new} \\
y_6 &= - \bar{x_3} x_6 y_3.  \label{e9new}
\end{align}
From Equation  (\ref{e1new}), we obtain $y_3 + x_3 + \bar{x_6} x_4 - 1 = 0$. Due to Lemma \ref{4additionLemma}, we have three cases to consider:
\\
Case A: $y_3 = 1, x_4 = - x_3 x_6 $; \quad
Case B: $x_3 = 1, y_3 = - x_4 \bar{x_6} $; \quad
Case C: $x_6 = x_4, y_3 = - x_3$.

\subsection{Case A: $y_3 = 1, x_4 = - x_3 x_6 $}

From Equations  (\ref{e3new}), (\ref{e6new}), and (\ref{e9new}), we eliminate variables $x_4, y_3, y_4, y_5, y_6$ in Equations (\ref{e4}) and (\ref{e7}) and obtain
\begin{align}
 - x_2 \bar{x_3} \bar{x_6} + \bar{x_5} +  \bar{x_3}\bar{x_6} y_2- x_2 \bar{x_5} \bar{y_2} &= 0 \label{e4newnew} \\
 x_3 \bar{x_5} + x_2 \bar{x_6} - x_2 \bar{x_5} \bar{y_2} - x_3 \bar{x_6} y_2 &= 0 \label{e7newnew}.
\end{align}
We may write Equations (\ref{e4newnew}) and (\ref{e7newnew}) as
\begin{align}
x_5 \bar{x_3} (x_2 - y_2) &= x_6 (1-x_2 \bar{y_2}) \label{e1final} \\
x_5 (\bar{x_3} - \bar{x_2} y_2) &=  x_6 (\bar{x_3} \bar{y_2} - \bar{x_2} ) \label{e2final}.
\end{align}
If $x_2 \neq y_2$, Equation (\ref{e1final}) gives
$$
x_5 = x_3 \frac{1-x_2 \bar{y_2} }{x_2 - y_2} x_6 =  -x_3 \bar{y_2} x_6.
$$
If $x_2 \neq x_3  y_2$, Equation (\ref{e2final}) gives
$$
x_5 = \frac{\bar{x_3} \bar{y_2} - \bar{x_2} }{\bar{x_3} - \bar{x_2} y_2} x_6 = \bar{y_2} x_6.
$$
So there are four cases to consider that further eliminate the variables. 

\textit{Case A1:} \quad  $x_5 =  -x_3 \bar{y_2} x_6, \, x_5 =  \bar{y_2} x_6$

Clearly, $x_3 = -1$.  Eliminating variables in Equations (\ref{e2}) and (\ref{e8}) we obtain
\begin{align*}
(x_0 - \bar{x_2}  - \bar{x_2} -x_6 \bar{y_0}) + (\bar{y_2} + x_6 \bar{x_0} + y_0 + \bar{y_2}) &=0 \\ 
-(x_0 - \bar{x_2}  - \bar{x_2} -x_6 \bar{y_0}) +(\bar{y_2} + x_6 \bar{x_0} + y_0 + \bar{y_2}) &=0 
\end{align*}
which is equivalent to
\begin{align}
x_0 - \bar{x_2}  - \bar{x_2} -x_6 \bar{y_0} &= 0 \label{new_added1} \\
 \bar{y_2} + x_6 \bar{x_0} + y_0 + \bar{y_2} &= 0 \label{new_added2}.
\end{align}
We further obtain
\begin{align}
&y_0 = x_6 (\bar{x_0} - 2 x_2) \label{dj1} \\
&y_2 = \bar{x_6} (\bar{x_2} -  x_0) \label{dj2}.
\end{align}
 Equation (\ref{dj1}) gives $| \bar{x_0} - 2 x_2 | = 1$, which further implies that $\bar{x_0}  = x_2$.
This is a contradiction to Equation (\ref{dj2}).

\textit{Case A2:} \quad $x_2 = y_2, x_5 =  \bar{y_2} x_6$

We only need to check the validity of Equations (\ref{e2}), (\ref{e5}), and (\ref{e8}). We write them in terms of $5$ variables
$x_0, x_2, x_3, x_6, y_0$:
\begin{align}
x_0 + x_3 \bar{x_2} + \bar{x_2}  - x_3 x_6 \bar{x_0} + y_0 + \bar{x_2} + \bar{x_2} x_3 + x_3 x_6 \bar{y_0} &= 0 \label{1B3} \\
x_0 \bar{x_2} + \bar{x_2} x_6 \bar{x_0} + y_0 \bar{x_2} - \bar{x_3} x_6 + x_3 x_6 - \bar{x_2} x_6 \bar{y_0} &= 0 \nonumber \\
x_0 + x_3 \bar{x_2} - \bar{x_2} +x_3 x_6 \bar{x_0} + y_0 x_3 + \bar{x_2} x_3 - \bar{x_2} - x_6 \bar{y_0} &= 0 \label{1B9}.
\end{align}
In fact, a contradiction can be obtained from Equations (\ref{1B3}) and (\ref{1B9}). Taking the sum and difference of the two equations, we obtain:
\begin{align}
(x_3+1) y_0 + (x_3-1) x_6 \bar{y_0} &= - 4 \bar{x_2} x_3 - 2 x_0  \label{1BAdd} \\
(x_3+1) x_6 \bar{y_0} - (x_3-1) y_0 &= - 4 \bar{x_2} + 2 x_3 x_6 \bar{x_0} \label{1BSub}.
\end{align}

%

If we have a valid solution $ (x_2, x_3, x_0, y_0, x_6)$, it is easy to see that $ (\xi^{-1} x_2, x_3, \xi x_0, \xi y_0, \xi^2 x_6)$ is also a valid solution for any unimodular complex number $\xi$.
Therefore, we only need to consider the case $x_6=1$.
Replacing $x_6=1$ into Equation (\ref{1BSub}), multiplying the equation with $-\bar{x_3}$, and then taking the conjugate, we obtain
\begin{align}
& - (x_3+1) y_0 - (x_3-1) \bar{y_0} =  4 x_2 x_3 - 2 x_0  \label{1BSubChange}.
\end{align}
Adding Equations (\ref{1BAdd}) and (\ref{1BSubChange}) gives
\begin{equation} \label{1BKey}
x_0 = (x_2 - \bar{x_2}) x_3.
\end{equation}
From the identity  $1 = |x_0 | = |x_2 - \bar{x_2}| |x_3| = |x_2 - \bar{x_2}| $, $x_2$ is in the form of
\begin{equation} \label{1Bc2}
\frac{\sqrt{3}}{2} \delta_1 + \frac{i}{2}\delta_2, \quad \delta_1,\delta_2 \in \{ 1, -1 \} .
\end{equation}
Combining Equations (\ref{1Bc2}) and (\ref{1BKey}) gives $x_0 = i \delta_2 x_3 $. Furthermore, Equation (\ref{1BAdd}) is simplified to be
\begin{align}
& (x_3+1) y_0 + (x_3-1) \bar{y_0} = -2 \sqrt{3} \delta_1 x_3 \label{1BContra}.
\end{align}
Equation (\ref{1BContra}) implies that
$$2 \sqrt{3} = |(x_3+1) y_0 + (x_3-1) \bar{y_0}| \leq |x_3+1|+ |x_3-1| \leq 2 \sqrt{2}, $$
which is a contradiction.

\textit{Case A3:} \quad $ x_5 = - x_3 \bar{y_2} x_6, x_2 = x_3 y_2 $

First, we rewrite Equations (\ref{e2}) and (\ref{e8}) in terms of $x_0, x_3, x_6, y_0, y_2$:
\begin{align}
x_0 + 2\bar{y_2} - 2 x_3 \bar{y_2} - x_3 x_6 \bar{x_0} + y_0  +x_3 x_6 \bar{y_0} &=0 \label{1C3} \\
x_0 \bar{x_3} + 2 \bar{x_3} \bar{y_2} + 2 \bar{y_2} + x_6 \bar{x_0} + y_0 - \bar{x_3} x_6 \bar{y_0} &= 0 \label{1C9}.
\end{align}

If we have a valid solution $(y_2, x_3, x_0, y_0, x_6)$, it is easy to see that $ (\xi^{-1} y_2, x_3, \xi x_0, \xi y_0, \xi^2 x_6)$ is also a valid solution for any unimodular complex number $\xi$.
Therefore, we only need to consider the case $x_6=1$.
By computing Equation (\ref{1C3}) + $x_3 \cdot $ Equation (\ref{1C9}) and Equation (\ref{1C3}) - $x_3 \cdot$ Equation (\ref{1C9}) we obtain
\begin{align}
(x_3+1) y_0 + (x_3-1) \bar{y_0} & =  -4 \bar{y_2} - 2 x_0 \label{1CAdd} \\
-(x_3+1) \bar{y_0} + (x_3-1) y_0 &= -4 x_3 \bar{y_2} - 2 x_3 \bar{x_0}  \label{1CSub}.
\end{align}
Multiplying Equation  (\ref{1CSub}) by $\bar{x_3}$, and then taking the conjugate, we obtain
\begin{align}
& - (x_3+1) y_0 - (x_3-1) \bar{y_0} =  -4 y_2 - 2 x_0   \label{1CSubChange}.
\end{align}
Adding Equations  (\ref{1CAdd}) and (\ref{1CSubChange}) we obtain
\begin{equation*}
x_0 = -(y_2 + \bar{y_2}).
\end{equation*}
Thus, $y_2$ is in the form of
\begin{equation*}
y_2 = \frac{1}{2} \delta_1 + \frac{\sqrt{3} i }{2}\delta_2, \quad \delta_1,\delta_2 \in \{ 1, -1 \}.
\end{equation*}
Furthermore,
\begin{align*}
& x_0 = - \delta_1 \\
& (x_3+1) y_0 + (x_3-1) \bar{y_0} = -2 \sqrt{3} \delta_2 ,
\end{align*}
which implies that $2\sqrt{3} \leq 2 \sqrt{2}$.

\textit{Case A4:} \quad  $x_2 = y_2, x_2 = x_3 y_2 $

Similar to case A1, Equations  (\ref{e2}) and (\ref{e8}) imply
\begin{align}
 - (2 \bar{x_2} + x_0) &=  y_0 \label{dj3} \\
x_5 \bar{x_6} - x_6 (\bar{x_0} + x_2) &= 0 \label{dj4}.
\end{align}
Equation  (\ref{dj3}) implies that  $\bar{x_2}  = - x_0$, which is a contradiction to Equation  (\ref{dj4}).

\subsection{Case B:  $x_3 = 1, y_3 = - x_4 \bar{x_6} $}

From Equation  (\ref{e4}) and (\ref{e7}), we obtain
\begin{align}
(\bar{x_5} - y_2 \bar{x_4})  (1-x_2 \bar{y_2}) &=0 \label{dj5} \\
(x_2 \bar{x_6} + y_2 \bar{x_4} ) (1+x_4 \bar{x_5} \bar{y_2}) &=0 \label{dj6}.
\end{align}
If $y_2 \neq x_4 \bar{x_5}$, Equation (\ref{dj5}) gives $x_2 = y_2$. If $x_2 \neq - x_6 \bar{x_4} y_2$, Equation (\ref{dj6}) gives $x_5 = - x_4 \bar{y_2}$. We therefore have the following four cases to discuss.


\textit{Case B1:} \quad  $y_2 = x_4 \bar{x_5}, x_2 = - x_6 \bar{x_4} y_2 = - x_6 \bar{x_5}$

First, we rewrite Equations  (\ref{e5}) and (\ref{e8}) in terms of
$x_0,x_4,x_5,x_6,y_0$:
\begin{align}
2x_6 - x_0 x_5 \bar{x_6} + 2 x_4 + \bar{x_4} x_5 y_0 + x_4 x_5 \bar{x_6} \bar{y_0} +\bar{x_0} x_5 &= 0 \label{2A6} \\
2\bar{x_4} x_5 + x_0 -2 x_5 \bar{x_6} - \bar{x_4}x_6 y_0 + x_4 \bar{y_0} + \bar{x_0} x_6 &=0 \label{2A9}.
\end{align}

%
If we have a valid solution $(x_0, y_0, x_4, x_6, x_5)$, it is easy to see that $ (\xi x_0, \xi y_0, \xi^2 x_4, \xi^2 x_6, \xi^3 x_5)$ is also a valid solution for any unimodular complex number $\xi$.
Therefore, we only need to consider the case $x_6=1$.
Taking  $x_6=1$ into Equations  (\ref{2A6}) and (\ref{2A9}), multiplying Equation (\ref{2A9}) by $-\bar{x_5}$, and taking its conjugate, we obtain
\begin{align}
2 - x_0 x_5 + 2 x_4 + \bar{x_4} x_5 y_0 + x_4 x_5 \bar{y_0} + \bar{x_0} x_5 &= 0 \label{2A6Change} \\
2 - x_0 x_5 -2 x_4 - \bar{x_4} x_5 y_0 + x_4 x_5 \bar{y_0} - \bar{x_0} x_5 &= 0. \label{2A9Change}
\end{align}
Adding Equations (\ref{2A6Change}) and (\ref{2A9Change}) gives
\begin{equation} \label{new_add2}
4 = 2 x_0 x_5 - 2 x_4 x_5 \bar{y_0}.
\end{equation}
Because $ |2x_0 x_5 | = | 2 x_4 x_5 \bar{y_0} | = 2 $, the only possibility is
\begin{equation} \label{2AContra}
x_0 x_5 =1, x_4 x_5 \bar{y_0} = -1.
\end{equation}
Taking Equation  (\ref{2AContra}) into (\ref{2A6Change}) we obtain $2x_4 = 0$, which is a contradiction.

\textit{Case B2:} \quad  $x_2 = y_2, x_2 = - x_6 \bar{x_4} y_2$

Clearly,  $x_6 = -x_4$.
Because $y_3 = - x_4 \bar{x_6} = 1$, this case is covered by Case A.

\textit{Case B3:} \quad $y_2 = x_4 \bar{x_5}, x_5 = - x_4 \bar{y_2}$

Clearly, $y_2 = -x_4 \bar{x_5}$.
Because $y_2 = x_4 \bar{x_5} = -y_2 $,This case is not possible.

\textit{Case B4:} \quad $x_2 = y_2, x_5 = - x_4 \bar{y_2}= - x_4 \bar{x_2} $

First, we rewrite Equations  (\ref{e2}) and (\ref{e5}) in terms of
$x_0,x_2,x_4,x_6,y_0$:
\begin{align}
2 \bar{x_2} - 2 \bar{x_2} x_4 \bar{x_6} + x_0 + y_0 + x_4 \bar{x_0} - x_4\bar{y_0} &= 0 \label{2D3} \\
2 x_4 + 2x_6 + x_0 \bar{x_2} + \bar{x_2} y_0 - \bar{x_2} x_4 \bar{x_0} + \bar{x_2} x_4 \bar{y_0}  &= 0\label{2D6}.
\end{align}
%
%
If we have a valid solution $(x_2, x_0, y_0, x_4, x_6)$, it is easy to see that $ (\xi^{-1} x_2, \xi x_0, \xi y_0, \xi^2 x_4, \xi^2 x_6)$ is also a valid solution for any unimodular complex number $\xi$.
Therefore, we only need to consider the case $x_4=1$.
Taking the conjugate of Equation  (\ref{2D3}), multiplying Equation  (\ref{2D6}) by $\bar{x_2}$, and letting $x_4 = 1$, we have
\begin{align}
& 2x_2 - 2 x_2 x_6 + \bar{x_0} + \bar{y_0} + x_0 - y_0 = 0 \label{2D3Change} \\
& 2x_2 + 2 x_2 x_6 +x_0 + y_0 - \bar{x_0} +\bar{y_0} = 0. \label{2D6Change}
\end{align}
Adding Equations  (\ref{2D3Change}) and (\ref{2D6Change}), we obtain
\begin{equation} \label{new_add3}
4x_2 + 2 x_0 + 2\bar{y_0} = 0.
\end{equation}
Because $|4x_2|=4, |2x_0| = | 2 \bar{y_0} | = 2 $, the only possibility is
\begin{equation} \label{2DContra}
x_0 = -x_2,  \bar{y_0} = -x_2.
\end{equation}
Applying Equation  (\ref{2DContra}) to (\ref{2D3Change}) gives $2x_2 x_6 = 0$, which is a contradiction.

\subsection{Case C: $x_6 = x_4, y_3 = - x_3$ }

From Equations  (\ref{e4}) and (\ref{e7}), we obtain
\begin{align}
 (\bar{x_5} - \bar{x_4} y_2) (1-x_2 \bar{y_2}) &=0  \label{dj7} \\
 (\bar{x_5} x_3 + \bar{x_4} y_2 )(x_2 + y_2) &=0   \label{dj8}.
\end{align}
If $y_2 \neq x_2$, Equation (\ref{dj7}) gives  $y_2 = x_4 \bar{x_5}$. If $y_2 \neq -x_2$, Equation (\ref{dj8}) gives $y_2 = -x_3 x_4 \bar{x_5} $.
So there are four cases to consider.

\textit{Case C1:} \quad $y_2 = x_2, y_2 = -x_3 x_4 \bar{x_5}$

First, we rewrite Equations  (\ref{e5}) and (\ref{e8}) in terms of
$x_0,x_2,x_3,x_4,y_0$:
\begin{align}
2x_4+2 \bar{x_3} x_4 + \bar{x_2}  x_0 + \bar{x_2} y_0 - \bar{x_2} x_3 x_4 \bar{x_0} + \bar{x_2} x_3 x_4\bar{y_0} &= 0 \label{3A6} \\
2\bar{x_2} - 2 \bar{x_2} x_3+ \bar{x_c} x_0 - \bar{x_3} y_0+x_4\bar{x_0} + x_4 \bar{y_0}  &= 0 \label{3A9}.
\end{align}
If we have a valid solution $(x_2, x_3, x_0, y_0, x_4)$, it is easy to see that $ (\xi^{-1}  x_2, x_3, \xi  x_0, \xi y_0, \xi^2 x_4)$ is also a valid solution for any unimodular complex number $\xi$.
Therefore, we only need to consider the case $x_4=1$.
Taking the conjugate of Equation  (\ref{3A9}), multiplying it by $\bar{x_2}$, and  letting $x_4 = 1$, Equations  (\ref{3A6})
and (\ref{3A9}) give
\begin{align}
2+ 2 \bar{x_3} + \bar{x_2}x_0 + \bar{x_2}y_0 - \bar{x_2}x_3\bar{x_0} +\bar{x_2}x_3\bar{y_0} &= 0 \label{3A6Change} \\
2- 2 \bar{x_3} + \bar{x_2}x_0 + \bar{x_2}y_0 +\bar{x_2}x_3\bar{x_0} -\bar{x_2}x_3\bar{y_0} &= 0. \label{3A9Change}
\end{align}
Adding Equations (\ref{3A6Change}) and (\ref{3A9Change}) gives
\begin{equation} \label{new_add4}
4 + 2 \bar{x_2} (x_0+y_0) = 0.
\end{equation}
Thus $ |x_0+y_0| = 2 $, the only possibility is
\begin{equation} \label{3AContra}
x_0 =  y_0 = -x_2.
\end{equation}
Applying Equation  (\ref{3AContra}) to (\ref{3A6Change}), we obtain $2\bar{x_3}= 0$, which is a contradiction.

\textit{Case C2:} \quad $y_2 = x_4 \bar{x_5}, y_2 = -x_2$

First, we rewrite Equations  (\ref{e2}) and (\ref{e5}) in terms of
$x_0,x_2,x_3,x_4,y_0$:
\begin{align}
2x_3 \bar{x_2} -2\bar{x_2}  +x_0 + y_0 + x_4 \bar{x_0} - x_4\bar{y_0} &= 0 \label{3B3} \\
2x_4 +2x_4 \bar{x_3}+x_0\bar{x_2} -y_0 \bar{x_2} -\bar{x_2}x_4\bar{x_0} - \bar{x_2}x_4 \bar{y_0}  &= 0 \label{3B6}.
\end{align}
If we have a valid solution $(x_2, x_3, x_0, y_0, x_4)$, it is easy to see that $ (\xi^{-1}  x_2, x_3, \xi  x_0, \xi y_0, \xi^2 x_4)$ is also a valid solution for any unimodular complex number $\xi$.
Therefore, we only need to consider the case $x_4=1$.
Taking the conjugate of Equation  (\ref{3B6}), multiplying it by $\bar{x_2}$, and letting $x_4 = 1$, Equations  (\ref{3B3}) and (\ref{3B6}) give
\begin{align}
 2x_3 \bar{x_2} -2\bar{x_2}  +x_0 + y_0 + \bar{x_0} - \bar{y_0}&= 0 \label{3B3Change} \\
 2x_3 \bar{x_2} +2\bar{x_2}  - x_0 - y_0 +\bar{x_0} - \bar{y_0}&= 0. \label{3B6Change}
\end{align}
Subtracting Equations (\ref{3B6Change}) and (\ref{3B3Change}), we obtain
\begin{equation} \label{new_add5}
4  \bar{x_2} - 2 (x_0+y_0) = 0.
\end{equation}
Thus $ |x_0+y_0| = 2 $, and the only possibility is
\begin{equation} \label{3BContra}
x_0 =  y_0 = \bar{x_2}.
\end{equation}
Applying Equation  (\ref{3BContra}) to (\ref{3B3Change}), we obtain $2x_3 \bar{x_2}= 0$, which is a contradiction.

\textit{Case C3:} \quad $y_2 = x_4 \bar{x_5}, y_2 = -x_3 x_4 \bar{x_5} $
This case is covered by Case A, because $x_3 = -1, y_3 = -x_3 = 1$.

\textit{Case C4:} \quad $y_2 = x_2, y_2 = -x_2$
This case is clearly not possible.

\section{Proof of Theorem \ref{M26PointNonexistenceThm}} \label{appendix_a2}

The proof follows a similar procedure to that of the proof of Theorem \ref{M2NonexistenceThm}.
The only difference is that the modulus of $x_0,y_0$  are changed from one to zero.
Related changes in the proof in Appendix \ref{appendix_a} are listed below.

Case A1: Equation (\ref{new_added1}) gives $2 \bar{x_2}  = 0$, which is a contradiction.

Case A2: Equation (\ref{1BAdd})  gives $4 \bar{x_2} x_3 = 0$, which is a contradiction.

Case A3: Equation (\ref{1CAdd})  gives $4 \bar{y_2} = 0$, which is a contradiction.

Case A4: Equation (\ref{dj3})  gives $ 2 \bar{x_2} = 0$, which is a contradiction.

Case B1: Equation (\ref{new_add2})  gives $4 = 0$, which is a contradiction.

Case B4: Equation (\ref{new_add3}) gives $4x_2 = 0$, which is a contradiction.

Case C1: Equation (\ref{new_add4}) gives $4 = 0$, which is a contradiction.

Case C2: Equation (\ref{new_add5}) gives $4  \bar{x_2} = 0$, which is a contradiction.

\section{Proof of Theorem \ref{complexConjecture}} \label{appendix_a3}

Let $\mathbb{Z}[\lambda]$ denote the polynomial ring over $ \mathbb{Z}$, and $\Phi_N(\lambda)$ denote the $N$th cyclotomic polynomial. 
Let $\xi = \exp(i 2 \pi / N)$, $U_N=\{\xi^j \mid j=0,\cdots,N-1\}$ be the group of $N$th roots of unity endowed with multiplication. 
For any $\eta \in U_N$, let $|\eta |$ denote the order of $\eta$ in the cyclic group $U_N$. 
Because $\sum_{m=1}^M x_m = 0$, there exists a polynomial $F(\lambda) = \sum_{j=0}^{N-1} f_j \lambda^j \in \mathbb{Z}[\lambda]$ such that $f_j \geq 0$ and $F(\xi) = 0$.  

We prove Theorem \ref{complexConjecture} using a sequence of lemmas.

\begin{lemma} \label{lemma_seq0}
Let  $p_k$ be distinct prime numbers and integers $r_k > 0, k = 1,2$. Then
\begin{align}
\Phi_{p_1}(\lambda^{p_1^{r_1-1}p_2^{r_2}} )  &= \prod\limits_{i=0}^{r_2}   \Phi_{p_1^{r_1} p_{2}^{i} }(\lambda ),   \label{cyclo_lemma2} \\
\Phi_{p_1}(\lambda^{p_1^{r_1-1}p_2^{r_2} })  &= \Phi_{p_1}(\lambda^{p_1^{r_1-1}p_2^{r_2-1}} )  \Phi_{p_1^{r_1}p_2^{r_2}}(\lambda), \label{time2} \\
 \Phi_{p_1}(\lambda^{p_1^{r_1-1}} ) &= \Phi_{p_1^{r_1}}(\lambda) . \label{time20}
\end{align}
Similar results hold if $p_1$ and $r_1$ are respectively  replaced with $p_2$ and $r_2$ in the above equations.  
\end{lemma}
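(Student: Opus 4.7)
My plan is to prove all three identities by analyzing the roots of the polynomials and applying the standard fact that $\lambda^N - 1 = \prod_{d \mid N} \Phi_d(\lambda)$ in $\mathbb{Z}[\lambda]$, so that each $N$th root of unity is a root of exactly one $\Phi_d$ (the one for which $d$ is its order). All three identities are equalities of monic polynomials in $\mathbb{Z}[\lambda]$, so it suffices to show they have the same roots with the same multiplicities, which I will reduce to a degree check.

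The main step is identity (\ref{cyclo_lemma2}). I would observe that $\alpha \in \mathbb{C}$ is a root of $\Phi_{p_1}(\lambda^{p_1^{r_1-1}p_2^{r_2}})$ if and only if $\alpha^{p_1^{r_1-1}p_2^{r_2}}$ is a primitive $p_1$th root of unity, equivalently $\alpha^{p_1^{r_1}p_2^{r_2}} = 1$ but $\alpha^{p_1^{r_1-1}p_2^{r_2}} \neq 1$. Writing $|\alpha| = p_1^a p_2^b$ with $0 \leq a \leq r_1$ and $0 \leq b \leq r_2$, this forces $a = r_1$ while leaving $b \in \{0,1,\dots,r_2\}$ unconstrained. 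Hence the roots of the left-hand side are exactly the primitive $p_1^{r_1}p_2^i$th roots of unity for $i=0,\dots,r_2$, each counted once (using that the map $\alpha \mapsto \alpha^{p_1^{r_1-1}p_2^{r_2}}$ has fibers of constant size and that $\Phi_{p_1}$ has simple roots). These are exactly the roots of $\prod_{i=0}^{r_2} \Phi_{p_1^{r_1}p_2^i}(\lambda)$. A quick degree check using $\phi(p_1^{r_1}p_2^i) = (p_1-1)p_1^{r_1-1}\phi(p_2^i)$ and the telescoping sum $1+\sum_{i=1}^{r_2}(p_2-1)p_2^{i-1} = p_2^{r_2}$ gives degree $(p_1-1)p_1^{r_1-1}p_2^{r_2}$ on both sides, so monicity closes the argument.

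Identity (\ref{time2}) is then immediate: apply (\ref{cyclo_lemma2}) once with exponent $r_2$ and once with $r_2-1$, and divide. Concretely, the product $\prod_{i=0}^{r_2}\Phi_{p_1^{r_1}p_2^i}(\lambda)$ factors as $\prod_{i=0}^{r_2-1}\Phi_{p_1^{r_1}p_2^i}(\lambda)$ times $\Phi_{p_1^{r_1}p_2^{r_2}}(\lambda)$, and each of the first two pieces is identified with $\Phi_{p_1}(\lambda^{p_1^{r_1-1}p_2^{r_2-1}})$ by (\ref{cyclo_lemma2}) applied with exponent $r_2-1$ in the second variable. Identity (\ref{time20}) is the same roots argument in the one-variable case: $\alpha$ is a root of $\Phi_{p_1}(\lambda^{p_1^{r_1-1}})$ iff $\alpha$ has order exactly $p_1^{r_1}$, which is the defining property of roots of $\Phi_{p_1^{r_1}}(\lambda)$. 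Alternatively, (\ref{time20}) can be recovered formally from (\ref{cyclo_lemma2}) by taking $r_2 = 0$ (the product collapses to $\Phi_{p_1^{r_1}}$).

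I do not foresee a real obstacle: the only thing that needs care is tracking the order of $\alpha$ in the cyclic group of $p_1^{r_1}p_2^{r_2}$th roots of unity to confirm that the root sets are disjoint across different $i$ and exhaust the roots of the left-hand side. The symmetric statements obtained by swapping $(p_1,r_1)$ and $(p_2,r_2)$ are proved by the identical argument with the roles of the two primes interchanged, and require no new ideas.
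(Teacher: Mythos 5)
Your proposal is correct and follows essentially the same route as the paper: identify the roots of both sides as roots of unity of order $p_1^{r_1}p_2^i$ (via the order computation in the cyclic group), then close the argument with the degree count $(p_1-1)p_1^{r_1-1}p_2^{r_2}$ and monicity. Deriving (\ref{time2}) by dividing two instances of (\ref{cyclo_lemma2}) and recovering (\ref{time20}) as the $r_2=0$ case is a harmless repackaging of what the paper dismisses as ``similar.''
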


\textit{Proof of Lemma \ref{lemma_seq0}}:

Since both sides are monic and have degree $(p_1-1)p_1^{r_1-1} = (p_1-1)p_1^{r_1-1}p_2^{r_2} $, 
it suffices to show that every zero of
$\Phi_{p_1^{r_1} p_2^{i_2}}(\lambda )$  is a zero of $\Phi_{p_1}(\lambda^{p_1^{r_1-1}} ) $.
If $\eta$ is a zero of $\Phi_{p_1^{r_1} p_2^{i_2} }(\lambda )$, then $|\eta| = p_1^{r_1} p_2^{i_2} $, which implies $|\eta^{p_1^{r_1-1}}| = |\eta|  / \textrm{gcd}(|\eta|,p_1^{r_1-1}) = p_1 $. Therefore, $\eta$ is also a zero of $\Phi_{p_1}(\lambda^{p_1^{r_1-1}} ) $.

The proof of Equations (\ref{time2}) and (\ref{time20}) is similar. 

\begin{lemma} \label{lemma_seq1}
If $n=1$, i.e. $N=p_1^{r_1}$, 
then there exists a polynomial  $A(\lambda) = \sum_{j=0}^{N/p_1-1}  a_{j} \lambda^j \in \mathbb{Z}[\lambda], \, a_{j} \geq 0$, such that 
$$
F(\lambda ) = \Phi_{p_1}(\lambda^{\frac{N}{p_1}} )  A(\lambda).
$$
\end{lemma}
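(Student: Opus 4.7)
The plan is straightforward and the main work is to identify $A(\lambda)$ explicitly and verify the sign and integrality of its coefficients. Since $\xi = \exp(i2\pi/N)$ with $N = p_1^{r_1}$ is a primitive $N$th root of unity and $F(\xi) = 0$, its minimal polynomial $\Phi_N(\lambda)$ over $\mathbb{Q}$ divides $F(\lambda)$. Because $\Phi_N(\lambda)$ is monic, Gauss's lemma (or direct polynomial division) yields a unique $A(\lambda) \in \mathbb{Z}[\lambda]$ with $F(\lambda) = \Phi_N(\lambda) A(\lambda)$.

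Next I would invoke Equation~(\ref{time20}) from Lemma~\ref{lemma_seq0} (with $p_1$ and $r_1$), which gives $\Phi_N(\lambda) = \Phi_{p_1^{r_1}}(\lambda) = \Phi_{p_1}(\lambda^{p_1^{r_1-1}}) = \Phi_{p_1}(\lambda^{N/p_1})$. Hence
\begin{equation*}
F(\lambda) = \Phi_{p_1}\bigl(\lambda^{N/p_1}\bigr)\, A(\lambda).
\end{equation*}
Since $\deg F \leq N-1$ and $\deg \Phi_{p_1}(\lambda^{N/p_1}) = (p_1-1)\,N/p_1 = N - N/p_1$, we have $\deg A \leq N/p_1 - 1$, so $A(\lambda) = \sum_{j=0}^{N/p_1 - 1} a_j \lambda^j$ with $a_j \in \mathbb{Z}$.

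To check nonnegativity, I would compare coefficients. Using $\Phi_{p_1}(\mu) = 1 + \mu + \cdots + \mu^{p_1-1}$,
\begin{equation*}
F(\lambda) = \sum_{k=0}^{p_1-1}\sum_{j=0}^{N/p_1-1} a_j\, \lambda^{kN/p_1 + j}.
\end{equation*}
Every integer $m \in \{0,1,\dots,N-1\}$ admits a unique decomposition $m = k\,N/p_1 + j$ with $0 \leq k < p_1$ and $0 \leq j < N/p_1$, so the exponents on the right are distinct. Matching the coefficient of $\lambda^{kN/p_1+j}$ with $f_{kN/p_1+j}$ yields $a_j = f_{kN/p_1+j}$ for every $k$. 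In particular $a_j = f_j \geq 0$ by the assumption on $F$, which is what we needed.

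The only potentially delicate point is showing that all the coefficients $f_{kN/p_1+j}$ actually coincide for $k = 0,\dots,p_1-1$ (which is a nontrivial consequence of $F(\xi) = 0$); however, this compatibility is automatic once the divisibility $\Phi_N(\lambda) \mid F(\lambda)$ is established and the polynomial $A$ is uniquely determined. Thus no further case analysis is required, and the nonnegativity of the $a_j$ follows directly from the nonnegativity of the $f_j$.
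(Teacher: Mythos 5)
Your proof is correct and follows essentially the same route as the paper's: divisibility of $F$ by the minimal polynomial $\Phi_N(\lambda)$, the identity $\Phi_N(\lambda)=\Phi_{p_1}(\lambda^{N/p_1})$ from Lemma~\ref{lemma_seq0}, the degree bound forcing $\deg A\le N/p_1-1$, and the coefficient comparison $a_j=f_j\ge 0$ (the paper only records the $k=0$ case, which is all that is needed). No gaps.
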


\textit{Proof of Lemma \ref{lemma_seq1}}:

First, $\Phi_N(\lambda)$ divides $F(\lambda)$, because $F(\lambda)$ annihilates $\xi$ and $\Phi_N(\lambda)$ is an irreducible and monic polynomial in the ring $\mathbb{Z}[\lambda]$. 
Besides this, Equation (\ref{time20}) gives $\Phi_{p_1}(\lambda^{N/p_1} ) = \Phi_N(\lambda)$.
Therefore, there exists a polynomial $A_k(\lambda) =\sum_{j=0}^{N/ p_1-1}  a_{j} \lambda^j  \in \mathbb{Z}[\lambda]$ such that $F(\lambda ) = \Phi_{p_1}(\lambda^{N/p_1} )  A(\lambda).$

Second, because $\deg (F) < N$, we have $\deg (A) = \deg (F) - \deg (\Phi_N(\lambda) )< p_1^{r_1-1}=N/p_1$. We note that $f_{j} = a_j, j = 0, \cdots, N/ p_1-1$ and that $f_j\geq 0$. Therefore, the coefficients of $A(\lambda)$ are nonnegative.

\begin{lemma} \label{lemma_seq2}

If $n=2$, i.e. $N = p_1^{r_1}  p_2^{r_2}$,  
then there exist polynomials $\hat{A}_k(\lambda) \in \mathbb{Z}[\lambda], k=1,2$ such that 
 \begin{align} \label{time1}
 \hat{A}_1(\lambda) \Phi_{p_1}(\lambda^{\frac{N}{p_1}} ) + \hat{A}_2(\lambda) \Phi_{p_2}(\lambda^{\frac{N}{p_2}} ) =\Phi_N(\lambda).
 \end{align}

\end{lemma}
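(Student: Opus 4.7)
My plan is to reduce the identity to a Bezout relation involving the two ``extra'' cyclotomic factors, and then verify coprimality over every residue field of $\mathbb{Z}[\lambda]$. By Lemma~\ref{lemma_seq0}, Equation (\ref{cyclo_lemma2}), we can peel off $\Phi_N(\lambda)$ from each of the two hypotheses: write $\Phi_{p_1}(\lambda^{N/p_1}) = \Phi_N(\lambda)\, P_1(\lambda)$ and $\Phi_{p_2}(\lambda^{N/p_2}) = \Phi_N(\lambda)\, P_2(\lambda)$, where
\begin{align*}
P_1(\lambda) = \prod_{i=0}^{r_2-1} \Phi_{p_1^{r_1} p_2^{i}}(\lambda), \qquad
P_2(\lambda) = \prod_{j=0}^{r_1-1} \Phi_{p_2^{r_2} p_1^{j}}(\lambda),
\end{align*}
both monic in $\mathbb{Z}[\lambda]$. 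Thus, Equation (\ref{time1}) is equivalent to exhibiting $\hat A_1,\hat A_2\in\mathbb{Z}[\lambda]$ with $\hat A_1 P_1 + \hat A_2 P_2 = 1$, i.e.\ to showing that the ideal $(P_1,P_2)\subseteq\mathbb{Z}[\lambda]$ is the unit ideal.

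Next I would reduce this to a prime-by-prime check. Every maximal ideal of $\mathbb{Z}[\lambda]$ has the form $(p,g(\lambda))$ with $p$ a rational prime and $g$ an irreducible factor of some polynomial in $\mathbb{F}_p[\lambda]$. So $(P_1,P_2)=\mathbb{Z}[\lambda]$ if and only if, for every prime $p$, the reductions $\bar P_1,\bar P_2\in\mathbb{F}_p[\lambda]$ satisfy $\gcd(\bar P_1,\bar P_2)=1$.

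I would then verify this for the three classes of primes. For $p\notin\{p_1,p_2\}$: every index $p_1^{r_1}p_2^{i}$ and $p_2^{r_2}p_1^{j}$ is coprime to $p$, so the mod-$p$ reductions of the corresponding cyclotomic polynomials remain separable and the roots in $\overline{\mathbb{F}_p}$ are still primitive roots of unity of the given orders; since the orders appearing in $P_1$ all have $p_1$-adic valuation $r_1$ while those in $P_2$ have $p_1$-adic valuation $<r_1$, the two sets of orders are disjoint, so $\bar P_1$ and $\bar P_2$ share no root. For $p=p_1$, I would invoke the classical congruence $\Phi_{p_1^{a} m}(\lambda)\equiv \Phi_{m}(\lambda)^{\varphi(p_1^{a})}\pmod{p_1}$ for $\gcd(m,p_1)=1$; applied termwise, $\bar P_1$ is (a power of) $\prod_{i=0}^{r_2-1}\Phi_{p_2^{i}}$, whose roots in $\overline{\mathbb{F}_{p_1}}$ have order at most $p_2^{r_2-1}$, whereas $\bar P_2$ is (a power of) $\Phi_{p_2^{r_2}}$, whose roots have order exactly $p_2^{r_2}$; again disjoint. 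The case $p=p_2$ is symmetric.

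The main obstacle, and the step that does the real work, is the analysis at the ramified primes $p=p_1,p_2$: one cannot just use the rational $\gcd$, because passing from $\mathbb{Q}[\lambda]$ to $\mathbb{Z}[\lambda]$ could introduce a content obstruction at exactly these primes. The cyclotomic reduction formula $\Phi_{p^{a}m}(\lambda)\equiv\Phi_m(\lambda)^{\varphi(p^{a})}\pmod p$ is what forces the order separation between $\bar P_1$ and $\bar P_2$, and thus closes that potential gap. Once that is established, Bezout in $\mathbb{Z}[\lambda]$ gives $\hat A_1 P_1+\hat A_2 P_2=1$; multiplying by $\Phi_N(\lambda)$ yields (\ref{time1}).
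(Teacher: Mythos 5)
Your proposal is correct, but it takes a genuinely different route from the paper. The paper's proof is constructive and elementary: writing $\mu=\lambda^{N/(p_1p_2)}$ and $T_t(\lambda)=1+\lambda+\cdots+\lambda^{t-1}$, it runs the Euclidean algorithm directly on the polynomials $T_{p_1}(\mu)$ and $T_{p_2}(\mu)$, using the identity $T_{t_1}(\lambda)=T_{t_2}(\lambda)\sum_{j=1}^{q}\lambda^{t_1-jt_2}+T_b(\lambda)$ for $t_1=t_2q+b$ to mirror the integer Euclidean algorithm on the coprime pair $(p_1,p_2)$; this terminates at $T_1=1$ and produces an explicit Bezout identity $\hat A_1(\mu)T_{p_1}(\mu)+\hat A_2(\mu)T_{p_2}(\mu)=1$ over $\mathbb{Z}$, which after multiplication by $\Phi_N(\lambda)$ and an appeal to Equation (\ref{time2}) gives (\ref{time1}). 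You instead cancel $\Phi_N$ first, reduce to showing that the ideal $(P_1,P_2)$ is all of $\mathbb{Z}[\lambda]$, and verify coprimality of the reductions $\bar P_1,\bar P_2$ over $\mathbb{F}_p$ for every rational prime $p$, with the ramified cases $p\in\{p_1,p_2\}$ handled via the congruence $\Phi_{p^am}(\lambda)\equiv\Phi_m(\lambda)^{\varphi(p^a)}\pmod p$. Your order-separation arguments in all three cases check out, and the reduction to maximal ideals of $\mathbb{Z}[\lambda]$ is sound since every such ideal has finite residue field. What each approach buys: the paper's argument is self-contained, avoids both the structure theory of maximal ideals of $\mathbb{Z}[\lambda]$ and the mod-$p$ cyclotomic reduction formula (neither of which is established in the paper), and yields the coefficients $\hat A_k$ explicitly; yours is non-constructive but more conceptual, and makes transparent exactly where integrality could fail (only at $p_1$ and $p_2$) and why it does not.
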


 \textit{Proof of Lemma \ref{lemma_seq2}}:

 First, Equation (\ref{cyclo_lemma2}) and its similar result (by replacing $p_1, r_1$ with $p_2,r_2$) imply that $\textrm{gcd}( \Phi_{p_1}(\lambda^{N/p_1} ), \Phi_{p_2}(\lambda^{N/p_2} ) ) =\Phi_N(\lambda)$. 
 
 Second, consider two polynomials $T_{t_k}(\lambda)=1+\lambda+\cdots + \lambda^{t_k-1}, \, k=1,2$, where $t_1 > t_2$ and $\textrm{gcd}(t_1,t_2)=1$. We apply Euclidean division to $t_1, t_2$ to obtain $t_1 = t_2 q+ b, \, 0 < b < t_2$. 
It is easy to observe that  $T_{t_1}(\lambda) = T_{t_2}(\lambda)\sum_{j=1}^q \lambda^{t_1-j t_2} + T_{b}(\lambda)$. If we continuously apply Euclidean division, we will find  polynomials $\hat{A}_k(\lambda) \in \mathbb{Z}[\lambda], k=1,2$ such that 
\begin{align} \label{time3}
\hat{A}_1(\lambda) T_{t_1}(\lambda) + \hat{A}_2(\lambda) T_{t_2}(\lambda)= 1
 \end{align}
  Replacing  $t_1,t_2,$ and $\lambda$ respectively by $p_1,p_2,$ and $\lambda^{N/(p_1 p_2)}$ in Equation (\ref{time3}), multiplying both sides by $\Phi_N(\lambda)$, and using Equation (\ref{time2}) and its similar result, we obtain Equation (\ref{time1}).

\begin{lemma} \label{lemma_seq3}

If $n=2$, i.e. $N = p_1^{r_1}  p_2^{r_2}$, then there exist polynomials $A_k(\lambda), \deg (A_k) \leq N/p_k - 1, k = 1, 2$ such that $F(\lambda) = \sum_{k=1}^{2} A_k(\lambda) H_k(\lambda)$, where  $H_k(\lambda) = \Phi_{p_k}(\lambda^{N/p_k} ) ,k = 1, 2$.

\end{lemma}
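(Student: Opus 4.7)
The plan is to first obtain any decomposition of $F$ as a polynomial combination of $H_1$ and $H_2$, and then force the degree bounds by a Euclidean-division argument driven by a natural syzygy. I will use Lemmas \ref{lemma_seq0} and \ref{lemma_seq2} as black boxes.

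First, since $F(\xi)=0$ and $\Phi_N$ is the monic minimal polynomial of $\xi$, the division algorithm gives $\Phi_N \mid F$ in $\mathbb{Z}[\lambda]$, so $F = G\,\Phi_N$ for some $G \in \mathbb{Z}[\lambda]$. Multiplying the Bezout identity $\Phi_N = \hat{A}_1 H_1 + \hat{A}_2 H_2$ of Lemma \ref{lemma_seq2} by $G$ yields an initial representation $F = A_1' H_1 + A_2' H_2$ with $A_k' = G\hat{A}_k$; no degree control is available at this stage.

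Next, Lemma \ref{lemma_seq0} lets me factor $H_k = \Phi_N \tilde{H}_k$, where $\tilde{H}_1 = \prod_{i=0}^{r_2-1}\Phi_{p_1^{r_1} p_2^i}$ and $\tilde{H}_2 = \prod_{j=0}^{r_1-1}\Phi_{p_1^j p_2^{r_2}}$. These share no common cyclotomic factor, so in particular the syzygy $\tilde{H}_2 H_1 = \tilde{H}_1 H_2$ holds. A direct degree count gives $\deg \tilde{H}_k = N/p_{3-k} - N/(p_1 p_2)$, and since $r_1, r_2 \geq 1$ implies $p_1 p_2 \mid N$, we obtain the key estimate $\deg \tilde{H}_k \leq N/p_{3-k} - 1$. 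Now Euclidean division against the monic divisor $\tilde{H}_2$ writes $A_1' = q\,\tilde{H}_2 + A_1$ with $\deg A_1 < \deg \tilde{H}_2$; substituting and invoking the syzygy, $F = A_1 H_1 + (A_2' + q\tilde{H}_1)H_2$, so setting $A_2 := A_2' + q\tilde{H}_1$ yields the desired decomposition. By construction $\deg A_1 < \deg \tilde{H}_2 \leq N/p_1 - 1$; for $A_2$, rearranging to $A_2 H_2 = F - A_1 H_1$ together with $\deg(F - A_1 H_1) \leq N - 1$ and $\deg H_2 = N - N/p_2$ gives $\deg A_2 \leq N/p_2 - 1$.

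The main obstacle is recognizing that a single Euclidean reduction against $\tilde{H}_2$ (combined with the syzygy) is enough to clamp \emph{both} degree bounds simultaneously. This balance hinges on the identity $\deg \tilde{H}_k = N/p_{3-k} - N/(p_1 p_2)$, which critically uses $n = 2$ together with $r_1, r_2 \geq 1$: for $n \geq 3$ one would have multiple independent syzygies to balance, whereas for $n = 1$ the statement degenerates to Lemma \ref{lemma_seq1}.
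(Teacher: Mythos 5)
Your proof is correct, and while it shares the paper's first half verbatim (deduce $\Phi_N \mid F$ from minimality of the cyclotomic polynomial, then multiply the Bezout identity of Lemma \ref{lemma_seq2} through to get an uncontrolled representation $F = A_1'H_1 + A_2'H_2$), the degree-reduction step is carried out by a genuinely different mechanism. The paper reduces each monomial $\lambda^d H_k$ modulo $\lambda^N-1$, using $\lambda^{N/p_k}H_k \equiv H_k \pmod{\lambda^N-1}$ to replace $d$ by $d \bmod (N/p_k)$; this introduces a correction term $W(\lambda)(\lambda^N-1)$, which is then killed by comparing degrees ($\deg F \leq N-1$ and $\deg(A_kH_k)\leq N-1$ force $W=0$). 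You instead exploit the exact syzygy $\tilde{H}_2H_1 = \tilde{H}_1H_2$ coming from the factorization $H_k = \Phi_N\tilde{H}_k$ of Lemma \ref{lemma_seq0}, and perform a single Euclidean division of $A_1'$ by the monic polynomial $\tilde{H}_2$, transferring the quotient exactly onto $A_2$; the bound on $\deg A_2$ then falls out of $\deg(F - A_1H_1)\leq N-1$. Your route avoids any correction term (the identity stays exact throughout), makes the module of relations between $H_1$ and $H_2$ explicit, and even yields the marginally sharper bound $\deg A_1 \leq N/p_1 - N/(p_1p_2) - 1$; integrality of $A_1, A_2$ is automatic because $\tilde{H}_2$ is monic. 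The paper's modular-reduction argument is slightly more pedestrian but, as you note yourself, extends with no change to more than two prime divisors, whereas your single-syzygy trick is specific to $n=2$. One cosmetic point: the syzygy $\tilde{H}_2H_1=\tilde{H}_1H_2$ holds trivially by commutativity once $H_k=\Phi_N\tilde{H}_k$ is established; the observation that $\tilde{H}_1$ and $\tilde{H}_2$ share no cyclotomic factor is not needed for the identity itself (it would only matter if you wanted to argue this generates all syzygies).
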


 \textit{Proof of Lemma \ref{lemma_seq3}}:
 
Clearly, $\Phi_N(\lambda)$ divides $F(\lambda)$ due to the reason  mentioned before. 
Therefore, Lemma \ref{lemma_seq2} implies that there exist polynomials $\widehat{A_k}(\lambda) \in \mathbb{Z}[\lambda], k = 1, 2$ such that $F(\lambda) = \sum_{k=1}^{2} \widehat{A_k}(\lambda) H_k(\lambda)$ holds.

It is easy to see that $ \lambda^d H_k(\lambda)$ can be written as $ \lambda^d H_k(\lambda) = (\lambda^N-1) Q(\lambda ) +\lambda^{d_0} H_k(\lambda)  $ for some $Q(\lambda ) \in \mathbb{Z}[\lambda], 0 \leq d_0 \leq N/p_k - 1$. Thus, there exist polynomials $A_k(\lambda), \deg (A_k) \leq N/p_k - 1 , k = 1, 2$, and $W(\lambda )$ such that 
$$F(\lambda) = \sum_{k=1}^{2} A_k(\lambda) H_k(\lambda) +  W(\lambda )  (\lambda^{N}-1 ).$$  Since $\deg (F) \leq N/p_k - 1 $, we obtain $W(\lambda ) = 0$. 
 
\begin{lemma} \label{lemma_seq4}

 The coefficients of $A_k(\lambda), k=1,2$ in Lemma \ref{lemma_seq3} can be made nonnegative. 

\end{lemma}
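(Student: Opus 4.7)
I would argue by strong induction on $M = F(1)$, the sum of coefficients of $F$, which is also the cardinality (with multiplicity) of the multiset of $N$th roots of unity being summed. The base case $M = 0$ is immediate: $F \equiv 0$, and $A_1 = A_2 = 0$ works. The entire force of the argument is then concentrated on the inductive step, which rests on the following pivotal assertion.

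\textbf{Key Claim.} If $F \not\equiv 0$ has nonnegative integer coefficients and $F(\xi) = 0$, then there exist $k \in \{1,2\}$ and $j \in \{0, 1, \ldots, N/p_k - 1\}$ such that $f_{j + i \, N/p_k} \geq 1$ for every $i = 0, 1, \ldots, p_k - 1$; in words, some coset of the order-$p_k$ subgroup of $\mathbb{Z}/N\mathbb{Z}$ is uniformly positive under the assignment $j' \mapsto f_{j'}$.

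Granting the Key Claim, set
\begin{equation*}
F'(\lambda) := F(\lambda) - \lambda^j H_k(\lambda).
\end{equation*}
Since $\lambda^j H_k(\lambda) = \lambda^j + \lambda^{j + N/p_k} + \cdots + \lambda^{j + (p_k - 1) N/p_k}$ is supported exactly on the distinguished $p_k$-coset, all coefficients of $F'$ remain nonnegative. Moreover $F'(\xi) = F(\xi) - \xi^j \Phi_{p_k}(\xi^{N/p_k}) = 0$, because $\xi^{N/p_k}$ is a primitive $p_k$-th root of unity, and $F'(1) = M - p_k < M$. Since $\deg F' < N$ and $\deg(\lambda^j H_k) \leq N/p_k - 1 + (p_k - 1) N/p_k = N - 1$, no reduction modulo $\lambda^N - 1$ is needed. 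The inductive hypothesis then yields $F' = A_1' H_1 + A_2' H_2$ with nonnegative $A_1', A_2'$ of appropriate degree, and setting $A_k := A_k' + \lambda^j$ and the other component equal to $A_\ell'$ produces the required nonnegative decomposition of $F$.

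\textbf{Main obstacle.} The hard part is the Key Claim, which is essentially the Lam--Leung structure theorem on vanishing sums of roots of unity when the order has exactly two prime divisors. I would prove it by contradiction, assuming that every $p_1$-coset and every $p_2$-coset of $\mathbb{Z}/N\mathbb{Z}$ contains some index $j'$ with $f_{j'} = 0$. Transferring the problem through the Chinese remainder isomorphism $\mathbb{Z}/N\mathbb{Z} \cong \mathbb{Z}/p_1^{r_1}\mathbb{Z} \times \mathbb{Z}/p_2^{r_2}\mathbb{Z}$ recasts $F$ as a nonnegative two-dimensional array, and the order-$p_k$ subgroup cosets become very structured slices of this array. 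Combining the sparsity assumption with the decomposition $F = A_1 H_1 + A_2 H_2$ provided by Lemma \ref{lemma_seq3} and the congruences $H_k(\lambda) \equiv p_k \pmod{\lambda^{N/p_k} - 1}$ (which constrain $F$ modulo either $\lambda^{N/p_1} - 1$ or $\lambda^{N/p_2} - 1$), one can argue by secondary induction on $N$---reducing to the prime-power case handled by Lemma \ref{lemma_seq1}---to force $F \equiv 0$, contradicting $F \not\equiv 0$. The care in this step lies in correctly tracking nonnegativity through the CRT transfer, and in using the minimality of a counterexample to eliminate any residual freedom in the decomposition from Lemma \ref{lemma_seq3}.
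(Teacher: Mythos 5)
Your overall inductive frame (peel off one regular $p_k$-gon at a time, inducting on the mass $M=F(1)$) is a legitimate strategy, and the bookkeeping in the inductive step — nonnegativity of $F'$, the identity $F'(\xi)=0$ because $\xi^{N/p_k}$ is a primitive $p_k$-th root of unity, and the degree accounting — is correct. But the entire mathematical content of the lemma has been pushed into your Key Claim, and the Key Claim is essentially equivalent to the statement being proved: saying that every nonzero nonnegative $F$ with $F(\xi)=0$ contains a uniformly positive $p_k$-coset is, modulo your own induction, exactly the assertion that $F$ decomposes as a nonnegative combination of translates of $H_1$ and $H_2$. This is the de Bruijn/Lam--Leung structure result for $n\le 2$, which is true but is precisely what needs proving; it is false for three or more prime divisors, so it cannot be a formality. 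Your sketch of its proof — ``transfer through CRT, use the congruences $H_k\equiv p_k \pmod{\lambda^{N/p_k}-1}$, and argue by secondary induction on $N$ reducing to the prime-power case'' — does not identify what the induction on $N$ actually reduces, nor how the prime-power case (where nonnegativity of $A$ falls out trivially from the degree bound, as in Lemma \ref{lemma_seq1}) would be leveraged when $H_1$ and $H_2$ overlap on every residue. As written this is a plan, not a proof, and I do not see how to complete it along the indicated lines without importing the very argument it is meant to replace.

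The paper's proof takes a different and self-contained route that avoids the structure theorem as an input. Starting from the \emph{integer} (possibly negative) decomposition $F=A_1H_1+A_2H_2$ already guaranteed by Lemma \ref{lemma_seq3}, it performs explicit exchange moves: the union identity over the ``double coset'' $\{g_k+(h_k+\tau)N/p_2\}$ shows that a fixed block of mass can be realized either as $p_2$ translates of $H_1$ or as $p_1$ translates of $H_2$, so one may decrement $p_2$ of the $a_{1\cdot}$'s while incrementing $p_1$ of the $a_{2\cdot}$'s (and conversely) without leaving the set $\mathfrak{D}_F$ of valid decompositions. A first pass makes all $a_{2\cdot}\ge 0$; then a maximin extremal argument shows that if some $a_{1j}$ is still negative, the nonnegativity of the corresponding coefficients of $F$ forces strictly positive $a_{2g_k}$'s on the matching positions, so the reverse exchange strictly increases $\min_j a_{1j}$ — contradicting extremality. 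If you want to keep your peeling strategy, you would need to supply an independent proof of the Key Claim (e.g., the tensor-decomposition argument over $\mathbb{Q}(\xi_1)\otimes\mathbb{Q}(\xi_2)$, or the exchange argument itself); otherwise the proposal has a genuine gap at its central step.
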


 \textit{Proof of Lemma \ref{lemma_seq4}}:

Let 
\begin{align*}
\mathfrak{D}_F &= \left \{[ a_{10}, \cdots a_{1(N/p_1 - 1)} ,   a_{2 0} , \cdots , a_{2 (N/p_2 - 1)}]   \mid 
F(\lambda) =  \sum_{j_1=0}^{\frac{N}{p_1}-1}  a_{1j_1}  H_1(\lambda) + \sum_{j_2=0}^{\frac{N}{p_2}-1}  a_{2j_2}  H_2(\lambda).
\right \}  \\
\mathfrak{D}_F^{(2)} &= \left \{ [ a_{10}, \cdots a_{1(N/p_1 - 1)} ,   a_{2 0} , \cdots , a_{2 (N/p_2 - 1)}]   \mid 
F(\lambda) =  \sum_{j_1=0}^{\frac{N}{p_1}-1}  a_{1j_1}  H_1(\lambda) + \sum_{j_2=0}^{\frac{N}{p_2}-1}  a_{2j_2}  H_2(\lambda), \, a_{2j_2} \geq 0, j_2=0, \cdots, \frac{N}{p_2}-1.
\right \}
\end{align*}
For a fixed integer $0 \leq j \leq N/p_1-1$, consider the set $\{j+k N/p_1, k = 0, \cdots, p_1-1\} \, (\textrm{mod }N)$. 
For each $k = 0, \cdots, p_1-1 $,  we apply Euclidean division to $j+k N/p_1$ and $N/p_2$, and obtain
 integers $ 0 \leq g_k \leq N/p_2-1, 0 \leq h_k \leq p_2-1$ such that $j+k N/p_1 = g_k + h_k N/p_2$. Because of the identity
$$\bigcup_{\tau = 0}^{p_2-1}  \{g_k + (h_k + \tau) N/p_2, k = 0, \cdots, p_1-1\}
= \bigcup_{k = 0}^{p_1-1} \{g_k + (h_k + \tau) N/p_2, \tau = 0, \cdots, p_2-1\}  \quad (\textrm{mod }N),$$
within the set $\mathfrak{D}_F$ 
we can always  decrease $  a_{1 (j + \tau N/p_2 \, (\textrm{mod }N/p_1))}, \tau = 0,\cdots, p_2-1$ by one, while increasing $a_{2g_k }, k=0,\cdots, p_1-1$ by one. 
We conclude that the subset  $\mathfrak{D}_F^{(2)}$ of $\mathfrak{D}_F$ is not empty. 

To finish the proof, it suffices to show that within $\mathfrak{D}_F^{(2)}$, there exist an element with $a_{1j_1} \geq 0$ for all  $j_1=0,\cdots,  N/p_1 - 1 $. 
If this is not true, then there exists $\mu < 0$ such that 
\begin{align} \label{time30}
\mu = \max_{ [ a_{10}, \cdots a_{1(N/p_1 - 1)} ,   a_{2 0} , \cdots , a_{2 (N/p_2 - 1)}] \in \mathfrak{D}_F^{(2)} } 
	\, \bigg\{ \min \left \{  a_{10}, \cdots , a_{1(N/p_1 - 1)}  \right \}  \bigg\}< 0.
\end{align}
Suppose that  $a_{1j} = \mu$. For each $k=0,\cdots, p_1-1$, consider the nonnegative coefficient of the item $\lambda^{j + kN/p_1}$ in $F(\lambda)$:  $A_1(\lambda)H_1(\lambda)$ contributes  a negative value $  a_{1 j}$ to it, and thus $A_2(\lambda)H_2(\lambda)$ contributes a positive value. 
In other words, there exist integers $ 0 \leq g_k \leq N/p_2-1, 0 \leq h_k \leq p_2-1$ such that $j+k N/p_1 = g_k + h_k N/p_2$ and that $a_{2 g_k} > 0$.
It is clear that $g_k, k=0 , \cdots, p_1-1$ are distinct values. By similar reasoning as before, we can  increase $  a_{1 (j + \tau N/p_2 \, (\textrm{mod }N/p_1))}, \tau = 0,\cdots, p_2-1$ by one, while decreasing $a_{2g_k}, k=0,\cdots, p_1-1$ by one, in order to get another  element in $\mathfrak{D}_F^{(2)}$. 
Thus, we can increase 
$\max_{ [ a_{10}, \cdots a_{1(N/p_1 - 1)} ,   a_{2 0} , \cdots , a_{2 (N/p_2 - 1)}] \in \mathfrak{D}_F^{(2)} } 
	\, \bigg\{ \min \left \{  a_{10}, \cdots , a_{1(N/p_1 - 1)}  \right \}  \bigg\}$,  contradicting the definition of $\mu$ in (\ref{time30}). 

\vspace{0.3cm}

 \textit{Proof of Theorem \ref{complexConjecture} }:

Combining Lemmas \ref{lemma_seq1} to \ref{lemma_seq4}, we conclude that $F(\lambda)$ can be written as:
\begin{align*} 
F(\lambda) &= \sum_{k=1}^{2} A_k(\lambda) H_k(\lambda) , \textrm{ where } A_k(\lambda) = \sum_{j=0}^{\frac{N}{p_k}-1}  a_{kj} \lambda^j \in \mathbb{Z}[\lambda], \, a_{kj} \geq 0, \\
H_k(\lambda) &= \Phi_{p_k}(\lambda^{N/p_k} ) = 1 + \lambda^{\frac{N}{p_k}} +  \lambda^{2\frac{N}{p_k}} + \cdots + \lambda^{ (p_k-1) \frac{N}{p_k}}, 
\end{align*}
which is equivalent to Equation (\ref{time10}). Equation  (\ref{Thm3_2}) then immediately follows from Equation (\ref{time10}). 

\section*{Acknowledgment}

The authors thank Dr. Yaron Rachlin of the MIT Lincoln Laboratory for valuable comments and discussions.

\ifCLASSOPTIONcaptionsoff
  \newpage
\fi

\bibliographystyle{IEEEtran}
\bibliographystyle{unsrt}
\bibliography{hex,new_reference,CDS}

\begin{thebibliography}{10}
\providecommand{\url}[1]{#1}
\csname url@samestyle\endcsname
\providecommand{\newblock}{\relax}
\providecommand{\bibinfo}[2]{#2}
\providecommand{\BIBentrySTDinterwordspacing}{\spaceskip=0pt\relax}
\providecommand{\BIBentryALTinterwordstretchfactor}{4}
\providecommand{\BIBentryALTinterwordspacing}{\spaceskip=\fontdimen2\font plus
\BIBentryALTinterwordstretchfactor\fontdimen3\font minus
  \fontdimen4\font\relax}
\providecommand{\BIBforeignlanguage}[2]{{%
\expandafter\ifx\csname l@#1\endcsname\relax
\typeout{** WARNING: IEEEtran.bst: No hyphenation pattern has been}%
\typeout{** loaded for the language `#1'. Using the pattern for}%
\typeout{** the default language instead.}%
\else
\language=\csname l@#1\endcsname
\fi
#2}}
\providecommand{\BIBdecl}{\relax}
\BIBdecl

\bibitem{cook1984gamma}
W.~Cook, M.~Finger, T.~Prince, and E.~Stone, ``Gamma-ray imaging with a
  rotating hexagonal uniformly redundant array,'' \emph{Nuclear Science, IEEE
  Transactions on}, vol.~31, no.~1, pp. 771--775, 1984.

\bibitem{CAI_astronomy1987}
E.~Caroli, J.~Stephen, G.~Di~Cocco, L.~Natalucci, and A.~Spizzichino, ``Coded
  aperture imaging in x-and gamma-ray astronomy,'' \emph{Space Science
  Reviews}, vol.~45, no. 3-4, pp. 349--403, 1987.

\bibitem{Jap1978}
N.~Ohyama, T.~Honda, and J.~Tsujiuchi, ``An advanced coded imaging without side
  lobes,'' \emph{Optics Commun.}, vol.~27, no.~3, pp. 339--344, 1978.

\bibitem{Jap1981tomogram}
------, ``Tomogram reconstruction using advanced coded aperture imaging,''
  \emph{Optics Commun.}, vol.~36, no.~6, pp. 434--438, 1981.

\bibitem{chen2003tomographic}
Y.-W. Chen and K.~Kishimoto, ``Tomographic resolution of uniformly redundant
  arrays coded aperture,'' \emph{Review of scientific instruments}, vol.~74,
  no.~3, pp. 2232--2235, 2003.

\bibitem{simpson1980coded}
R.~Simpson and H.~H. Barrett, \emph{Coded-aperture imaging}.\hskip 1em plus
  0.5em minus 0.4em\relax Springer, 1980.

\bibitem{fenimoreCAI_URA1978}
E.~E. Fenimore and T.~Cannon, ``Coded aperture imaging with uniformly redundant
  arrays,'' \emph{Applied Optics}, vol.~17, no.~3, pp. 337--347, 1978.

\bibitem{ohyama1983advanced}
N.~Ohyama, T.~Honda, J.~Tsujiuchi, T.~Matumoto, K.~Ishimatsu \emph{et~al.},
  ``Advanced coded-aperture imaging system for nuclear medicine,''
  \emph{Applied optics}, vol.~22, no.~22, pp. 3555--3561, 1983.

\bibitem{busboom1997coded}
A.~Busboom, H.~D. Schotten, and H.~Elders-Boll, ``Coded aperture imaging with
  multiple measurements,'' \emph{JOSA A}, vol.~14, no.~5, pp. 1058--1065, 1997.

\bibitem{golay1971}
M.~J. Golay, ``Point arrays having compact, nonredundant autocorrelations,''
  \emph{JOSA}, vol.~61, no.~2, pp. 272--273, 1971.

\bibitem{key2}
D.~Calabro and J.~K. Wolf, ``On the synthesis of two-dimensional arrays with
  desirable correlation properties,'' \emph{Information and Control}, vol.~11,
  no.~5, pp. 537--560, 1967.

\bibitem{key1}
T.~Cannon and E.~Fenimore, ``A class of near-perfect coded apertures,''
  \emph{IEEE Trans. Nucl. Sci.}, vol.~25, no.~1, pp. 184--188, 1978.

\bibitem{fenimore1978_performance}
E.~Fenimore, ``Coded aperture imaging: predicted performance of uniformly
  redundant arrays,'' \emph{Applied Optics}, vol.~17, no.~22, pp. 3562--3570,
  1978.

\bibitem{baumert1971cyclic}
L.~D. Baumert, \emph{Cyclic difference sets}.\hskip 1em plus 0.5em minus
  0.4em\relax Springer-Verlag Berlin, 1971, vol. 182.

\bibitem{HURA1985hexagonal}
M.~Finger and T.~Prince, ``Hexagonal uniformly redundant arrays for
  coded-aperture imaging,'' in \emph{International Cosmic Ray Conference},
  vol.~3, 1985, pp. 295--298.

\bibitem{MURA1989_linear_square_hex}
S.~R. Gottesman and E.~Fenimore, ``New family of binary arrays for coded
  aperture imaging,'' \emph{Applied Optics}, vol.~28, no.~20, pp. 4344--4352,
  1989.

\bibitem{byard1992optimised}
K.~Byard, ``An optimised coded aperture imaging system,'' \emph{Nuclear
  Instruments and Methods in Physics Research Section A: Accelerators,
  Spectrometers, Detectors and Associated Equipment}, vol. 313, no.~1, pp.
  283--289, 1992.

\bibitem{gourlay1983geometric}
A.~Gourlay and J.~B. Stephen, ``Geometric coded aperture masks,'' \emph{Applied
  optics}, vol.~22, no.~24, pp. 4042--4047, 1983.

\bibitem{gottesman1986pnp}
S.~Gottesman and E.~Schneid, ``Pnp-a new class of coded aperture arrays,''
  \emph{Nuclear Science, IEEE Transactions on}, vol.~33, no.~1, pp. 745--749,
  1986.

\bibitem{circlePacking1940}
L.~Fejes, ``{\"U}ber die dichteste kugellagerung,'' \emph{Mathematische
  Zeitschrift}, vol.~48, no.~1, pp. 676--684, 1942.

\bibitem{golayComplementarOrigin1951}
M.~J. Golay, ``Static multislit spectrometry and its application to the
  panoramic display of infrared spectra,'' \emph{JOSA}, vol.~41, no.~7, pp.
  468--472, 1951.

\bibitem{golayComplementary1961}
M.~Golay, ``Complementary series,'' \emph{IRE Trans. Inf. Theory}, vol.~7,
  no.~2, pp. 82--87, 1961.

\bibitem{davis1999peak}
J.~A. Davis and J.~Jedwab, ``Peak-to-mean power control in ofdm, golay
  complementary sequences, and reed-muller codes,'' \emph{IEEE Trans. Inf.
  Theory}, vol.~45, no.~7, pp. 2397--2417, 1999.

\bibitem{Golay_Spin_03}
M.~Parker, K.~Paterson, and C.~Tellambura, ``Golay complementary sequences,''
  \emph{Encyclopedia of Telecommunications}, 2003.

\bibitem{Golay_Channel_06}
J.~H. Choi, H.~K. Chung, H.~Lee, J.~Cha, and H.~Lee, ``Code-division
  multiplexing based {MIMO} channel sounder with loosely synchronous codes and
  {Kasami} codes,'' pp. 1--5, Sep 2006.

\bibitem{Golay_Channel_01}
S.~Stanczak, H.~Boche, and M.~Haardt, ``Are {LAS}-codes a miracle ?'' vol.~1,
  pp. 589--593 vol.1, 2001.

\bibitem{Golay_OFDM_97}
J.~Davis and J.~Jedwab, ``Peak-to-mean power control and error correction for
  {OFDM} transmission using {Golay} sequences and {Reed-Muller} codes,''
  \emph{Electron. Lett.}, vol.~33, no.~4, pp. 267--268, Feb 1997.

\bibitem{golay1962note}
M.~Golay, ``Note on complementary series,'' \emph{Proc. IRE}, vol.~50, no.
  19621, p.~84, 1962.

\bibitem{turyn1974hadamard}
R.~J. Turyn, ``Hadamard matrices, baumert-hall units, four-symbol sequences,
  pulse compression, and surface wave encodings,'' \emph{Journal of
  Combinatorial Theory, Series A}, vol.~16, no.~3, pp. 313--333, 1974.

\bibitem{craigen2002complex}
R.~Craigen, W.~Holzmann, and H.~Kharaghani, ``Complex golay sequences:
  structure and applications,'' \emph{Discrete Mathematics}, vol. 252, no.~1,
  pp. 73--89, 2002.

\bibitem{paterson2000generalized}
K.~G. Paterson, ``Generalized reed-muller codes and power control in ofdm
  modulation,'' \emph{IEEE Trans. Inf. Theory}, vol.~46, no.~1, pp. 104--120,
  2000.

\bibitem{gavish1994ternary}
A.~Gavish and A.~Lempel, ``On ternary complementary sequences,'' \emph{IEEE
  Trans. Inf. Theory}, vol.~40, no.~2, pp. 522--526, 1994.

\bibitem{craigen2001theory}
R.~Craigen and C.~Koukouvinos, ``A theory of ternary complementary pairs,''
  \emph{Journal of Combinatorial Theory, Series A}, vol.~96, no.~2, pp.
  358--375, 2001.

\bibitem{craigen2006further}
R.~Craigen, S.~Georgiou, W.~Gibson, and C.~Koukouvinos, ``Further explorations
  into ternary complementary pairs,'' \emph{Journal of Combinatorial Theory,
  Series A}, vol. 113, no.~6, pp. 952--965, 2006.

\bibitem{budivsin1990new}
S.~Budi{\v{s}}in, ``New complementary pairs of sequences,'' \emph{Electron.
  Lett.}, vol.~26, no.~13, pp. 881--883, 1990.

\bibitem{luke1985sets}
H.~D. Luke, ``Sets of one and higher dimensional welti codes and complementary
  codes,'' \emph{IEEE Trans. Aerosp. Electron. Syst.}, no.~2, pp. 170--179,
  1985.

\bibitem{dymond1992barker}
M.~Dymond, ``Barker arrays: existence, generalization and alternatives,'' Ph.D.
  dissertation, PhD thesis, University of London, 1992.

\bibitem{fenimore1981fast}
E.~Fenimore and G.~Weston, ``Fast delta hadamard transform,'' \emph{Applied
  optics}, vol.~20, no.~17, pp. 3058--3067, 1981.

\bibitem{1976PN}
F.~J. MacWilliams and N.~J. Sloane, ``Pseudo-random sequences and arrays,''
  \emph{Proceedings of the IEEE}, vol.~64, no.~12, pp. 1715--1729, 1976.

\bibitem{bomer1990periodic}
L.~Bomer and M.~Antweiler, ``Periodic complementary binary sequences,''
  \emph{IEEE Trans. Inf. Theory}, vol.~36, no.~6, pp. 1487--1494, 1990.

\end{thebibliography}

\begin{IEEEbiographynophoto}{Jie Ding}
is a Ph.D. candidate in the School of Engineering and Applied Sciences, Harvard University.
His current research are in cyclic difference sets, time series, information theory.
\end{IEEEbiographynophoto}


\begin{IEEEbiographynophoto}{Mohammad Noshad}
%
is a postdoctoral fellow in the Electrical Engineering Department at Harvard University. He received his PhD in Electrical Engineering from the University of Virginia. He is a recipient of the ``Best Paper Award'' at IEEE Globecom 2012. His research interests include information and coding theory, statistical machine learning, free-space optical communications, and visible light communications.
\end{IEEEbiographynophoto}

\begin{IEEEbiographynophoto}{Vahid Tarokh}
 is a professor of applied mathematics in the School of Engineering and Applied Sciences, Harvard University.
His current research interests are in data analysis, network security, optical surveillance, and radar theory.
\end{IEEEbiographynophoto}




\end{document}